\theoremstyle{plain}
\newtheorem{theorem}{Theorem}
\newtheorem{lemma}{Lemma}
\newtheorem{corollary}{Corollary}
\theoremstyle{remark}                
\newtheorem{property}{Property}
\DeclareMathOperator*{\argmax}{arg\,max}
\newcommand\diag{\textnormal{diag}}
\newcommand\tr{\textnormal{tr}}
\newcommand\E{\textnormal{E}}
\newcommand\N{\textnormal{N}}
\renewcommand\P{\textnormal{P}}
\newcommand\new{\textnormal{new}}
\newcommand\old{\textnormal{old}}
\newcommand\var{\textnormal{var}}
\newcommand\cov{\textnormal{cov}}
\newcommand\obs{\textnormal{obs}}
\newcommand\augm{\textnormal{aug}}
\newcommand\mis{\textnormal{mis}}
\newcommand\e{\textnormal{e}}
\newcommand\opt{\textnormal{opt}}
\newcommand\vech{\textnormal{vech}}
\newcommand\bfa{{\bf{a}}}
\newcommand\bfc{{\bf{c}}}
\newcommand\bfm{{\bf{m}}}
\newcommand\bfr{{\bf{r}}}
\newcommand\bfs{{\bf{s}}}
\newcommand\bfu{{\bf{u}}}
\newcommand\bfw{{\bf{w}}}
\newcommand\bfx{{\bf{x}}}
\newcommand\bfy{{\bf{y}}}
\newcommand\bfz{{\bf{z}}}
\newcommand\bfalpha{{\pmb{\alpha}}}
\newcommand\bfeps{{\pmb{\epsilon}}}
\newcommand\bfeta{{\pmb{\eta}}}
\newcommand\bfmu{\pmb{\mu}}
\newcommand\bfphi{\pmb{\phi}}
\newcommand\bftheta{{\pmb{\theta}}}
\newcommand\bfsigma{{\pmb{\sigma}}}
\newcommand\bflambda{{\pmb{\lambda}}}
\newcommand\bfzero{{\bf{0}}}
\newcommand\bone{{\bf{1}}}
\newcommand\rmm{r_-}
\newcommand\rpp{r_+}
\newcommand\vp{\varphi_+}
\newcommand\vm{\varphi_-}
\begin{document}

\begin{frontmatter}
\title{Efficient data augmentation techniques for some classes of state space models}
\runtitle{Efficient data augmentation for state space models}

\begin{aug}
\author[A]{\fnms{Linda S. L.} \snm{Tan}\ead[label=e1]{statsll@nus.edu.sg}}
\address[A]{Linda S. L. Tan is Assistant Professor at Department of Statistics and Data Science, National University of Singapore, \printead{e1}.}
\end{aug}

\begin{abstract}
Data augmentation improves the convergence of iterative algorithms, such as the EM algorithm and Gibbs sampler by introducing carefully designed latent variables. In this article, we first propose a data augmentation scheme for the first-order autoregression plus noise model, where optimal values of working parameters introduced for recentering and rescaling of the latent states, can be derived analytically by minimizing the fraction of missing information in the EM algorithm. The proposed data augmentation scheme is then utilized to design efficient Markov chain Monte Carlo (MCMC) algorithms for Bayesian inference of some non-Gaussian and nonlinear state space models, via a mixture of normals approximation coupled with a block-specific reparametrization strategy. Applications on simulated and benchmark real datasets indicate that the proposed MCMC sampler can yield improvements in simulation efficiency compared with centering, noncentering and even the ancillarity-sufficiency interweaving strategy. 
\end{abstract}

\begin{keyword}
Data augmentation, State space model, Stochastic volatility model, EM algorithm, Reparametrization, Markov chain Monte Carlo, Ancillarity-sufficiency interweaving strategy
\end{keyword}

\end{frontmatter}

\section{Introduction}
Data augmentation is a powerful approach for improving the efficiency of iterative algorithms such as the EM algorithm \citep{Dempster1977} and Markov chain Monte Carlo (MCMC) methods like the Gibbs sampler \citep{Geman1984}, via the introduction of carefully designed latent variables. Let $p(Y_\obs|\bftheta)$ be the density of observed data $Y_\obs$ conditional on unknown parameter $\bftheta \in \mathbb{R}^d$. A data augmentation scheme constructs unobserved or missing data, $Y_\mis$, such that 
\[
p(Y_\obs|\bftheta) = \int p(Y_\augm|\bftheta) \; dY_\mis,
\]
where $Y_\augm = (Y_\obs, Y_\mis)$ is the augmented data. 

An important development is the {\em working parameter} approach proposed by \cite{Meng1997, Meng1998} to improve the convergence rate of the EM algorithm for the multivariate $t$, Poisson and mixed effects models. In {\em conditional} augmentation, optimal working parameters are found by minimizing the fraction of missing information governing the EM algorithm convergence rate. \cite{Tan2007} applied this approach to quadratic optimization problems. Efficient MCMC algorithms for posterior sampling can also be constructed using {\em marginal} augmentation \citep{vanDyk2001}, where working parameters are assigned prior distributions and then marginalized out in the sampling scheme. \cite{Goplerud2021} extends marginal augmentation to variational Bayes \citep{Blei2017} to improve posterior approximation of non-nested binomial hierarchical models. Alternatively, PX-EM \citep{Liu1998} accelerates EM by expanding the set of model parameters to adjust the covariance among parameters in the M-step. Recently, \cite{Tak2020} proposed a data transforming augmentation scheme that reduces heteroscedastic models into homoscedastic ones.

In Bayesian inference, data augmentation is also known as reparametrization, and two well-known parametrizations are (hierarchical) centering and noncentering. Convergence rates of MCMC algorithms \citep{Roberts1997} for these schemes have been studied for random effect models \citep{Gelfand1995, Gelfand1996}, Gaussian state space models \citep{Pitt1999, Fruhwirth-Schnatter2004}, hierarchical models \citep{Papaspiliopoulos2003, Papaspiliopoulos2007} and Gaussian process based models \citep{Bass2017}. \cite{Zanella2021} investigate convergence of the Gibbs sampler for Gaussian multilevel models of depth more than two, lending insight on the choice of parametrization, that is also applicable beyond Gaussian models.  

Centering and noncentering play complementary roles as the Gibbs sampler often converges much faster under one parametrization than the other. \cite{Yu2011} developed ASIS (ancillarity-sufficiency interweaving strategy) to exploit this contrasting feature. \cite{Kastner2014} evaluate the performance of ASIS on stochastic volatility (SV) models \citep{Taylor1982}, while \cite{Kastner2017} proposed shallow and deep interweaving strategies for multivariate factor SV models, which can boost efficiency by several orders of magnitude. \cite{Simpson2017} apply ASIS to dynamic linear models using data augmentations based on (wrongly) scaled errors and disturbances. ASIS was also used to achieve shrinkage for time-varying parameter models \citep{Bitto2019}, perform model selection for factor SV models incorporating leverage, asymmetry and heavy tails \citep{Li2020}, and combined with elliptical slice sampling for posterior estimation in nonlinear state space models with univariate autoregressive state equation \citep{Kreuzer2020}. 

Another alternative is partial noncentering, which lies on the continuum between centering and noncentering, and is capable of utilizing the information in the data to determine an almost optimal parametrization. Partial noncentering has been shown to yield better convergence in MCMC methods than both centering and noncentering for random effect models \citep{Papaspiliopoulos2003} and spatial generalized linear mixed models \citep{Christensen2006}. It can also increase efficiency and provide more accurate posterior approximations in variational Bayes for generalized linear mixed models \citep{Tan2013, Tan2021}.

\subsection{Proposed data augmentation scheme}
We propose a data augmentation scheme for state space models with univariate observations $y_t$ and latent states $x_t$, where $y_t$ is generated from $p(y_t|x_t)$ independently and $(x_t)$ is a stationary AR(1) (first order autoregressive) process given by
\begin{equation} \label{AR(1)}
\begin{aligned}
x_{t+1} &= \mu + \phi (x_t - \mu)+ \sigma_\eta \eta_t,  && t \geq 1, \\
x_1 &\sim \N(\mu, \sigma_\eta^2/(1-\phi^2)),
\end{aligned}
\end{equation}
$\eta_t  \overset{\text{iid}}{\sim} \N(0,1)$, $\mu \in \mathbb{R}$, persistence $|\phi| < 1$ and $\sigma_\eta > 0$. Many influential models fall inside this framework, such as the SV model  with leverage \citep{Omori2007}, skewness or heavy tails \citep{Abanto2014}, stochastic conditional duration model \citep{Bauwens2004} and stochastic copula autoregressive model \citep{Almeida2012}. Our scheme introduces working parameters, $a$ and $w_t$, to rescale and recenter $x_t$ so that the transformed latent state is 
\begin{equation} \label{PNCP}
\alpha_t = (x_t - w_t \mu)/ \sigma_\eta^{a}, \quad a \in \mathbb{R}, \;  w_t \in \mathbb{R}.
\end{equation}

\subsection{Maximum likelihood parameter estimation}
First, we consider maximum likelihood estimation of the parameters of the AR(1) plus noise model using the EM algorithm. The observation equation is 
\begin{align*}
y_t &= x_t + \sigma_\epsilon \epsilon_t, &&  \epsilon_t \overset{\text{iid}}{\sim} \N(0,1), \; t \geq 1,
\end{align*}
$\sigma_\epsilon > 0$, and $(\epsilon_t)$ and $(\eta_s)$ are independent for all $t$ and $s$, and of $(x_t)$. In this Gaussian context, optimal working parameters in different settings can be derived analytically by minimizing the fraction of missing information. Studies of their large sample properties reveal features distinct from random effect models. To incorporate the optimal schemes for inferring each parameter, an alternating expectation-conditional maximization algorithm \citep{Meng1997} is designed and shown to be more attractive than centering and noncentering. 

The EM algorithm is a natural approach for maximum likelihood estimation of parameters in linear Gaussian state space models \citep{Shumway1982} as the latent states can be treated as missing data and the E-step can be performed using smoothed estimates from the Kalman filter \citep{Kalman1960}. Compared with gradient-ascent (scoring) methods, EM is simple to implement, numerically stable (likelihood increases monotonically) and guaranteed to converge to a local maximum \citep{Wu1983}. It is preferred when the M-step is tractable or parameters are high-dimensional as computational costs tend to be lower. While Newton and quasi-Newton algorithms converge quadratically or superlinearly, the EM algorithm converges linearly or sublinearly. Slow convergence often occurs in later stages, especially in high signal-to-noise ratio settings \citep{Olsson2006}. Besides data augmentation, techniques proposed to accelerate EM include quasi-Newton \citep{Jamshidian1997, Zhou2011}, extrapolation \citep{Saadaou2010i}, parabolic \citep{Berlinet2009} and Anderson acceleration \citep{Henderson2019} and linearly preconditioned nonlinear conjugate gradient \citep{Zhou2021}. Both EM and scoring can also be combined with particle methods in an online or offline manner to obtain maximum likelihood parameter estimates of general state space models \citep{Kantas2015}.

\subsection{Bayesian parameter estimation and smoothing}
As the rate of convergence of EM algorithms and Gibbs samplers are closely linked \citep{Sahu1999}, data augmentation schemes optimized for the EM algorithm can potentially be utilized to design efficient MCMC algorithms. Here we focus on parameter estimation and smoothing in a Bayesian framework but do not address online filtering. We consider two non-Gaussian and nonlinear state space models; the SV model for financial returns and the stochastic conditional duration model for time intervals between transactions, whose observation equations can be linked to the AR(1) plus noise model via mixture of normals approximations \citep{Kim1998}. A block-specific reparametrization (BSR) strategy, which allows the parametrization of latent states to vary across blocks in a MCMC sampler, is proposed to incorporate the optimal schemes derived previously. Experiments on simulated data and real applications indicate that BSR always performs better than the worse of centering and noncentering, and often surpasses both and even ASIS. 

Sequential Monte Carlo or particle filters \citep{Doucet2001} are widely used for online state estimation, and have also become popular for smoothing and parameter estimation. The particle learning and smoothing algorithm \citep{Carvalho2010} incorporates parameter estimation via a fully adapted filter and smoothing via recursive backward sampling. \cite{Yang2018} propose a modification of this algorithm to capture dependence between states and parameters, and a new smoothing algorithm called refiltering which can be parallelized easily.

For nonlinear and non-Gaussian state space models, it is difficult to design efficient proposal densities for sampling from the (conditional) posterior of high-dimensional latent states. Particle MCMC \citep{Andrieu2010} overcome this by using sequential Monte Carlo to build efficient proposals. Particle marginal Metropolis-Hastings, for instance, uses the marginal likelihood estimated from a particle filter run at each iterate to compute acceptance probabilities. For good mixing, the number of particles should scale linearly with the number of observations. \cite{Fearnhead2016} suggest using data augmentation to reduce Monte Carlo error in the particle filter. In contrast, gradient-based sampling methods such as Hamiltonian Monte Carlo \citep{Neal2011} and Metropolis adjusted Langevin \citep{Girolami2011} construct proposal densities adaptively based on the local geometry of the target, and gradients must be computed or approximated numerically at each iteration. Compared to standard MCMC, particle MCMC and gradient-based methods have higher computational cost, but are more generally applicable as they require little design effort. The efficient tuning of parameters in Hamiltonian Monte Carlo is a challenging task, and the No-U-Turn Sampler \citep{Hoffman2014} in Stan \citep{Stan2019} employs automatic tuning based on the leapfrog method. \cite{Kleppe2019} introduce dynamically rescaled Hamiltonian Monte Carlo where the target distribution is reparametrized to have constant scaling, while \cite{Osmundsen2019} modify the target to be close to Gaussian using transport map. We provide some comparisons of BSR with particle marginal Metropolis-Hastings and Hamiltonian Monte Carlo implemented in the R package {\tt nimble} \citep{Valpine2017} and Stan respectively in one of the real applications.

\subsection{Extensions}
The general linear Gaussian state space model (dynamic linear model) can be written as 
\begin{align*}
\bfy_t &= Z_t \bfx_t + \bfeps_t,  && \bfeps_t \sim \N(\bfzero, H_t), \; t \geq 1, \\
\bfx_{t+1} - \bflambda &= \Phi_t (\bfx_t - \bflambda)+ \bfeta_t,  && \bfeta_t \sim \N(\bfzero, Q_t),
\end{align*}
with $\bfx_1 \sim \N(\bfmu, P)$. Here $\bfy_t \in \mathbb{R}^p$ is the observation vector, $\bfx_t \in \mathbb{R}^m$ is the state vector and the error sequences $(\bfeps_t)$, $(\bfeta_t)$ and $\bfx_1$ are assumed to be independent. The proposed scheme in \eqref{PNCP} is not applicable generally to the dynamic linear model as the idea of partial rescaling and recentering using fixed means and variances is usually relevant when $(\bfx_t)$ is a stationary process. However, it is still useful in some special cases where $\bflambda=\bfmu$, $\Phi_t$ and $Q_t$ do not vary with time, and $Z_t$ is known. 

If $m=1$, $\Phi_t=\phi \in (-1,1)$, $\bfmu = \mu$, $P = \sigma_\eta^2/(1-\phi^2)$ and $Q_t= \sigma_\eta^2$, then $(\bfx_t)$ reduces to the stationary AR(1) process and the proposed scheme in \eqref{PNCP} can be applied. Optimal values of working parameters for multivariate Gaussian observations can be derived similarly as for univariate $y_t$ and details are given in the Supplement \citep{Tan2022}. 

Suppose $m > 1$. Let scalar functions on vectors be defined elementwise, $\bfsigma_\eta = (\sigma_{\eta1}, \dots, \sigma_{\eta m})^T$ and $\bfphi = (\phi_1, \dots, \phi_m)^T$. If $\Phi_t = \diag(\bfphi)$ where each $|\phi_i| <1$, $P = \diag(\bfsigma_\eta^2/(1-\bfphi^2))$ and $Q_t = \diag(\bfsigma_\eta^2)$, then each element in $(\bfx_t)$ is an AR(1) process \citep[see, e.g. multivariate factor SV models in ][]{Li2020}. In this case, a data augmentation scheme similar to \eqref{PNCP}, where
\[
\bfalpha_t = (\bfx_t - \bfw_t \odot \bfmu)/ \bfsigma_{\eta}^\bfa, \quad \bfa \in \mathbb{R}^m, \;  \bfw_{t} \in \mathbb{R}^m,
\]
may be used. More generally, if $\Phi_t$, $Q_t$ and $P$ are defined such that $(\bfx_t)$ is a stationary process with mean $\bfmu$ and covariance matrix $\Sigma = LDL^T$, where $L$ is a unit lower triangular matrix and $D=\diag(\bfsigma_\eta^2)$, a scheme such as
\[
\bfalpha_t = L^{-1}(\bfx_t - \bfw_t \odot  \bfmu)/\bfsigma_\eta^\bfa
\]
may be useful. Optimal values of the working parameters can be found similarly by optimizing the convergence rate of the EM algorithm, albeit with higher complexity.

\section{EM algorithm for partially noncentered AR(1) plus noise model} \label{sec_EM}
The AR(1) plus noise model can be expressed in terms of $(\alpha_t)$ as 
\begin{align*}
y_t &= \sigma_\eta^a \alpha_t + w_t\mu + \sigma_\epsilon \epsilon_t,  && t\geq 1,\\
\sigma_\eta^a \alpha_{t+1} &= \bar{w}_{t+1} \mu + \phi ( \sigma_\eta^a \alpha_t  - \bar{w}_t \mu )+ \sigma_\eta \eta_t,  \nonumber \\
\sigma_\eta^a \alpha_1 &\sim \N\left(  \bar{w}_1 \mu ,\, {\sigma_\eta^2}/{(1-\phi^2)} \right), \nonumber
\end{align*}
where $\bar{w}_t=1-w_t$. Now $\mu$ and $\sigma_\eta$ appear in both the observation and state equations. Suppose we have observations $\bfy=(y_1, \ldots, y_n)^T$. Let $\bfx = (x_1, \ldots, x_n)^T$ be the latent states, $\bfalpha = (\alpha_1, \dots, \alpha_n)^T$, $\bfw =(w_1, \dots, w_n)^T$ and $\bar{\bfw} = \bone - \bfw$, where $\bone$ and $\bf0$ denote the vectors of all ones and zeros respectively (dimension inferred from context). When $a=0$ and $\bfw = \bf0$, we recover the fully centered parametrization (CP) in \eqref{AR(1)}, so called as $x_t$ is centered around the a priori expected value $\mu$, and the parameters $\mu$ and $\sigma_\eta^2$ appear only in the state equation. The fully noncentered parametrization (NCP) is obtained when $a=1$ and $\bfw = \bone$. Let $\bftheta = (\mu, \sigma_\eta^2, \phi, \sigma_\epsilon^2)^T$ be the model parameters, and the signal-to-noise ratio be defined as $\gamma = \sigma_\eta^2/\sigma_\epsilon^2$. In matrix notation, the AR(1) plus noise model can be expressed as
\begin{equation} \label{mod_G} 
\begin{gathered}
\bfy | \bfalpha, \bftheta  \sim \N(\sigma_\eta^a \bfalpha + \mu \bfw, \sigma_\epsilon^2 I), \\
\sigma_\eta^{a} \bfalpha | \bftheta \sim \N( \mu \bar{\bfw}, \sigma_\eta^{2}\Lambda^{-1} ),
\end{gathered}
\end{equation}
where $\Lambda$ is a symmetric {\it tridiagonal} matrix with off-diagonal elements equal to $-\phi$ and diagonal given by $(1, 1+\phi^2, \dots, 1+\phi^2, 1)^T$. For $|\phi| < 1$, $\Lambda$ is invertible. Regardless of the parametrization, the marginal distribution of $\bfy$ is $\N(\mu\bone, S)$ where $S= \sigma_\epsilon^2 I + \sigma_\eta^2 \Lambda^{-1}$.

Suppose we wish to find $\bftheta^* = (\mu^*, \sigma_\eta^{2*}, \phi^*, {\sigma_\epsilon^2}^*)^T$ that maximizes the observed data log-likelihood $\log p(\bfy|\bftheta) = \log \int p(\bfalpha,\bfy|\bftheta) \,d\bfalpha$ using an EM algorithm. We consider $\bfalpha$ as missing data and $(\bfy,\bfalpha)$ as augmented data. Given an initial estimate $\bftheta^{(0)}$, the algorithm performs an E-step and M-step at each iteration $i$, where the E-step computes $Q(\bftheta|\bftheta^{(i)}) = \E_{\bfalpha|\bfy, \bftheta^{(i)}} [\log p(\bfalpha, \bfy | \bftheta)]$, and the M-step maximizes $Q(\bftheta|\bftheta^{(i)}) $ with respect to $\bftheta$. The conditional distribution $p(\bfalpha | \bfy,\bftheta)$ is $\N(\bfm_{a\bfw}, V_a)$, where 
\begin{equation} \label{eq_conddistn}
\begin{gathered}
V_a =  \sigma_\eta^{-2a}(\sigma_\epsilon^{-2} I + \sigma_\eta^{-2}\Lambda)^{-1}, \\
\bfm_{a\bfw} = \sigma_\eta^{a} V_a [(\bfy -  \mu \bfw)/\sigma_\epsilon^2 + \mu \sigma_\eta^{-2}\Lambda \bar{\bfw}].
\end{gathered}
\end{equation}
The subscripts of $V_a$ and $\bfm_{a\bfw}$ represent their dependence on the values of $a$ and $\bfw$ in the scheme. For instance, when $a=0$ and $\bfw = \bone$, 
\begin{align*}
V_0 = (\sigma_\epsilon^{-2} I + \sigma_\eta^{-2}\Lambda)^{-1}, \;\; 
\bfm_{0\bone} =  V_0 (\bfy - \mu\bone)/\sigma_\epsilon^2.
\end{align*}

It can be derived that 
\begin{align*}
&Q(\bftheta | \bftheta^{(i)}) \negmedspace = \negmedspace   - n\log(2\pi) \negmedspace - \negmedspace \tfrac{1}{2} \big\{ [\sigma_\eta^{2a} \tr( V_{a}^{(i)}) + z^{(i)T}z^{(i)}]/ \sigma_\epsilon^2  \\
& \; + n(1-a) \log\sigma_\eta^2 + \sigma_\eta^{2(a-1)} \tr(\Lambda V_a^{(i)}) - \log(1-\phi^2)\\
& \; + n\log \sigma_\epsilon^2 + \sigma_\eta^{-2} (\sigma_\eta^a \bfm_{a\bfw}^{(i)} -\mu \bar{\bfw})^T \Lambda (\sigma_\eta^a \bfm_{a\bfw}^{(i)} -\mu \bar{\bfw}) \big\},
\end{align*}
where $z^{(i)} = \bfy - \mu \bfw- \sigma_\eta^a \bfm_{a\bfw}^{(i)}$, and $\bfm_{a\bfw}^{(i)}$ and $V_a^{(i)}$ are evaluated at $\bftheta^{(i)}$. At each iteration, the EM algorithm updates $\bfm_{a\bfw}$ and $V_{a}$ given $\bftheta^{(i)}$ and sets $\bftheta^{(i+1)} = \argmax_\bftheta Q(\bftheta | \bftheta^{(i)})$. The expectation-conditional maximization algorithm \citep{Meng1993} reduces the complexity of the M-step by replacing it with a sequence of conditional maximization steps. If we maximize $Q(\bftheta | \bftheta^{(i)})$ with respect to each element $\theta_s$ of $\bftheta$ with the remaining elements held fixed at current values, then the update of $\theta_s$ can be obtained by setting $\nabla_{\theta_s} Q(\bftheta | \bftheta^{(i)}) = 0$. This yields the following closed form updates for $\mu$ and $\sigma_\epsilon^2$:
\begin{equation} \label{update1}
\begin{gathered}
\mu = \frac{\sigma_\epsilon^{-2} (\bfy -  \sigma_\eta^a \bfm_{a\bfw}^{(i)})^T \bfw + \sigma_\eta^{a-2} \bfm_{a\bfw}^{(i)T} \Lambda \bar{\bfw} }{ \sigma_\epsilon^{-2} \bfw^T \bfw + \sigma_\eta^{-2} \bar{\bfw}^T \Lambda \bar{\bfw}}, \\
\sigma_\epsilon^2 = \tfrac{1}{n} [\sigma_\eta^{2a} \tr(V_a^{(i)}) + z^{(i)T} z^{(i)} ].
\end{gathered}
\end{equation}
Setting $\nabla_{\sigma_\eta^2}  Q(\bftheta|\bftheta^{(i)}) = 0$, we obtain closed form updates for the CP and NCP,
\begin{equation*}
{\sigma_\eta^2} = 
\begin{cases}
\frac{(\bfm_{0\bfzero}^{(i)} - \mu \bone)^T \Lambda (\bfm_{0\bfzero}^{(i)} - \mu \bone)+ \tr(\Lambda V_0^{(i)})}{n} & \text{if CP} , \\
\left(\frac{ (\bfy- \mu\bone)^T \bfm_{1\bone}^{(i)} }{\tr(V_1^{(i)})  + \bfm_{1\bone}^{(i)T} \bfm_{1\bone}^{(i)}}\right)^2 & \text{if NCP}.
\end{cases}
\end{equation*}
However, there are no closed form updates for $\sigma_\eta^2$ and $\phi$ for arbitrary values of $\{a, \bfw\}$. For $\phi$, closed form updates do not exist for the CP and NCP as well. We use the {\tt optimize} function from the {\tt Optim} package \citep{Mogensen2018} in {\tt Julia} to maximize $Q(\bftheta | \bftheta^{(i)})$ with respect to these parameters. Brent's method is used to find $\phi \in (-1,1)$ while the unconstrained L-BFGS method is used to find $\nu = \log \sigma_\eta^2$.

We are interested in finding values of $\{a, \bfw\}$ that optimize the rate of convergence of the EM algorithm. Define the information matrices as $I_\obs(\bftheta) =-  \nabla_\bftheta^2 \log p(Y_\obs | \bftheta)$, 
\[
I_{\augm}(\bftheta) = -\E_{Y_\mis | Y_\obs,\bftheta}[ \nabla_\bftheta^2  \log p(Y_\augm| \bftheta)],
\]
and $I_\mis(\bftheta)  = I_{\augm}(\bftheta) - I_\obs(\bftheta)$, where $\nabla_\bftheta^2(\cdot) = \partial^2(\cdot)/\partial \bftheta \partial\bftheta^T$. In a neighborhood of $\bftheta^*$, 
\[
\bftheta^{(i+1)} - \bftheta^* = J(\bftheta^*) (\bftheta^{(i)} - \bftheta^* ), 
\] 
where $J(\bftheta^*)$, the matrix rate of convergence of the EM algorithm, is given by
\[
J(\bftheta^*)= I - I_{\augm}(\bftheta^*)^{-1} I_{\obs}(\bftheta^*).
\]
The actual observed (global) rate of convergence is
\[
r = \lim_{i \rightarrow \infty} \| \bftheta^{(i+1)} - \bftheta^* \|/ \\\bftheta^{(i)} - \bftheta^*\|,
\]
where $\| \cdot\|$ is any norm on $\mathbb{R}^d$. Under some regularity conditions, $r$ is given by the largest eigenvalue of $J(\bftheta^*)$, and a larger value indicates slower convergence. If $I_\obs(\bftheta^*)$ is positive semidefinite, then $r \in [0,1]$. The EM algorithm converges rapidly ($r$ is close to 0) if the observed information is close to the augmented information. Since $I_{\obs}(\bftheta^*)$ depends only on the observed data and is independent of the parametrization, the rate of convergence can be optimized by minimizing $I_{\augm}(\bftheta^*)$ with respect to $\{a, \bfw\}$ in the sense of a positive semidefinite ordering \cite[see Theorem 1 of][]{Meng1997}. Let $I_{\theta_i, \theta_j}$ denote the $(i,j)$ element in $I_\augm(\bftheta^*)$. We first consider cases where only $\mu$ or $\sigma_\eta^2$ is unknown (matrix and actual observed rate of convergence are identical), followed by the case where all parameters are unknown. The proofs of all results in Sections \ref{sec_loc}, \ref{sec_scale} and \ref{sec_all} are given in the Supplement \citep{Tan2022}.

\subsection{Unknown location parameter} \label{sec_loc}
Suppose $\sigma_\eta^2$, $\phi$ and $\sigma_\epsilon^2$ are known and only $\mu$ is unknown. The EM algorithm for this case (Algorithm 1) alternately updates $\bfm_{a\bfw}^{(i)}$ and $\mu$ as in \eqref{eq_conddistn} and \eqref{update1}. 

\begin{theorem} \label{thm_alg1}
Rate of convergence of Algorithm 1 is 
\begin{equation*} 
\rho(\bfw)^T V_0 \rho(\bfw)/\tau(\bfw),
\end{equation*}
where $\tau(\bfw) = \sigma_\epsilon^{-2} \bfw^T \bfw + \sigma_\eta^{-2} \bar{\bfw}^T \Lambda \bar{\bfw}$ and $\rho(\bfw) = V_0^{-1} \bfw - \sigma_\eta^{-2} \Lambda \bone$. This rate is minimized to zero at
\begin{equation*}
\bfw^\opt = \sigma_\eta^{-2} V_0 \Lambda \bone \quad \text{or} \quad \bar{\bfw}^\opt = \sigma_\epsilon^{-2} V_0 \bone.
\end{equation*}
\end{theorem} 

From Theorem \ref{thm_alg1}, the rate of convergence of Algorithm 1 for the CP and NCP are $(\bone^T \Lambda V_0 \Lambda \bone)/(\sigma_\eta^2 \bone^T \Lambda \bone) $ and $(\bone^T V_0 \bone)/(n\sigma_\epsilon^2)$ respectively, which are strictly positive. However, when $\bfw = \bfw^\opt$, Algorithm 1 {\it converges instantly} to 
\[
\mu^* = \frac{\bfy^T S^{-1} \bone}{\bone^T S^{-1} \bone}.
\]
This can be seen by plugging $\bfw^\opt$ into the update in \eqref{update1}. Moreover, it is possible to compute $\bfw^\opt$ in advance for Algorithm 1 as $\bfw^\opt$ does not depend on $\mu$, and the rate of convergence is also independent of $a$, so that $a$ can be set to any convenient value. 

We investigate the range of elements in $\bfw^\opt$ and their dependence on $\sigma_\eta^2$, $\phi$ and $\sigma_\epsilon^2$ by deriving an explicit expression for $\bfw^{\opt}$ using results on the inverse of tridiagonal matrices in \cite{Tan2019}. From the expression of $\bfw^{\opt}$ in Theorem S1 (see Supplement \citep{Tan2022}), $\bfw^\opt$ depends on $\phi$, and on $\sigma_\eta^2$ and $\sigma_\epsilon^2$ only through the signal-to-noise ratio $\gamma$. Corollary \ref{cor1} presents bounds for $\bfw^\opt$ which are tight when $\phi=0$, showing that $0 < w_t^\opt < 1$ if $\phi \in [0,1)$ and $0 < w_t^\opt < 2$ if $\phi \in (-1,0)$. This is unlike normal hierarchical models, where $w_t^\opt$ always lies in $[0,1]$ \citep{Papaspiliopoulos2003}. From Corollary \ref{cor2}, the location centered parametrization ($\bfw = \bf0$) is increasingly preferred as the persistence $\phi$ and signal-to-noise ratio $\gamma$ increase when $0<\phi<1$. However, if $-1 <\phi <0$, $w_t^\opt$ may not be strictly decreasing with either $\phi$ or $\gamma$. Figure \ref{fig_woptloc} shows the values of elements in $\bfw_\opt$ when $n=10$ for different values of $\gamma$ and $\phi$. 

\begin{corollary}\label{cor1}
For $t=1, \dots, n$, 
\begin{equation*}
\begin{gathered}
0 < 1- B_1 \leq w_t^\opt \leq 1 - B_2 < 1
\quad (0\leq \phi < 1), \\
0 < 1 - B_2 \leq w_t^\opt \leq 1 + B_2 - 2B_1 < 2
\quad (-1 < \phi < 0),
\end{gathered}
\end{equation*}
where $B_1 = \gamma/ \{(1-\phi)^2 + \gamma\}$ and $B_2 = \gamma/(1-\phi^2 + \gamma)$. 
\end{corollary}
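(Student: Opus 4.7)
The plan is to reduce everything to the scalar identity $w_t^{\opt} = 1 - \gamma s_t/|\phi|$ (for $\phi \neq 0$) that appeared inside the proof of Theorem \ref{thm_wopt}, together with the bounds on the row sums $s_t$ already collected in Property \ref{prop2}. That is, bounding $w_t^{\opt}$ is equivalent to bounding $s_t$ above and below, after rewriting the two quantities $(c_1 \pm \phi)^{-1}$ and $(c+2)^{-1}$ that appear in Property \ref{prop2} in terms of $B_1$ and $B_2$.

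I would first do the algebraic dictionary between the Property \ref{prop2} quantities and $B_1, B_2$. Using $c_1 = (1+\gamma)/|\phi|$ and $c = c_1 + |\phi|$, I expect to verify that when $0 < \phi < 1$,
\begin{equation*}
(c_1-\phi)^{-1} = \frac{\phi}{1-\phi^2+\gamma} = \frac{|\phi| B_2}{\gamma}, \qquad (c-2)^{-1} = \frac{\phi}{(1-\phi)^2+\gamma} = \frac{|\phi| B_1}{\gamma},
\end{equation*}
and when $-1 < \phi < 0$, $(c_1+\phi)^{-1} = |\phi| B_2/\gamma$ and $(c+2)^{-1} = |\phi| B_1/\gamma$. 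Given these identities, the lower bound $s_t > (c_1-\phi)^{-1}$ from Property \ref{prop2} translates immediately into $w_t^{\opt} < 1-B_2$ for $\phi \in (0,1)$; the upper bound $s_t < (c_1+\phi)^{-1}$ gives $w_t^{\opt} > 1-B_2$ for $\phi \in (-1,0)$; and the lower bound $s_t > 2(c+2)^{-1} - (c_1+\phi)^{-1}$ gives $w_t^{\opt} < 1 - 2B_1 + B_2$ for $\phi \in (-1,0)$.

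The step that is not immediately supplied by Property \ref{prop2} is the upper bound on $s_t$ when $0 < \phi < 1$, which I expect to be the main (minor) obstacle. For this I plan to use the representation of $s_t$ itself. Writing $b=1$ and reading off Property \ref{prop2}, $s_t = (c-2)^{-1}\{1 - (1-\phi)(v_t + v_{n-t+1})\}$. Property \ref{prop3} ensures that $v_t > 0$ for $0 < \phi < 1$, so the bracket is at most $1$ and hence $s_t \leq (c-2)^{-1} = |\phi| B_1/\gamma$, which gives $w_t^{\opt} \geq 1-B_1$. The boundary case $\phi = 0$ is handled directly: Theorem \ref{thm_wopt} gives $w_t^{\opt} = (1+\gamma)^{-1}$, and at $\phi=0$ one has $B_1 = B_2 = \gamma/(1+\gamma)$, so both sets of bounds collapse to equality, which is the tightness claim.

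Finally, for the monotonicity claims about $B_1,B_2$ themselves, I would observe that $B_1 \leq B_2$ iff $(1-\phi)^2 \geq 1 - \phi^2$ iff $\phi(\phi-1) \geq 0$, which holds precisely when $\phi \leq 0$ (inside $(-1,1)$), with equality only at $\phi = 0$; this immediately yields $B_2 \leq B_1$ on $[0,1)$ and $B_1 < B_2$ on $(-1,0)$. Positivity of $B_1, B_2$ is clear from $\gamma > 0$, and $B_1, B_2 < 1$ since the denominators exceed $\gamma$ whenever $\phi \in (-1,1)$. Assembling these pieces completes the corollary.
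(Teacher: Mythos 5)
Your proposal is correct and follows essentially the same route as the paper: it combines $w_t^{\opt} = 1 - \gamma s_t/|\phi|$ with the bounds on $s_t$ in Property \ref{prop2} (after the same algebraic identifications of $(c_1\pm\phi)^{-1}$, $(c\pm 2)^{-1}$ with $B_1,B_2$) and uses $v_t>0$ from Property \ref{prop3} for the $1-B_1$ bound, which is exactly the paper's use of \eqref{eq_wopt}. The only differences are presentational — you spell out the dictionary identities and the $B_1$ versus $B_2$ comparison explicitly, which the paper leaves implicit.
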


\begin{corollary}\label{cor2}
If $0 < \phi < 1$, each element of $\bfw^\opt$ decreases strictly as $\phi$ and the signal-to-noise ratio $\gamma$ increase. As $\phi$ approaches 1, $\bfw^\opt$ approaches $\bf0$.
\end{corollary}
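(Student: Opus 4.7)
The plan is to work with the compact matrix form $\tilde{w}^\opt := \bone - w^\opt = (I + \gamma^{-1}\Lambda)^{-1}\bone$ implied by Theorem 1, and to differentiate the linear system $A\tilde{w}^\opt = \bone$ (with $A := I + \gamma^{-1}\Lambda$) implicitly, rather than the explicit formula of Theorem 2 in which the $v_t$'s depend on $\gamma$ and $\phi$ in a tangled way. For $0 < \phi < 1$, $A$ is tridiagonal with positive diagonal, strictly negative off-diagonals $-\gamma^{-1}\phi$, and is strictly diagonally dominant, so $A$ is an irreducible M-matrix and $A^{-1}$ has elementwise strictly positive entries (equivalently, this is Property \ref{prop3} rescaled, since $A^{-1} = (\gamma/\phi)\,Q^{-1}$). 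Corollary \ref{cor1} supplies the two positivity facts I will need: $w_t^\opt \geq 1-B_1 > 0$ and $\tilde{w}_t^\opt \geq B_2 = \gamma/(1-\phi^2+\gamma) > 0$.

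For the $\gamma$ direction, $dA/d\gamma = -\gamma^{-2}\Lambda$ gives $A\,(d\tilde{w}^\opt/d\gamma) = \gamma^{-2}\Lambda\tilde{w}^\opt$, and the defining system itself yields the handy identity $\Lambda\tilde{w}^\opt = \gamma(\bone - \tilde{w}^\opt) = \gamma w^\opt$. Thus $d\tilde{w}^\opt/d\gamma = \gamma^{-1}A^{-1}w^\opt$, which is componentwise strictly positive, so $dw^\opt/d\gamma$ is strictly negative in every coordinate. For the $\phi$ direction the analogous computation gives $dw^\opt/d\phi = \gamma^{-1}A^{-1}\Lambda'(\phi)\tilde{w}^\opt$, where $\Lambda'(\phi)$ is tridiagonal with diagonal $(0, 2\phi, \ldots, 2\phi, 0)$ and off-diagonal entries $-1$. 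The main obstacle, and the only step that needs real work, is to show that $\Lambda'(\phi)\tilde{w}^\opt$ is componentwise non-positive. The endpoint entries $-\tilde{w}_2^\opt$ and $-\tilde{w}_{n-1}^\opt$ are immediately strictly negative. For an interior index $2 \leq t \leq n-1$, the entry is the discrete-second-difference-like expression $2\phi\tilde{w}_t^\opt - \tilde{w}_{t-1}^\opt - \tilde{w}_{t+1}^\opt$, whose sign is not obvious; my plan is to use the $t$-th row of $A\tilde{w}^\opt = \bone$, namely $-\phi\tilde{w}_{t-1}^\opt + (\gamma + 1 + \phi^2)\tilde{w}_t^\opt - \phi\tilde{w}_{t+1}^\opt = \gamma$, to eliminate $\tilde{w}_{t\pm 1}^\opt$, which collapses the interior entry to $\phi^{-1}\{\gamma - (1-\phi^2+\gamma)\tilde{w}_t^\opt\}$, and this is $\leq 0$ precisely by the Corollary \ref{cor1} bound $\tilde{w}_t^\opt \geq B_2$. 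Combined with strict positivity of $A^{-1}$ and strict negativity of the boundary entries, this yields $dw^\opt/d\phi < 0$ componentwise.

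The final claim that $w^\opt \to \bfzero$ as $\phi \to 1^-$ follows immediately from the squeeze in Corollary \ref{cor1}, since for $0 < \phi < 1$ one has $(1-\phi)^2/((1-\phi)^2+\gamma) = 1-B_1 \leq w_t^\opt \leq 1-B_2 = (1-\phi^2)/(1-\phi^2+\gamma)$, and both bounds vanish as $\phi \to 1^-$.
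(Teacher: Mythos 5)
Your proposal is correct, and its skeleton is the same as the paper's: write $\tilde{w}^\opt$ as the solution of a linear system, differentiate implicitly in $\gamma$ and $\phi$, use elementwise positivity of the inverse (your $A^{-1}=(\gamma/\phi)Q^{-1}$, the paper's $Q^{-1}$ via Property \ref{prop3}) together with positivity of $w^\opt$ and $\tilde{w}^\opt$, and reduce the $\phi$-monotonicity to showing $d\Lambda/d\phi\,\tilde{w}^\opt\le 0$ componentwise, treating endpoints and interior entries separately. Where you genuinely deviate is in the one nontrivial step: for the interior entries $2\phi\tilde{w}_t^\opt-\tilde{w}_{t-1}^\opt-\tilde{w}_{t+1}^\opt$, the paper substitutes the explicit row-sum formula $s_t$ from Property \ref{prop2} and manipulates it (note in passing that the paper's displayed diagonal $(1,2\phi,\dots,2\phi,1)$ for $d\Lambda/d\phi$ should be $(0,2\phi,\dots,2\phi,0)$, as you have it, which is what makes the endpoint entries $-\tilde{w}_2^\opt$), whereas you eliminate $\tilde{w}_{t\pm1}^\opt$ using the $t$th row of the defining system, collapsing the entry to $\phi^{-1}\{\gamma-(1-\phi^2+\gamma)\tilde{w}_t^\opt\}$ and invoking only the bound $\tilde{w}_t^\opt\ge B_2$ from Corollary \ref{cor1}; this is cleaner, avoids the $\kappa_t$, $v_t$, $s_t$ machinery at this stage, and makes the sign argument transparent (non-strict interior entries plus the strictly negative endpoints and strictly positive $A^{-1}$ still give strict componentwise negativity, as you observe). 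Likewise, for the limit $\phi\to 1^-$ you use the two-sided squeeze from Corollary \ref{cor1} rather than reading the limit off the explicit formula \eqref{eq_wopt}, which is a small but tidier improvement since it does not require controlling the $v_t$ terms. In short: same strategy, but your execution of the interior-entry estimate and the limiting claim is more self-contained, at the (mild) cost of leaning on Corollary \ref{cor1}, which itself rests on Property \ref{prop2}.
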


\begin{figure}[htb!]
\includegraphics[width=0.48\textwidth]{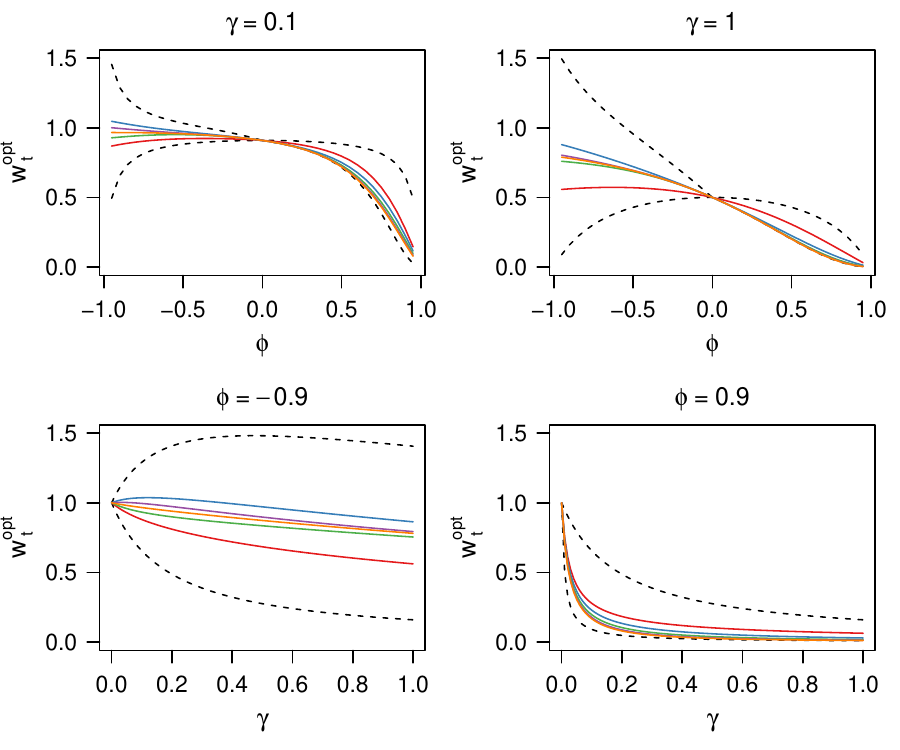}
\caption{Plots of $\bfw^\opt$ for different values of $\phi$ and $\gamma$ when $n=10$. Each solid line represents an element in $\bfw^\opt$. Only 5 distinct lines are seen as $\bfw^\opt$ is symmetric. Dashed lines represent bounds in Corollary \ref{cor1}.}
\label{fig_woptloc}
\end{figure}

From Theorem \ref{thm_alg1}, instant convergence is achievable only if the portion of $\mu$ subtracted from $x_t$ in \eqref{PNCP} is allowed to vary with $t$. However, Theorem \ref{thm_alg1_largen} shows that for large $n$, the convergence rate of Algorithm 1 goes to zero even if $w_t$ is common for all $t$, that is, if we restrict $\bfw = w \bone$. The optimal value of $w^\opt$ in Theorem \ref{thm_alg1_largen} is easy to compute and store, and can be used in place of that in Theorem \ref{thm_alg1} when $n$ is large.
\begin{theorem} \label{thm_alg1_largen}
If $\bfw = w\bone$, the rate of convergence of Algorithm 1 is optimized at $w^\opt =1/(1+u)$, where $u = n\gamma /(\bone^T \Lambda \bone)$. As $n \rightarrow \infty$, $w^\opt \rightarrow 1/(1+u^*)$ where $u^* = \gamma/(1-\phi)^2$ and the optimal rate of convergence of Algorithm 1 goes to zero.
\end{theorem}

\subsection{Unknown scale parameter}\label{sec_scale}
Next, suppose $\mu$, $\phi$ and $\sigma_\epsilon^2$ are known and only $\sigma_\eta^2$ is unknown. We assume $\mu \neq 0$ as this is the case of interest. The EM algorithm for this case (Algorithm 2), alternately updates $\bfm_{a\bfw}$ and $V_a$ at the E-step as in \eqref{eq_conddistn} and $\sigma_\eta^2$ at the M-step.

\begin{theorem}\label{thm_Alg2_opt_aw}
If $\mu \neq 0$, the rate of convergence of Algorithm 2 is jointly minimized at
\begin{equation} \label{eq_awopt}
a^\opt  = 1- \frac{\tr(V_0)}{n \sigma_\epsilon^2}, \quad
\bar{\bfw}^\opt = \frac{1}{\mu} \Big( \frac{2 V_0 \Lambda}{a^\opt \sigma_\eta^2} - I \Big)\bfm_{0\bone},
\end{equation}
where $\sigma_\eta^2 = \sigma_\eta^{2*}$. 
\end{theorem}

From Theorem \ref{thm_Alg2_opt_aw}, $\{a^\opt, \bfw^\opt\}$ depend on the unknown $\sigma_\eta^{2*}$, while $\bfw^\opt$ also depends on observed data $\bfy$. Thus, even if the true $\sigma_\eta^2$ is known, $\bfw^\opt$ will still vary across datasets generated from the AR(1) plus noise model due to sampling variability. We recommend updating $\{a^\opt, \bfw^\opt\}$ based on the latest estimate of $\sigma_\eta^2$. In Corollary \ref{cor3}, we show that $a^\opt \in (0,1)$. However, the same does not apply to $w_t^\opt$, which has been observed to be negative or exceed 1 in simulations. 

\begin{corollary} \label{cor3}
$a^\opt$ depends on $\phi$, and on $\sigma_\eta^2$ and $\sigma_\epsilon^2$ only through the signal-to-noise ratio $\gamma$. In addition, $a^\opt \in (0,1)$ and decreases strictly as $\gamma$ increases.
\end{corollary}

\begin{corollary} \label{cor4}
For $\bfw^\opt$ in \eqref{eq_awopt}, $\E(\bfw^\opt) = \bone$ and $\cov(\bfw^\opt) = AS^{-1} A/\mu^2$, where $A = 2V_0/a^\opt - \sigma_\eta^2 \Lambda^{-1}$. 
\end{corollary}

\begin{theorem} \label{thm_ahat}
As $n \rightarrow \infty$, $a^\opt$ converges to
\begin{equation} \label{ahatopt}
\hat{a}^\opt = 1- \gamma/ \sqrt{\{(1-\phi)^2 + \gamma\}\{(1+\phi)^2 + \gamma\}}.
\end{equation}
\end{theorem}

Theorem \ref{thm_ahat} derives a large-sample estimate $\hat{a}^\opt$ of $a^\opt$, which lends insight on how $a^\opt$ varies with $\gamma$ and $\phi$. From \eqref{ahatopt}, $\hat{a}_\opt$ is symmetric about $\phi = 0$ and decreases strictly with $\gamma$ as $\nabla_\gamma \hat{a}^\opt < 0$, but the relationship of $\hat{a}_\opt$ with $|\phi|$ is not monotone. Figure \ref{fig_aopt} shows the values of $a^\opt$ and $\hat{a}^\opt$ in different parameter settings, and $a^\opt$ is indistinguishable from $\hat{a}^\opt$ for $n \geq 200$. In addition, $a^\opt$ decreases as $\gamma$ increases, is symmetric about $\phi=0$, and varies with $\phi$ in the form of a (sometimes inverted) U-shape. We can compute $a^\opt$ efficiently using a Kalman filter and smoother using $\mathcal{O}(n)$ time steps, but $\hat{a}^\opt$ is even easier to compute and can be used in place of $a^\opt$ for large $n$.
\begin{figure}[htb!]
\includegraphics[width=0.48\textwidth]{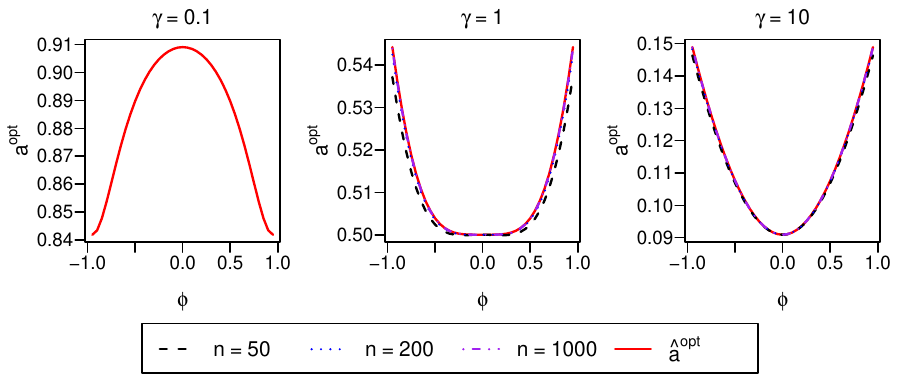}
\caption{Values of $a^\opt$ and $\hat{a}^\opt$ under different parameter settings.}
\label{fig_aopt}
\end{figure}

To investigate the dependence of $\bfw^\opt$ on $\phi$ and signal-to-noise ratio $\gamma$, we compute $\bfw^\opt$ for 1000 data sets simulated under different parameter settings. We set $\mu = 1$, $\sigma_\epsilon^2 = 0.1$, $\sigma_\eta^2 \in \{0.01, 0.1, 1\}$, $\phi \in \{-0.95, -0.90, \ldots, 0.95\}$ and $n\in \{50, 200, 1000\}$. Figure \ref{fig_wopt} shows the mean, 5th and 95th quantiles of the first element of $\bfw^\opt$ evaluated across 1000 simulated data sets. Plots of other elements look similar and are not shown. The mean is approximately 1 as is consistent with Corollary \ref{cor4}, but the variance increases with $\gamma$ and as $|\phi| \rightarrow 1$. The behavior of $\bfw^\opt$ does not appear to vary with $n$ and thus we cannot find an asymptotic estimate of $\bfw^\opt$ that is valid for large $n$ as we have done for $a^\opt$ in Theorem \ref{thm_ahat}.
\begin{figure}[htb!]
\includegraphics[width=0.48\textwidth]{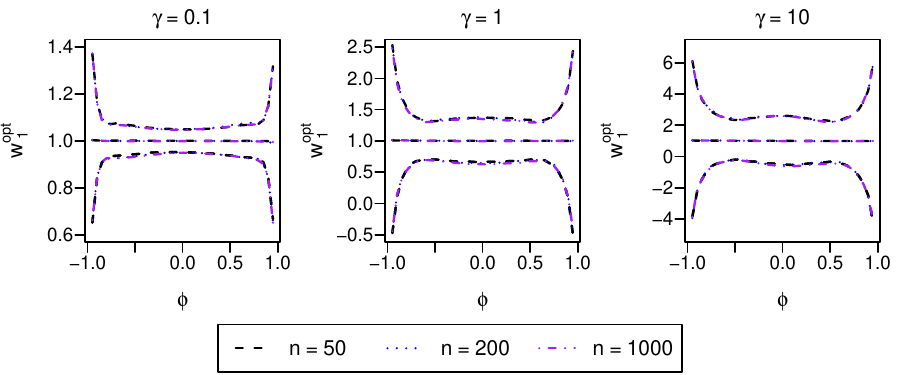}
\caption{Mean, 5th and 95th quantiles of $w_1^\opt$ over 1000 simulations.}
\label{fig_wopt}
\end{figure}

Theorem \ref{thm_aymp_converge_rate_Alg2} gives the approximate rate of convergence of Algorithm 2 for large $n$ and Figure \ref{fig_asymprate} shows how this rate varies for different combinations of $\gamma$ and $\phi$. The asymptotic rate of convergence is always less than 0.5 and symmetric about $\phi=0$. Algorithm 2 converges faster for smaller $|\phi|$ and larger signal-to-noise ratio $\gamma$. It converges most slowly when $|\phi|$ is close to 1 and $\gamma$ is close to 0.

\begin{theorem} \label{thm_aymp_converge_rate_Alg2}
The asymptotic rate of convergence of Algorithm 2 as $n \rightarrow \infty$ is approximately $(c - \kappa_0)/(2c-\kappa_0)$, where $\kappa_0=\sqrt{c^2-4}$ and $c = (1+\gamma +\phi^2)/|\phi|$. This rate decreases as $\gamma$ increases and $|\phi|$ decreases.
\end{theorem}
\begin{figure}[htb!]
\includegraphics[width=0.3\textwidth]{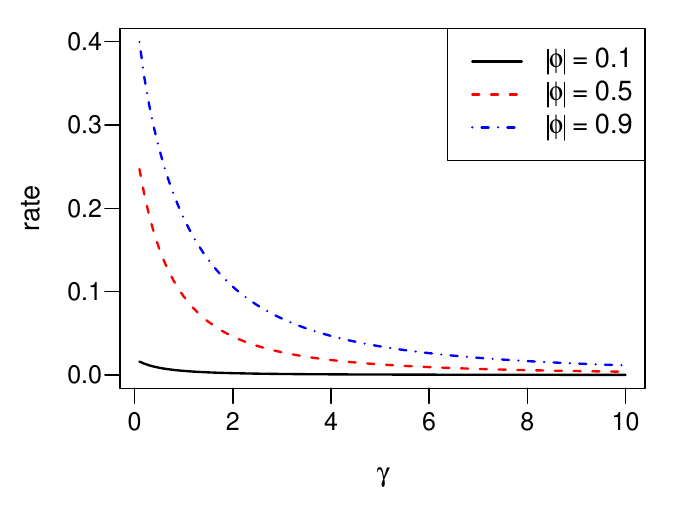}
\caption{Asymptotic rate of convergence of Algorithm 2.}
\label{fig_asymprate}
\end{figure}

\subsection{All parameters are unknown} \label{sec_all}
When all four parameters are unknown, the augmented information matrix $I_\augm(\bftheta^*)$ is given by
\begin{equation*}
\begin{bmatrix}
I_{ \mu, \mu} (\bfw) & I_{\mu, \sigma_\eta^2}(a, \bfw) & I_{ \mu, \phi} (\bfw) & I_{\mu, \sigma_\epsilon^2} (\bfw)  \\
\cdot & I_{\sigma_\eta^2, \sigma_\eta^2}(a, \bfw) & I_{\sigma_\eta^2, \phi}(a, \bfw) & I_{\sigma_\eta^2, \sigma_\epsilon^2}(a, \bfw) \\
\cdot & \cdot & I_{ \phi, \phi} & I_{\phi, \sigma_\epsilon^2}   \\
\cdot & \cdot & \cdot & I_{ \sigma_\epsilon^2, \sigma_\epsilon^2}
\end{bmatrix}.
\end{equation*}
Each element is dependent only on the working parameters stated in brackets and explicit expressions are given in the Supplement \citep{Tan2022}. For instance, $I_{ \mu, \mu}$ is independent of $a$ while $I_{\sigma_\eta^2, \sigma_\eta^2}$ depends on both $a$ and $\bfw$. Note that $I_{\phi, \sigma_\epsilon^2} = 0$, and $I_{ \phi, \phi}$, $I_{\sigma_\epsilon^2, \sigma_\epsilon^2}$ are both independent of $\{a, \bfw\}$. We also prove that $I_{\mu, \phi}/n$, $I_{ \mu, \sigma_\epsilon^2}/n$, $I_{\sigma_\eta^2, \phi}/n$ and $I_{\sigma_\eta^2, \sigma_\epsilon^2}/n$ converge to 0 almost surely as $n \rightarrow \infty$. Hence these elements do not change substantially with $\{a, \bfw\}$ for large $n$, and it suffices to focus on 
\[
\widetilde{I_\augm}(\bftheta^*) = \begin{bmatrix} I_{ \mu, \mu} (\bfw) & I_{\mu, \sigma_\eta^2} (a, \bfw) \\
\cdot & I_{\sigma_\eta^2, \sigma_\eta^2} (a, \bfw) \end{bmatrix},
\]
when optimizing the EM algorithm convergence rate. 

From Sections \ref{sec_loc} and \ref{sec_scale}, the parametrization optimal for inferring $\mu$ and $\sigma_\eta^2$ are different, and it is impossible to find $\{a, \bfw\}$ that jointly minimizes $I_{ \mu, \mu}$ and $I_{\sigma_\eta^2, \sigma_\eta^2}$. Hence when all parameters are unknown, we employ the {\em alternating expectation-conditional maximization} algorithm, which inserts an E-step before each conditional update, thus allowing the data augmentation scheme to vary across parameters. This enables us to use the optimal augmentation scheme for each parameter while conditioning on the rest, at the cost of more computation per iteration. To minimize the number of additional E-steps, we group $\{\phi,\sigma_\epsilon^2\}$ with $\sigma_\eta^2$ and perform only one E-step based on the optimal parametrization for $\sigma_\eta^2$ before the conditional updates of $\{\sigma_\eta^2, \phi, \sigma_\epsilon^2\}$. This is feasible as any convenient parametrization can be used for $\{\phi,\sigma_\epsilon^2\}$. The updates of $\{\phi, \sigma_\epsilon^2\}$ is a joint update as $\phi$ and $\sigma_\epsilon^2$ are independent in $Q(\bftheta | \bftheta^{(i)})$. After that, we update $\mu$ using its optimal parametrization. Instead of performing the E-step and M-step separately, we note that the update for $\mu$ in \eqref{update1} for $\bfw^\opt = \sigma_\eta^{-2} V_0 \Lambda \bone$ simplifies to $(\bfy^T S^{-1} \bone)/(\bone^T S^{-1} \bone)$, and $S^{-1} \bone = \sigma_\epsilon^{-2} \bfw^\opt$. Hence we can compute $\bfw^\opt$ followed by the $\mu$ update directly without computing $\bfm_{a\bfw}$. This choice of $\bfw$ corresponds to having no missing data since the $\mu$ update is that obtained by maximizing $p(\bfy | \bftheta)$ directly with respect to $\mu$. Algorithm 3 (for inferring all the parameters) is outlined below, and each iteration has two cycles. Algorithm 1 (for inferring $\mu$ only) omits step 2 while Algorithm 2 (for inferring $\sigma_\eta^2$ only) omits step 3 and updates of $\{\phi,\sigma_\epsilon^2\}$ in step 2(c). We say that Algorithms 1, 2 and 3 are using the partially noncentered parametrization (PNCP) as both the mean and scale are partially noncentered.

\begin{algorithm}
\renewcommand{\thealgorithm}{}
\floatname{algorithm}{}
\caption{Algorithm 3}
\begin{normalsize}
\begin{flushleft}
Initialize $\bftheta^{(0)}$, $i = 0$ and $r=1$. While $i \leq M$ and $r <\delta$,
\begin{enumerate}
\item $i \leftarrow i+1$.
\item \begin{enumerate}
\item Set $a=1 - \tr(V_0)/(n\sigma_\epsilon^2)$ and $\bar{\bfw} = \{2V_0 \Lambda /(a\sigma_\eta^2) - I\} \bfm_{0\bone}/\mu$.
\item Update $\bfm_{a\bfw}^{(i)}$ and $V_a^{(i)}$ as in \eqref{eq_conddistn}.
\item Update $\sigma_\eta^2$, $\sigma_\epsilon^2$ and $\phi$. 
\end{enumerate}
\item 
\begin{enumerate}
\item Set $\bfw = \sigma_\eta^{-2} V_0 \Lambda \bone$. 
\item Update $\mu = (\bfy^T \bfw)/(\bone^T  \bfw)$.
\end{enumerate}
\item Compute $L^{(i)} = \log p(\bfy | \bftheta^{(i)})$. \\
If $i \geq 2$, compute $s = (L^{(i)} - L^{(i-1)})/L^{(i-1)}$.
\end{enumerate}
\end{flushleft}
\end{normalsize}
\end{algorithm}

\subsection{Initialization and diagnosing convergence} \label{sec: diagnose convergence}
The EM algorithm can be sensitive to initialization and there is also a risk of getting stuck in local modes. To aid convergence, we initialize using the strategy below. Recall that $\bfy \sim \N(\mu\bone, S)$, where $S = \sigma_\epsilon^2 I + \sigma_\eta^2 \Lambda^{-1}$. Let $\gamma_h$ and $\hat{\gamma}_h$ denote the autocovariance and sample autocovariance respectively at lag $h$.  Then
\begin{align*}
\gamma_0 &= \sigma_\epsilon^2 + \sigma_\eta^2/(1-\phi^2),   && t=1, \dots, n, \\
\gamma_h &= \phi^h \sigma_\eta^2/(1- \phi^2),                  && 1 \leq h \leq n-t.
\end{align*}
We initialize $\mu$ as the sample mean $\bar{y} =  \sum_{t=1}^n y_i/n$. Since $\gamma_2/\gamma_1 = \phi$, \cite{Shumway2017} suggest initializing $\phi$ as $\hat{\gamma}_2/ \hat{\gamma}_1$. This strategy works well if $|\phi|$ is close to 1 but can be poor if $|\phi|$ is close to 0. Let $\rho_1 = \gamma_1/\gamma_0$ and $\hat{\rho}_1 = \hat{\gamma}_1/\hat{\gamma}_0$. As $|\phi| > |\rho_1|$, we consider values of $\phi$ taking the sign of $\hat{\gamma}_1$, such that $|\phi| \in \{0.1, 0.2, \dots, 0.9\}$ and $|\phi| > |\hat{\rho}_1|$. If this set is empty (e.g. if $|\hat{\rho}_1| \geq 0.9$), we consider $\phi = (\hat{\rho}_1 + \text{sign}(\hat{\rho}_1))/2$. For each plausible value of $\phi$, we compute ${\sigma_\eta^2} = \hat{\gamma}_1(1 - \phi^2)/ \phi$ and $\sigma_\epsilon^2 = \hat{\gamma}_0 - \hat{\gamma}_1/ \phi$. Then we evaluate the log-likelihood at each $\bftheta$ and choose the one that maximizes the likelihood as initial values. For diagnosing convergence, we consider the relative increment $s$ in the log-likelihood (defined in Algorithm 3), and terminate the algorithm when $s$ is less than a tolerance $\delta$ or if the maximum number of iterations $M$ is reached. We set $\delta = 10^{-9}$ and $M=10^5$ in all experiments unless stated otherwise. Efficient computation of the log-likelihood and optimal values of $\{a, \bfw\}$ using the Kalman filter are discussed in the Supplement \citep{Tan2022}. The cost per iteration of Algorithm 3 is $\mathcal{O}(n)$, where $n$ is the number of observations as the cost of computing the working parameters, log likelihood and parameter updates all scale linearly in $n$.

\section{Experimental results for EM algorithm}
We compare Algorithms 1--3 with the expectation-conditional maximization algorithms using the CP or NCP. For the expectation-conditional maximization algorithms, only one E-step is used at the beginning of each iteration followed by the CM-steps. In contrast, Algorithm 3 requires more computation in each iteration to update working parameters and find numerically the value of $\sigma_\eta^2$ that maximizes $Q(\bftheta | \bftheta^{(i)})$. To make Algorithms 2 and 3 more competitive, the updates in steps 2a and 3 are performed only in the first five iterations and thereafter only at $i=1000, 2000, \dots$. After convergence, we perform a final update of $\mu$ and the log-likelihood. These modifications do not affect the convergence rate of Algorithms 2 and 3 but are helpful in reducing the computation cost.

\subsection{Simulations}
First, we simulate data from the AR(1) plus noise model by setting $n=10^4$, $\mu = -1$, $\sigma_\epsilon^2 = 0.1$, $\phi \in \{-0.9, -0.6, -0.3, 0.3, 0.6, 0.9\}$ and $\sigma_\eta^2 \in \{0.01, 0.1, 1\}$. Twenty datasets are generated in each setting. 

Algorithm 1 considers $\mu$ as the only unknown. When $\delta = 10^{-9}$, it is difficult to differentiate among different parametrizations as all algorithms converge on average in less than five iterations and the averaged estimates of $\mu$ are almost identical. To magnify the differences, we reduce $\delta$ to $10^{-11}$. The average number of iterations NCP and CP take to converge are shown in Figure \ref{Alg1expt}, whereas Algorithm 1 converges instantly. With this lower tolerance, NCP achieves the same averaged estimates of $\mu$ as CP and PNCP, but becomes increasingly inefficient as $\gamma$ increases and as $\phi$ approaches 1. On the other hand, CP is least efficient when $\gamma$ is small and $\phi = -0.9$, and improves as $\gamma$ increases and $\phi \rightarrow 1$. Hence the performance of NCP and CP are complementary. These observations can also be explained using Theorem \ref{thm_alg1_largen}, which states that $w^\opt  \rightarrow (1-\phi)^2/\{\gamma + (1-\phi)^2 \}$ as $n \rightarrow \infty$. From this expression, $w^\opt \rightarrow 0$ as $\phi \rightarrow 1$ and $\gamma \rightarrow +\infty$. Hence the CP is strongly preferred in these conditions. 
\begin{figure}[htb!]
\centering
\includegraphics[width=0.4\textwidth]{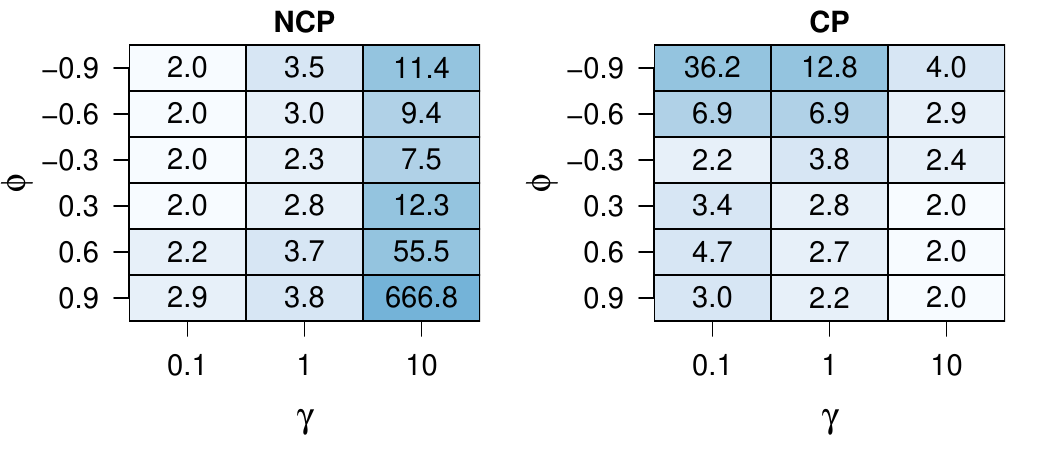}
\caption{(Unknown $\mu$) Average number of iterations required for convergence.}
\label{Alg1expt}
\end{figure}

For Algorithm 2 where $\sigma_\eta^2$ is the only unknown parameter, the average number of iterations required for each algorithm to converge are shown in Figure \ref{Alg2expt}. The average runtime in seconds is also given in brackets. The rate of convergence appears to be symmetric about $\phi=0$ for each parametrization. As is consistent with Theorem \ref{thm_aymp_converge_rate_Alg2}, PNCP becomes more efficient as $\gamma$ increases and $|\phi|$ decreases. The efficiency of CP worsens as $\gamma$ decreases and $|\phi| \rightarrow 0$, while NCP performs best when $\gamma$ is close to 1 and worsens as $|\phi|$ increases. The estimates of $\sigma_\eta^2$ obtained using different parametrizations are virtually identical except $\phi\in \{-0.9, -0.6, 0.9\}$ and $\gamma=10$. In these cases, NCP converges slowly and yields slightly different estimates from CP and PNCP.
\begin{figure}[htb!]
\centering
\includegraphics[width=0.48\textwidth]{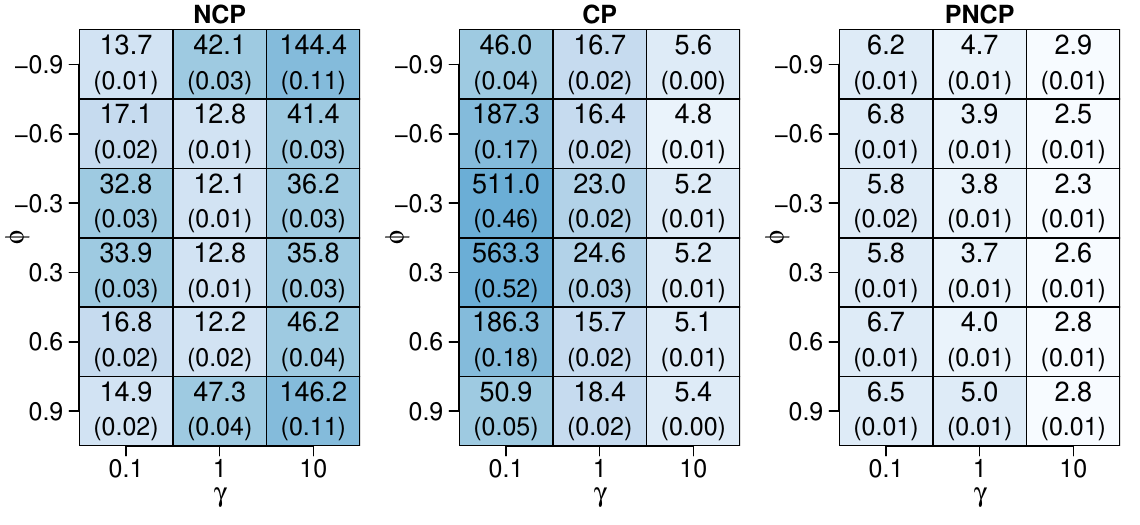}
\caption{(Unknown $\sigma_\eta^2$) Average number of iterations required for convergence, and average runtime (seconds) in brackets.}
\label{Alg2expt}
\end{figure}

Algorithm 3 treats all parameters as unknown. The average number of iterations required and runtime of each algorithm are shown in Figure \ref{Alg3chart}. 
\begin{figure}[tb!]
\centering
\includegraphics[width=0.48\textwidth]{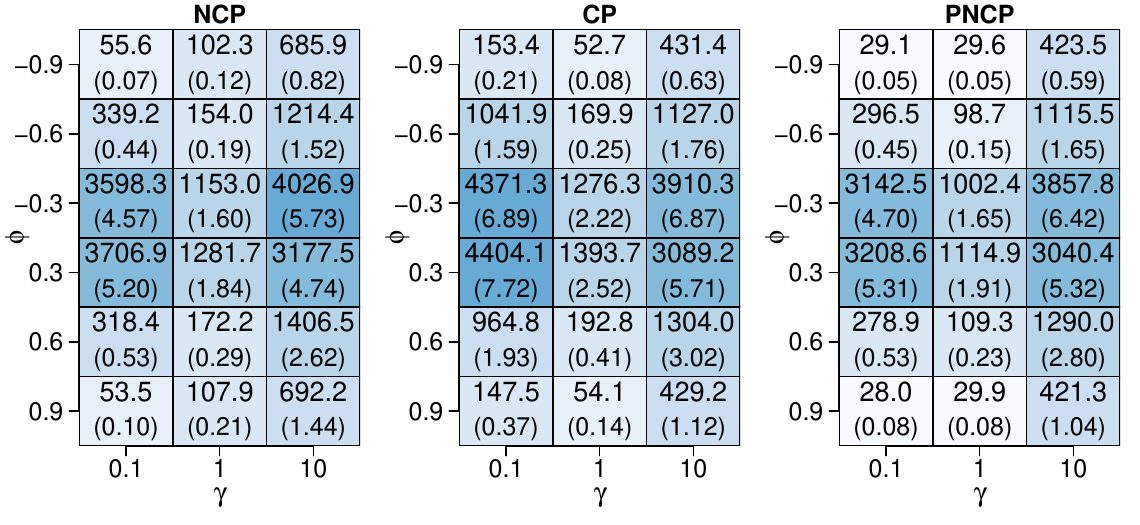}
\caption{(All parameters unknown) Average number of iterations required for convergence, and average runtime (seconds) in brackets.}
\label{Alg3chart}
\end{figure}
Again, there is symmetry about $\phi=0$. In each setting, PNCP converges using the least number of iterations and achieves a higher log-likelihood on average. Its average runtime is always close to, if not faster than the better of NCP and CP. Table \ref{Alg3speedup} shows that PNCP is able to provide speedup over CP and NCP in most cases, of up to 4.9 times. 
\begin{table*}[htb!]
\caption{(All parameters unknown) Speedup factor in average runtime provided by PNCP over NCP and CP.}
\label{Alg3speedup}
\begin{tabular}{@{}ccccccccccccccccccc@{}}  \hline
$\phi$ &\multicolumn{3}{c}{$-$0.9} &\multicolumn{3}{c}{$-$0.6} &\multicolumn{3}{c}{$-$0.3} &\multicolumn{3}{c}{0.3} &\multicolumn{3}{c}{0.6} &\multicolumn{3}{c}{0.9} \\
$\gamma$ & 0.1 & 1 & 10 & 0.1 & 1 & 10 & 0.1 & 1 & 10 & 0.1 & 1 & 10 & 0.1 & 1 & 10 & 0.1 & 1 & 10 \\   \hline
NCP & 1.3 & 2.3 & 1.4 & 1.0 & 1.3 & 0.9 & 1.0 & 1.0 & 0.9 & 1.0 & 1.0 & 0.9 & 1.0 & 1.3 & 0.9 & 1.4 & 2.6 & 1.4 \\
CP &  4.2 & 1.6 & 1.1 & 3.5 & 1.7 & 1.1 & 1.5 & 1.3 & 1.1 & 1.5 & 1.3 & 1.1 & 3.6 & 1.8 & 1.1 & 4.9 & 1.6 & 1.1 \\ \hline
\end{tabular}
\end{table*}
These properties make PNCP a great option as it is always faster than the worst of CP and NCP, often able to provide speedup and achieves a higher log likelihood on average. In most cases, the estimates obtained using different parametrizations are practically identical (more details are given in the Supplement \citep{Tan2022}).

\subsection{Real data}
The first dataset contains IBM stock prices from 1962 to 1965 ($n=1008$) studied in \cite{Pal2017}, and it is available in the code bundle for the book on GitHub. The second dataset involves an industrial robot making a sequence of movements and the time series ($n=324$) records the distance in inches from a final target. It is available in the R package {\tt TSA} \citep{Cryer2008} as {\tt data(robot)}. We scale up all measurements in {\tt robot} up by 1000 as the magnitude of the original measurements are very small with a maximum of 0.0083. 

Tables \ref{tabibm} and \ref{tabrobot} show the results for the IBM data and robot data respectively. Iterations refers to the number of iterations required for the EM algorithm to converge, where convergence is determined using the criteria described in Section \ref{sec: diagnose convergence}. The runtime reported is in seconds and the log-likelihood is the maximum value of $\log p(\bfy|\bftheta)$ attained at convergence. Maximum likelihood estimates of $\bftheta$ at convergence are also given. 

For the IBM data, $\hat{\phi}$ is very close to 1, while $\hat{\gamma} = 44.275/0.135$ is large. From Figure \ref{Alg3chart}, we expect CP to perform better than NCP which is indeed the case. PNCP automatically leans towards CP and provides a speedup of about 2.2 compared to NCP. The maximum log-likelihood attained by NCP is lower than CP and PNCP, and the estimates of $\mu$,  $\sigma_\eta^2$ and $\sigma_\epsilon^2$ also differ noticeably. This is likely because the NCP is stuck at a poorer local mode, or it is converging so slowly and the relative increment in the log-likelihood per iteration is so small that the algorithm was terminated before full convergence is reached.
\begin{table}[ht]
\caption{IBM data.}
\label{tabibm}
\begin{tabular}{@{}lrrr@{}}
\hline
& NCP & CP & PNCP \\ \hline
Iterations & 23504 & 9036 & 9030 \\
Runtime (s) & 4.86 & 2.30 & 2.25 \\
Max. log-likelihood & $-$3346.113 & $-$3345.929 & $-$3345.929 \\
$\mu$ & 463.125 & 482.043 & 482.043 \\   
$ \sigma_\eta^2$ & 43.946 & 44.275 & 44.275 \\
$\phi$ & 0.995 & 0.995 & 0.995 \\
$\sigma_\epsilon^2$ &  0.153 & 0.135 & 0.135 \\ 
\hline
\end{tabular}
\end{table}

For the robot data, $\hat{\phi}$ is also close to 1, but $\hat{\gamma} = 0.209/5.062$ is small, and NCP is expected to perform better based on Figure \ref{Alg3chart}. Indeed, NCP converges faster than CP, while PNCP reduces the number of iterations of NCP further by about half. The estimates returned by NCP and PNCP are identical while CP differs slightly.
\begin{table}[ht]
\caption{Robot data.}
\label{tabrobot}
\begin{tabular}{@{}lrrr@{}}
\hline
& NCP & CP & PNCP \\   \hline
Iterations & 93 & 326 & 42 \\
Runtime (s) & 0.00 & 0.03 & 0.00 \\
Max. log-likelihood & $-$748.809 & $-$748.809 & $-$748.809 \\
$\mu$ & 1.486 & 1.486 & 1.486 \\   
$ \sigma_\eta^2$ & 0.209 & 0.210 & 0.209, \\
$\phi$ & 0.947 & 0.947 & 0.947 \\
$\sigma_\epsilon^2$ &  5.062 & 5.061 & 5.062 \\ 
\hline
\end{tabular}
\end{table}

Figure \ref{IBMrobotloglikelihood} tracks the log-likelihood at each iteration, and optimization of the working parameters has provided PNCP with a good headstart. Overall, PNCP provides an attractive alternative to NCP and CP as it automatically gravitates towards the better parametrization and is often able to outperform both in terms of speed and accuracy.
\begin{figure}[ht]
\includegraphics[width=0.48\textwidth]{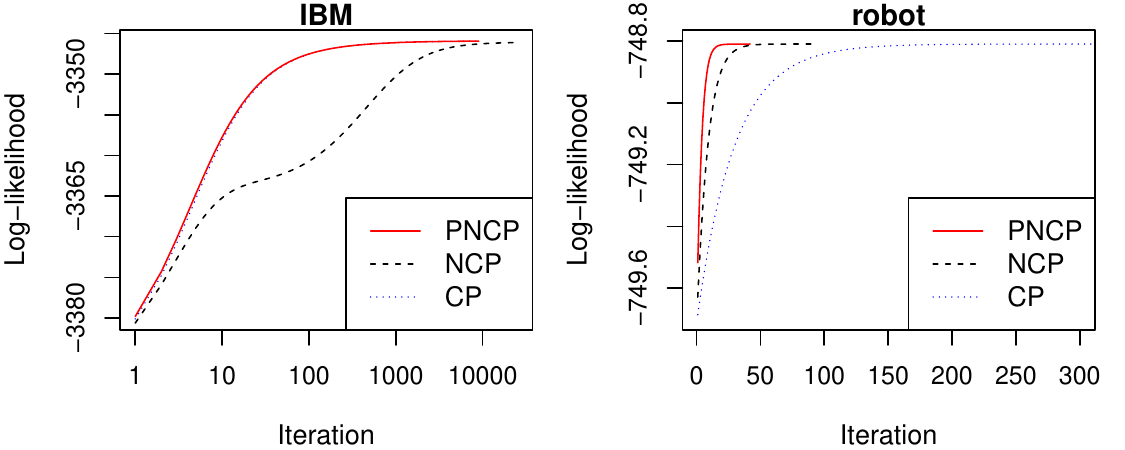}
\caption{Log-likelihood at each iteration.}
\label{IBMrobotloglikelihood}
\end{figure}

\section{Extensions}
We have proposed a data augmentation scheme for the AR(1) plus noise model in which working parameters $\{a, \bfw\}$ are introduced for rescaling and recentering. Optimal values of $\{a, \bfw\}$ that optimize the convergence of the EM algorithm in maximum likelihood estimation have been derived. However, the EM algorithm can also be used in the Bayesian framework to find the mode of a posterior distribution. Will the optimal values of $\{a, \bfw\}$ vary depending on the prior specified on $\bftheta$? It is known that the rate of convergence of the Gibbs sampler is closely linked to the EM algorithm. Will the proposed data augmentation scheme also improve convergence for the Gibbs sampler? We seek to address these issues in this section and also touch briefly on the connection to variational Bayes.

\subsection{EM Algorithm for finding posterior mode} \label{sec_EMBayesian}
Suppose a prior $p(\bftheta)$ is placed on $\bftheta$, and we wish to use the EM algorithm to find the posterior mode $\bftheta^\star$. That is, we wish to find $\bftheta$ maximizing $\log p(\bfy, \bftheta) = \log p(\bfy|\bftheta) + \log p(\bftheta)$. As before, let $\bfalpha$ denote the latent states dependent on $\{a, \bfw\}$. In this case,
\begin{equation*}
\begin{aligned}
Q(\bftheta|\bftheta^{(i)}) 
&= \E_{\bfalpha|\bfy, \bftheta^{(i)}}[\log p(\bfy, \bfalpha|\bftheta)] + \log p(\bftheta)
\end{aligned}
\end{equation*}
has an additional term $\log p(\bftheta)$. Hence the E-step is unchanged but the M-step must be modified. As $p(\bfalpha|\bftheta, \bfy) p(\bftheta|\bfy) = p(\bftheta|\bfalpha, \bfy)\; p(\bfalpha|\bfy)$, taking the logarithm, differentiating twice with respect to $\bftheta$ and taking expectation with respect to $p(\bfalpha|\bftheta, \bfy)$ leads to
\begin{multline*}
- \E_{\bfalpha|\bftheta, \bfy} [\nabla_\bftheta^2 \log p(\bfalpha|\bftheta, \bfy)] -\nabla_\bftheta^2 \log p(\bftheta|\bfy) \\
= - \E_{\bfalpha|\bftheta, \bfy} [\nabla_\bftheta^2 \log p(\bftheta|\bfalpha, \bfy)].
\end{multline*}
This yields the Bayesian analogue of the missing information principle, 
\[
I_\mis(\bftheta) + I_\obs^B(\bftheta) = I_\augm^B(\bftheta),
\]
where $I_\obs^B(\bftheta) = - \nabla_\bftheta^2 \log p(\bftheta|\bfy) = - \nabla_\bftheta^2 \log p(\bfy|\bftheta) - \nabla_\bftheta^2 \log p(\bftheta)$ and
\begin{equation*}
\begin{aligned}
I_\augm^B(\bftheta) 
&= - \E_{\bfalpha|\bftheta, \bfy} [\nabla_\bftheta^2 \log p(\bftheta|\bfalpha, \bfy)] \\
&= - \E_{\bfalpha|\bftheta, \bfy} [ \nabla_\bftheta^2 \log p(\bfy,\bfalpha |\bftheta)] - \nabla_\bftheta^2 \log p(\bftheta) \\
&= I_\augm(\bftheta) - \nabla_\bftheta^2 \log p(\bftheta).
\end{aligned}
\end{equation*}
To optimize the rate of convergence of the EM algorithm, we seek values of $\{a, \bfw\}$ that minimizes $I_\augm^B(\bftheta^\star)$. Since $p(\bftheta)$ is independent of $\{a, \bfw\}$, it suffices to minimize $I_\augm(\bftheta^*)$ with respect to $\{a, \bfw\}$. Hence the only difference is that $I_\augm(\bftheta)$ is evaluated at the posterior mode $\bftheta^\star$ instead of maximum likelihood estimate $\bftheta^*$. As $I_{\mu, \mu}$ is independent of $\mu$, the optimal value of $\bfw$ derived in Section \ref{sec_loc} remains valid. As for $I_{\sigma_\eta^2, \sigma_\eta^2}$, it is shown in the Supplement \citep{Tan2022} that the optimal values of $\{a, \bfw\}$ derived in Section \ref{sec_scale} miraculously remain valid for any prior distribution $p(\sigma_\eta^2)$. Thus we can still use the optimal values of $\{a, \bfw\}$ derived in Sections \ref{sec_loc} and \ref{sec_scale} to optimize the rate of convergence of the EM algorithm when finding the posterior mode of the AR(1) plus noise model.

\subsection{Gibbs sampler and variational Bayes}
It is well known that the convergence rates of the EM algorithm and Gibbs sampler are closely related. When the target distribution $p(\bftheta, \bfx | \bfy)$ is Gaussian, \cite{Sahu1999} showed that the rate of convergence of the Gibbs sampler that alternately updates $\bftheta$ and $\bfx$ is equal to that of the corresponding EM algorithm. Specifically, if the precision matrix of $p(\bftheta, \bfx | \bfy)$ is
\begin{equation*}
H = \begin{bmatrix}
H_{11} & H_{12}  \\  H_{12}^T & H_{22}
\end{bmatrix},
\end{equation*}
where $H_{11}$ and $H_{22}$ are blocks corresponding to $\bftheta$ and $\bfx$ respectively, then the common rate of convergence is $H_{11}^{-1} H_{12} H_{22}^{-1} H_{21}$. 

As illustration, suppose $\mu$ is the only unknown parameter and a flat prior $p(\mu) \propto 1$ is used. We show that the strategy of partially noncentering $\mu$ and results in Section \ref{sec_loc} can be transferred to the Gibbs sampler corresponding to Algorithm 1 given below. 
\begin{tabbing}
\qquad \enspace Initialize $\mu^{(0)}$. For $i=1, \ldots, N$, \\
\qquad \qquad Step 1. Sample $\bfalpha^{(i)}$ from $p(\bfalpha | \mu^{(i-1)}, \bfy)$.\\
\qquad \qquad Step 2. Sample $\mu^{(i)}$ from $p(\mu | \bfalpha^{(i)}, \bfy)$. 
\end{tabbing}
The conditional distribution $p(\bfalpha|\mu, \bfy)$ is given in \eqref{eq_conddistn} and
\begin{equation*}
\mu | \bfalpha, \bfy \sim \N( \tau(\bfw)^{-1} \{\sigma_\epsilon^{-2} \bfy^T \bfw - \sigma_\eta^a \bfalpha^T \rho(\bfw) \},  \tau(\bfw)^{-1} ).
\end{equation*}
The joint posterior density $p(\mu, \bfalpha | \bfy)$ is Gaussian with precision matrix $H$, where $H_{11} = \tau(\bfw)$, $H_{12} =  \sigma_\eta^a\rho(\bfw)^T$ and $H_{22} =  \sigma_\eta^{2a} V_0^{-1}$. By \cite{Sahu1999}, the rate of convergence of the Gibbs sampler is $\tau(\bfw)^{-1} \rho(\bfw)^T V_0\rho(\bfw)$, the same as that stated in Theorem \ref{thm_alg1}. Hence the Gibbs sampler converges instantly and produces independent draws when $\bfw = \bfw^\opt$. Alternatively, if we combine the updates in steps 1 and 2, the update of $\mu$ at the $i$th iteration is 
\begin{equation*}
\begin{aligned}
\mu^{(i)} &= \tau(\bfw)^{-1} \left[ \bfy^TS^{-1} \bone+\rho(\bfw)^T V_0\rho(\bfw) \mu^{(i-1)} \right.\\
& \left. \quad + \sqrt{\rho(\bfw)^T V_0\rho(\bfw) + \tau(\bfw)} Z \right],
\end{aligned}
\end{equation*}
where $Z \sim \N(0,1)$. The rate of convergence can be optimized by minimizing the autocorrelation at lag 1, $\tau(\bfw)^{-1} \rho(\bfw)^T V_0\rho(\bfw)$, and the result in Theorem \ref{thm_alg1} follows.

Next, we demonstrate that variational Bayes can also benefit from partial noncentering. Consider a variational Bayes approximation to $p(\mu, \bfalpha | \bfy)$ of the form $q(\bfalpha,\mu) = q(\bfalpha)q(\mu)$. Subjected to this product density restriction, the optimal $q(\bfalpha)$ and $q(\mu)$, obtained by minimizing the Kullback-Leibler divergence from $p(\mu, \bfalpha | \bfy)$ to $q(\bfalpha,\mu)$, are $\N(m_\bfalpha^q, \sigma_\eta^{-2a}V_0)$ and $\N(m_\mu^q, \tau(\bfw)^{-1})$ respectively \cite[see, e.g.][]{Ormerod2010}, where 
\begin{equation*}
\begin{gathered}
m_\bfalpha^q = \sigma_\eta^{-a}V_0\{ \sigma_\epsilon^{-2}\bfy- \rho(\bfw) m_\mu^q \}, \\
m_\mu^q = \tau(\bfw)^{-1} \{\sigma_\epsilon^{-2} \bfy^T\bfw - \sigma_\eta^a \rho(\bfw)^T m_\bfalpha^q \}.
\end{gathered}
\end{equation*}
The variational Bayes algorithm thus iterates between updating $m_\bfalpha^q$ and $m_\mu^q$. Combining the two updates, we have at the $i$th iteration,
\begin{equation*}
{m_\mu^q}^{(i)} = \tau(\bfw)^{-1} (\bfy ^T S^{-1} \bone+  \rho(\bfw)^T V_0\rho(\bfw) \, {m_\mu^q}^{(i-1)} ).
\end{equation*}
Hence the rate of convergence is also given by $\tau(\bfw)^{-1} \rho(\bfw)^T V_0\rho(\bfw)$, which is minimized to zero when $\bfw = \bfw^\opt$. Moreover, $\tau(\bfw^\opt) = (\bone^TS^{-1}\bone)^{-1}$. Hence $m_\mu^q = (\bfy^T S^{-1} \bone)/(\bone^TS^{-1}\bone) $ and $q(\mu)$ is able to capture the true marginal distribution of $\mu$.

In summary, when $\mu$ is the only unknown parameter, the rate of convergence of the EM algorithm, Gibbs sampler and variational Bayes are all equal to $\tau(\bfw)^{-1} \rho(\bfw)^T V_0\rho(\bfw)$, which is optimized at $\bfw^\opt$. This outcome is in line with the result of \cite{Tan2014}, who showed the equivalence of the rates of convergence between the EM algorithm, Gibbs sampler and variational Bayes, when the target density is Gaussian. Intuitively, $\bfw^\opt$ minimizes the fraction of missing information for the EM algorithm, and the autocorrelation at lag 1 for the Gibbs sampler. For variational Bayes, $\bfw^\opt$, besides optimizing the rate of convergence, also produces a more accurate posterior approximation. This is because $\cov(\bfalpha, \mu|\bfy)$ is zero when $\bfw = \bfw^\opt$, and $q(\bfalpha, \mu) = q(\bfalpha) q(\mu)$ is then an accurate reflection of the dependence structure between $\bfalpha$ and $\mu$.

\section{Non-Gaussian state space models}
Next, we consider MCMC algorithms for Bayesian inference of some nonlinear and non-Gaussian state space models and demonstrate how the optimal augmentation schemes derived for the AR(1) plus noise model can be utilized to improve the mixing and convergence of the MCMC sampler. We focus on the stochastic volatility (SV) and stochastic conditional duration models. 

\subsection{Stochastic volatility model} \label{sect_SV model}
The returns of financial time series often have variances that vary with time, and the SV model seeks to capture this behavior by modeling the log of the squared volatility as a hidden AR(1) process. The observation equation is 
\begin{align*}
y_t &= \e^{x_t/2} \epsilon_t,  && \epsilon_t \sim \N(0,1),\; t \geq 1,
\end{align*} 
where $\e^{x_t/2}$ represents the volatility, while $(x_t)$ is defined in \eqref{AR(1)}. The $(\epsilon_t)$ and $(\eta_s)$ are independent sequences and the unknown parameter $\bftheta = (\mu, \sigma_\eta^2, \phi)^T$. We can transform the SV model into a linear model by writing
\begin{equation*}
\tilde{y}_t = \log y_t^2 = x_t + \log \epsilon_t^2,
\end{equation*}
where $\log \epsilon_t^2$ follows a log $\chi_1^2$ distribution with mean, $\psi(0.5) - \log 0.5 \approx -1.27$, and variance, $\pi^2/2 \approx 4.93$. \cite{Harvey1994} approximate the distribution of $\log \epsilon_t^2$ using $\N(-1.27, 4.93)$, thereby recovering the AR(1) plus noise model with $y_t$ replaced by $\tilde{y}_t +1.2704$ and $\sigma_\epsilon^2 = 4.93$. They then estimated $\bftheta$ by maximizing the quasi-likelihood using Kalman filtering. However, \cite{Kim1998} showed that the distribution of $\log \epsilon_t^2$ is highly skewed and poorly approximated by a normal distribution, and they approximate it using a 7-component mixture of normal densities instead. Here we consider the improved mixture approximation with $K=10$ components by \cite{Omori2007}. Let  $r_t \in \{1, \dots, K\}$ be mixture component indicators such that $\P(r_t=k) = p_k$ and $\log \epsilon_t^2| r_t = k \sim \N(m_k, s_k^2)$ for $t=1, \dots, n$. Conditional on the mixture indicator $r_t$, 
\begin{align*}
\tilde{y}_t -  m_{r_t}&= x_t  + s_{r_t} \varepsilon_t, &&\varepsilon_t \sim \N(0,1),
\end{align*}
and the AR(1) plus noise model is recovered with $y_t$ replaced by $\tilde{y}_t -  m_{r_t}$ and $\sigma_\epsilon$ replaced by $s_{r_t}$.

\subsection{Stochastic conditional duration model}
The timing of transactions is central to the study of the microstructure of financial markets. As time intervals between transactions are highly irregular, \cite{Engle1998} formulated the autoregressive conditional duration model to analyze such data, treating arrival times as a point process and allowing the conditional intensity to be dependent on past durations. As an extension, \cite{Bauwens2004} propose the stochastic conditional duration model, which treats the log of the conditional mean of durations as a latent AR(1) process. The duration $y_t$ has positive support, and the observation equation is
\begin{align*}
y_t &= \e^{x_t} \epsilon_t, &&t \geq 1, 
\end{align*}
where $\epsilon_t$ follows the exponential distribution with mean 1, while $(x_t)$ is defined as in \eqref{AR(1)}. Other possible choices for the distribution of $\epsilon_t$ include Weibull and Gamma \citep{Strickland2006}. We similarly transform this model into a linear one by writing 
\begin{equation*}
\tilde{y}_t = \log y_t = x_t + \log \epsilon_t,
\end{equation*}
where $\log \epsilon_t$ has mean, $\psi(1) \approx -0.58$, and variance, $\pi^2/6 \approx 1.64$. The distribution of $\log \epsilon_t$ is also poorly approximated by $\N(-0.58, 1.64)$ and we approximate it using a mixture of normal densities. The 5-component mixture approximation for $\log \epsilon_t$ provided by \cite{Carter1997} is rather imprecise and we consider instead the 10-component mixture approximation for the standard Gumbel (type I extreme value) distribution found by \cite{Fruhwirth2007}, which is fitted by minimizing the Kullback-Leibler distance and shown to improve the accuracy of auxiliary mixture sampling when used in a Metropolis-Hastings algorithm. \cite{Fruhwirth2010} further illustrates the use of auxiliary mixture sampling for variable selection in non-Gaussian state space models. The distribution of $-\log \epsilon_t$ is a standard Gumbel and hence it suffices to add a negative sign to the means in Table 1 of \cite{Fruhwirth2007}. The parameters of the mixture approximation for the SV and stochastic conditional duration models are given in the Supplement \citep{Tan2022}. Figure \ref{fig_logchiexp} shows the true densities of $\log \chi_1^2$ and log exp(1), Gaussian approximation based on the mean and variance of the original variables and 10-component mixture approximations.

\begin{figure}[tb!]
\centering
\includegraphics[width=0.47\textwidth]{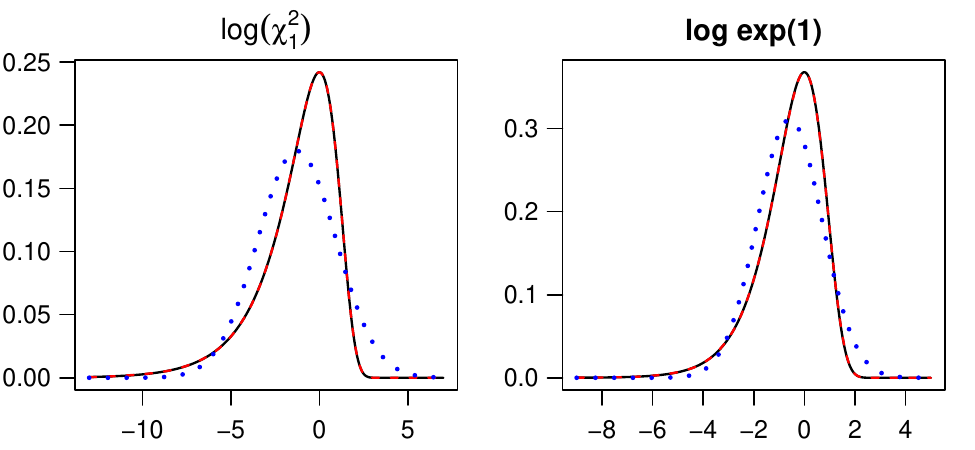}
\caption{Densities of $\log \chi_1^2$ and $\log \exp(1)$ (black solid lines), $\N(-1.27, 4.93)$ and $\N(-0.58, 1.64)$ approximations (blue dotted lines) and 10-component mixture approximations (red dashed lines).}
\label{fig_logchiexp}
\end{figure}

Using the mixture of normals approximation, we can construct MCMC algorithms for the SV and stochastic conditional duration models similarly, the only difference being $\tilde{y}_t = \log y_t^2$ for the SV model, while $\tilde{y}_t = \log y_t$ for the stochastic conditional duration model. Let $\tilde{\bfy} = (\tilde{y}_1, \dots, \tilde{y}_n)^T$, $\bfr = (r_1, \dots, r_n)^T$, $\bfm_\bfr = (m_{r_1}, \dots, m_{r_n})^T$ and $\bfs_\bfr^2 = (s_{r_1}^2, \dots, s_{r_n}^2)^T$. In matrix notation,
\begin{equation*}
\begin{gathered}
\tilde{\bfy}  - \bfm_\bfr | \bfr, \bfx \sim \N(\bfx,D_{\bfr}), \\
\bfx | \bftheta \sim \N(\mu\bone, \sigma_\eta^2\Lambda^{-1}), 
\end{gathered}
\end{equation*}
where $D_{\bfr} = \diag(\bfs_\bfr^2)$. As before, we introduce $\bfalpha = (\bfx- \mu \bfw)/\sigma_\eta^a$. Then the model can be written as 
\begin{equation*}
\begin{gathered}
\tilde{\bfy}  - \bfm_\bfr | \bfr, \bfalpha, \bftheta \sim \N(\sigma_\eta^a \bfalpha + \mu \bfw, D_{\bfr}), \\
\sigma_\eta^a\bfalpha | \bftheta \sim \N(\mu\bar{\bfw}, \sigma_\eta^2\Lambda^{-1}).
\end{gathered}
\end{equation*}
Following \cite{Kastner2014}, we specify independent priors for the components of $\bftheta$; $\mu \sim \N(b_\mu, B_\mu)$, $\sigma_\eta^2 \sim \text{Gamma}(0.5, 0.5B_\sigma)$ such that $\E(\sigma_\eta^2) = B_\sigma$, and $\text{Beta}(b_\phi, B_\phi)$ for $(\phi+1)/2$ \citep{Kim1998}. The joint posterior is 
\begin{equation*}
\begin{aligned}
p(\bfalpha, \bftheta, \bfr|\bfy ) \propto p(\bfy |\bfr, \bfalpha, \bftheta)p(\bfalpha|\bftheta) p(\bftheta) p(\bfr).
\end{aligned}
\end{equation*}
Choice of a gamma prior for $\sigma_\eta^2$ instead of the conventional inverse-gamma prior reduces sensitivity to prior hyperparameters \citep{Fruhwirth2010}.

\section{MCMC Algorithms}
\cite{Kastner2014} have designed MCMC samplers for the SV model based on the mixture of normals approximation approach described in Section \ref{sect_SV model}, using the CP or NCP. However, the performance of these algorithms vary across datasets with the CP or NCP preferred at times, depending on the level of volatility and persistence. On the other hand, the ancillarity-sufficiency interweaving strategy \citep[ASIS, ][]{Yu2011} was shown to improve the sampling efficiency of all parameters across different settings. In this section, we first introduce a baseline MCMC algorithm for the SV and stochastic conditional duration models that is similar to \cite{Kastner2014} in some aspects, but may be used with arbitrary values of $\{a, \bfw\}$. Then we describe the ASIS algorithm followed by the proposed block-specific reparametrization strategy (BSR) that makes use of the optimal data augmentation schemes derived in Section \ref{sec_EM}.

\begin{algorithm}
\renewcommand{\thealgorithm}{}
\floatname{algorithm}{}
\caption{MCMC Algorithm}
\begin{normalsize}
\begin{flushleft}
Initialize $\bftheta^{(0)}$ and $\bfr^{(0)}$. For $i=0, \dots, M-1$,
\begin{enumerate}
\item Sample $\bfalpha^{(i+1)}$ from $p(\bfalpha | \bfy, \bftheta^{(i)}, \bfr^{(i)})$,
\item Sample $\bftheta^{(i+1)}$ from $p(\bftheta|\bfy, \bfalpha^{(i+1)}, \bfr^{(i)})$,
\item Sample $\bfr^{(i+1)}$ from $p(\bfr|\bfy, \bfalpha^{(i+1)}, \bftheta^{(i+1)})$.
\end{enumerate}
Return the draws $\{\bftheta^{(i)}, \bfr^{(i)}| i=1, \dots, M\}$.
\end{flushleft}
\end{normalsize}
\end{algorithm}

The MCMC algorithm above may be used with arbitrary values of $\{a, \bfw\}$, and updates for the CP and NCP can be obtained simply by substituting appropriate values of $\{a, \bfw\}$. In steps 1 and 3, the sampling of $\bfalpha$ and $\bfr$ from their conditional posterior distributions are performed using Gibbs. To sample $\bftheta =(\mu, \sigma_\eta^2, \phi)$ in step 2, \cite{Kastner2014} investigated 1-block, 2-block and 3-block samplers, all requiring Metropolis-Hastings updates. Conditional on the parametrization, the simulation efficiency was similar across different blocking strategies, but the 3-block sampler has lower simulation efficiency for $\phi$ compared to the 1-block and 2-block samplers under the CP. Here we only consider the 3-block sampler for $\bftheta$. The updates in each step are derived below. 

For the latent states, $\bfalpha| \bfy, \bftheta, \bfr \sim \N(C_\bfalpha^{-1}\bfc_\bfalpha, C_\bfalpha^{-1})$, where 
\begin{equation*}
\begin{gathered}
C_\bfalpha = \sigma_\eta^{2a} ( D_{\bfr}^{-1} + \sigma_\eta^{-2}\Lambda ), \\
\bfc_{\bfalpha} = \sigma_\eta^a \{ D_{\bfr}^{-1} (\tilde{\bfy} - \bfm_\bfr - \mu \bfw) + \sigma_\eta^{-2} \mu \Lambda \bar{\bfw}\}.
\end{gathered}
\end{equation*}
As $C_\bfalpha$ is a symmetric tridiagonal matrix, we first compute the $LDL^T$ decomposition of $C_\bfalpha$, where $D$ is a diagonal matrix and $L$ is a unit lower triangular matrix with only the first lower diagonal nonzero. $C_\bfalpha^{-1}\bfc_\bfalpha$ is then computed using backward substitution. To generate $\bfalpha$, a random vector $\bfz$ is first simulated from $\N(0, I)$ so that $\bfalpha = C_\bfalpha^{-1}\bfc_\bfalpha + L^{-T} D^{-1/2} \bfz$. 

For the indicators $\bfr$, 
\[
p(\bfr | \bfy, \bfalpha, \bftheta) \propto \prod_{t=1}^n p(y_t | r_t, \alpha_t, \bftheta)p(r_t).
\]
Thus, the $\{r_t\}$ are conditionally independent a posteriori. For $t=1, \dots, n$, $k=1, \dots, K$, 
\begin{multline*}
\P(r_t = k | y_t, \alpha_t, \bftheta) \propto \exp\{ \log p_k - \log(s_k^2)/2 \\
 -(\tilde{y}_t - \mu w_t -\sigma_\eta^a \alpha_t - m_k)^2/(2s_k^2) \},
\end{multline*}
where $p_k = \P(r_t=k)$. Hence we can sample each $r_t$ according to the conditional posterior probabilities (normalized to sum to one) across $k=1, \dots, K$. 

We have $\mu | \bfy, \bfalpha, \bfr, \sigma_\eta^2, \phi \sim \N(C_\mu^{-1}c_\mu, C_\mu^{-1})$, where
\begin{equation*}
\begin{gathered}
C_\mu = B_\mu^{-1} + \bfw^T D_{\bfr}^{-1} \bfw +  \sigma_\eta^{-2} \bar{\bfw}^T \Lambda \bar{\bfw} , \\
c_\mu = B_\mu^{-1}b_\mu + \sigma_\eta^{a-2} \bfalpha^T \Lambda \bar{\bfw} + (\tilde{\bfy}  - \bfm_\bfr - \sigma_\eta^a \bfalpha)^T D_{\bfr}^{-1} \bfw.
\end{gathered}
\end{equation*}

For $\phi$, let $h_t = \sigma_\eta^a \alpha_t - \mu \bar{w}_t$ for $t=1, \dots, n$. Then
\begin{multline} \label{post_phi}
p(\phi|\bfy,  \bfalpha, \bfr, \mu, \sigma_\eta^2) \propto \\
\exp \bigg\{ g_\phi(\phi) -  \sum_{t=2}^n \frac{(h_t - \phi h_{t-1})^2}{2\sigma_\eta^2} \bigg\},
\end{multline}
where $g_\phi(\phi)$ is given by
\[
\Big(b_\phi-\frac{1}{2} \Big) \log(1+\phi) + \Big(B_\phi- \frac{1}{2} \Big) \log(1-\phi) + \frac{\phi^2 h_1}{2\sigma_\eta^2}.
\]
We perform a Metropolis-Hastings step \citep{Kim1998}, where a proposal $\phi^\new$ is generated from 
\[
\N \left( \frac{\sum_{t=1}^{n-1} h_t h_{t+1}}{\sum_{t=1}^{n-1} h_t^2}, \frac{ \sigma_\eta^2}{\sum_{t=1}^{n-1} h_t^2} \right),
 \]
based on a normal approximation to the last term in \eqref{post_phi}. If $|\phi^\new| < 1$, then $\phi^\new$ is accepted with probability equal to $\min \left( 1, \exp\{g_\phi(\phi^\new) - g_\phi(\phi^\old)\}  \right)$. 

For $\sigma_\eta^2$, we first discuss updates for the CP and NCP before proposing an update for arbitrary $\{a, \bfw\}$. For the CP, $p(\sigma_\eta^2 | \bfy, \bfalpha, \bfr, \mu, \phi) 
\propto \exp \{ f_\sigma(\sigma_\eta^2)\}$, where
\[
\begin{aligned}
f_\sigma(\sigma_\eta^2) = - C_\sigma/ \sigma_\eta^2 - (c_\sigma + 1) \log (\sigma_\eta^2)  - \sigma_\eta^2/(2B_\sigma),
\end{aligned}
\]
$C_\sigma = (\bfalpha - \mu \bone)^T \Lambda (\bfalpha - \mu \bone)/2$ and $c_\sigma = (n-1)/2$. \cite{Kastner2014} propose a Metropolis-Hastings step, where a proposal ${\sigma_\eta^2}^\new$ is generated from the inverse gamma distribution, $\text{IG}\left( c_\sigma, C_\sigma \right)$, and is accepted with a probability of
\[
\min(1, \exp[ ({\sigma_\eta^2}^\old - {\sigma_\eta^2}^\new)/(2B_\sigma) ]).
\]
For the NCP, 
\[
\begin{aligned}
p(\sigma_\eta|\bfy, \bfalpha, \bfr, \mu, \phi) &\propto \exp\{ c_\sigma' \sigma_\eta - C_\sigma'\sigma_\eta^2 /2 \},
\end{aligned}
\]
where $c_\sigma' = \bfalpha^T D_{\bfr}^{-1} (\tilde{\bfy} - \bfm_\bfr - \mu\bone)$ and $C_\sigma' = \bfalpha^T D_{\bfr}^{-1} \bfalpha + B_\sigma^{-1}$. Thus we can generate a proposal $\sigma_\eta^\new$ from $\N( C_\sigma'^{-1}c_\sigma', C_\sigma'^{-1} )$, which is accepted if $\sigma_\eta^\new > 0$. For arbitrary $\{a, \bfw\}$, we consider a transformation $\sigma_\eta^2 = \e^\nu$ so that $\nu$ is unconstrained. We have $p(\nu | \bfy, \bfalpha, \bfr, \mu, \phi) \propto \exp \{f_\nu(\nu)\}$, where $f_\nu(\nu)$ and its first and second order derivatives are given in the Supplement \citep{Tan2022}. Let $\hat{\nu} = \argmax_\nu f_\nu(\nu)$ denote the posterior mode found using the L-BFGS algorithm. Then $f_\nu(\nu) \approx f_\nu(\hat{\nu}) + f_\nu''(\hat{\nu})(\nu - \hat{\nu})^2/2$ since $f'(\hat{\nu}) = 0$, and a normal approximation to the conditional posterior is $\N(\hat{\nu}, -1/f_\nu''(\hat{\nu}) )$. A proposal $\nu^\new$ generated from this normal density is accepted with probability $\min \left(1, \exp\{g_\nu(\nu^\new) - g_\nu(\nu^\old)\} \right)$, where
\[
g_\nu(\nu) = f_\nu(\nu) -f_\nu''(\hat{\nu}) (\nu - \hat{\nu})^2/2.
\]

\subsection{Ancillarity-sufficiency interweaving strategy}
The mixing and convergence of the baseline MCMC sampler depends heavily on the parametrization chosen for the data. For any parametrization, the simulation efficiency may also differ across parameters. As the behavior of the CP and NCP are often complementary (one converges quickly if the other is slow), \cite{Yu2011} proposed an ancillarity-sufficiency interweaving strategy (ASIS), that interweaves a {\em sufficient} augmentation scheme where the missing data is a sufficient statistic for $\bftheta$ (usually the CP), and an {\em ancillary} augmentation scheme where the missing data is an ancillary statistic independent of $\bftheta$ (usually the NCP). It is proven that ASIS is at least better than the worse of the two schemes. ASIS can be implemented using the CP or NCP as baseline, but \cite{Kastner2014} report that choice of the baseline is immaterial. Hence, we only consider ASIS with CP as baseline, the steps of which are outlined below. 

\begin{algorithm}
\renewcommand{\thealgorithm}{}
\floatname{algorithm}{}
\caption{ASIS Algorithm (baseline: CP)}
\begin{normalsize}
\begin{flushleft}
Initialize $\bftheta^{(0)}$ and $\bfr^{(0)}$. For $i=0, \dots, M-1$,
\begin{enumerate}
\item Under the CP (set $a=0$, $\bfw=\bfzero$):
\begin{enumerate}
\item Sample $\bfalpha^{(i)}$ from $p(\bfalpha | \bfy, \bftheta^{(i)}, \bfr^{(i)})$.
\item Sample $\bftheta^{(i+0.5)}$ from $p(\bftheta|\bfy, \bfalpha^{(i)}, \bfr^{(i)})$.
\end{enumerate}
\item Move to the NCP (set $a=1$, $\bfw=\bone$):
\begin{enumerate}
\item Compute $\tilde{\bfalpha}^{(i+1)} = (\bfalpha^{(i)} - \mu^{(i+0.5)})/ \sigma_\eta^{(i+0.5)}$.
\item Sample $\bftheta^{(i+1)}$ from $p(\bftheta|\bfy, \tilde{\bfalpha}^{(i+1)} , \bfr^{(i)})$.
\end{enumerate}
\item Move back to the CP (set $a=0$, $\bfw=\bfzero$):
\begin{enumerate}
\item Compute $\bfalpha^{(i+1)} = \mu^{(i+1)} + \sigma_\eta^{(i+1)} \tilde{\bfalpha}^{(i+1)}$.
\item Sample $\bfr^{(i+1)}$ from $p(\bfr|\bfy, \bfalpha^{(i+1)}, \bftheta^{(i+1)})$.
\end{enumerate}
\end{enumerate}
Return the draws $\{\bftheta^{(i)}, \bfr^{(i)} | i=1, \dots, M\}$.
\end{flushleft}
\end{normalsize}
\end{algorithm}

\subsection{Block-specific reparametrization strategy}
To optimize the convergence rate of the EM algorithm for the AR(1) plus noise model, we employed Algorithm 3 so that the optimal parametrization for each parameter can be used while conditioning on the rest. As the convergence rate of the Gibbs sampler is equal to the corresponding EM algorithm when the target density is well approximated by a Gaussian \citep{Sahu1999}, tailoring the data augmentation scheme to each block in a Gibbs sampler may yield similar improvements in convergence rates as in the EM algorithm. This is important as the simulation efficiency of a parametrization may be good for certain parameters but poor for others. Hence we propose a {\em block-specific reparametrization} (BSR) strategy that allows one to adopt a different parametrization for sampling from each block in a multi-block MCMC sampler. This yields flexibility in choosing a parametrization optimal for sampling from each block. 

First, we outline the steps of an MCMC sampler adopting the BSR strategy and verify that the sampler will converge to the target density, before discussing the reparametrizations for each block. For simplicity, we limit the discussion to two blocks although BSR can be applied to multiple blocks. Suppose the parameters are split into two blocks, $\bftheta = (\bftheta_1^T, \bftheta_2^T)^T$, and consider two augmentation schemes with missing data $\bfalpha_1$ and $\bfalpha_2$ such that their joint distribution $p(\bfalpha_1, \bfalpha_2|\bftheta, \bfy)$ is well defined. This distribution may be degenerate, e.g. if $\bfalpha_2$ is a deterministic function of $\bfalpha_1$. Consider an MCMC algorithm that initializes $\bftheta$ and then performs the following steps at the $i$th iteration for $i=0, \dots, M-1$ (explicit conditioning on $\bfy$ has been omitted for simplicity):
\begin{enumerate}
\item Sample $\bfalpha_1^{(i+1)}$ from $p(\bfalpha_1|\bftheta^{(i)})$.
\item Sample $\bftheta_1^{(i+1)}$ from $p(\bftheta_1 |\bfalpha_1^{(i+1)}, \bftheta_2^{(i)})$.
\item Sample $\bfalpha_2^{(i+1)}$ from $p(\bfalpha_2|\bfalpha_1^{(i+1)}, \bftheta_1^{(i+1)}, \bftheta_2^{(i)})$. 
\item Sample $\bftheta_2^{(i+1)}$ from $p(\bftheta_2 | \bfalpha_2^{(i+1)}, \bftheta_1^{(i+1)})$.
\end{enumerate}
The draws $\{\bftheta^{(i)}|i=1, \dots, M\}$ are retained. The transition density of the above chain is 
\begin{multline*}
k(\bftheta'|\bftheta) = \int p(\bftheta_2' | \bfalpha_2, \bftheta_1') p(\bfalpha_2|  \bfalpha_1, \bftheta_1', \bftheta_2) p(\bftheta_1' |\bfalpha_1, \bftheta_2)\\
\times p(\bfalpha_1| \bftheta) d\bfalpha_1 d\bfalpha_2. 
\end{multline*}
Let $p(\bftheta)$ denote the stationary density. We have
\begin{align*}
& \int k(\bftheta'|\bftheta) p(\bftheta) d\bftheta \\
&= \int p(\bftheta_2' | \bfalpha_2, \bftheta_1') p(\bfalpha_2|  \bfalpha_1, \bftheta_1', \bftheta_2) p(\bftheta_1' |\bfalpha_1, \bftheta_2) \\
& \quad  \times \left( \int p(\bfalpha_1, \bftheta) d\bftheta_1 \right) d\bftheta_2 d\bfalpha_1 d\bfalpha_2 \\
&=\int \negmedspace p(\bftheta_2' | \bfalpha_2, \bftheta_1') p(\bfalpha_2|  \bfalpha_1, \bftheta_1', \bftheta_2) p(\bftheta_1', \bfalpha_1, \bftheta_2) d\bftheta_2 d\bfalpha_1 d\bfalpha_2 \\
&= \int p(\bftheta_2' | \bfalpha_2, \bftheta_1') \left( \int p(\bfalpha_2,  \bfalpha_1, \bftheta_1', \bftheta_2) d\bftheta_2 d\bfalpha_1 \right) d\bfalpha_2 \\
&= \int p(\bftheta_2' , \bfalpha_2, \bftheta_1')  d\bfalpha_2 
= p(\bftheta).
\end{align*}
Hence the stationary density is preserved by BSR. Moreover, if $\bfalpha_2$ is a deterministic function ($f$) of $\bfalpha_1$ given $\bftheta$, then we can compute $\bfalpha_2^{(i+1)} = f(\bfalpha_1^{(i+1)}| \bftheta_1^{(i+1)}, \bftheta_2^{(i)})$ without sampling $\bfalpha_2$ in step 3. This feature is important in models where $\bfalpha$ is high-dimensional (e.g. state space models) and sampling of $\bfalpha$ is time-consuming.

Next, we discuss how BSR can be applied. Conditional on $\bfr$, $\bfalpha$ and $\bftheta$, $\tilde{\bfy}  - \bfm_\bfr \sim \N(\sigma_\eta^a \bfalpha + \mu \bfw, D_{\bfr})$. Comparing this distribution with \eqref{mod_G}, the differences are, $\bfy$ is replaced by $\tilde{\bfy}  - \bfm_\bfr$, and $\sigma_\epsilon^2I$ is replaced by $D_\bfr^{-1}$. In Section \ref{sec_EMBayesian}, we have shown that when using the EM algorithm to find the posterior mode, the optimal parametrizations for $\mu$ and $\sigma_\eta^2$ remain unchanged and are independent of the priors. It can also be verified that for the AR(1) plus noise model, if $\sigma_\epsilon^2$ is not constant across time, then it suffices to replace $\sigma_\epsilon^{-2}I$ in the expressions of $a$ and $\bfw$ by a diagonal matrix filled with the variances at different time points (see Supplement \citep{Tan2022}). Applying these results to the SV and stochastic conditional duration models using a mixture of normals approximation, we divide the parameters into two blocks, $\mu$ and $\{\sigma_\eta^2, \phi, \bfr\}$, and use the optimal schemes for $\mu$ and $\sigma_\eta^2$ respectively in these two blocks. Based on the EM algorithm, $I_{\phi, \phi}$ is independent of $\{a, \bfw\}$. Hence it is not possible to choose $\{a, \bfw\}$ to optimize the convergence of $\phi$, and we just sample it along with $\sigma_\eta^2$ in the second scheme. For convenience, $\bfr$ is sampled in the second scheme as well. Conditional on $\bfr$, the values of $\{a, \bfw\}$ in the first scheme are 
\begin{equation} \label{scheme1}
a_1 = 0, \quad \bar{\bfw}_1 = V_0 D_\bfr^{-1} \bone,
\end{equation}
where $V_0 = (D_\bfr^{-1} + \sigma_\eta^{-2} \Lambda)^{-1}$ and we have chosen $a_1 = 0$ for simplicity. For the second scheme,
\begin{equation} \label{scheme2}
a_2 = 1 - \frac{ \tr(D_\bfr^{-1}V_0 )}{n}, \;\; \bar{\bfw}_2 = \frac{1}{\mu} \left( \frac{2V_0 \Lambda}{a_2 \sigma_\eta^2} - I \right)\bfm_{0\bone},
\end{equation}
where $\bfm_{0\bone} = V_0 D_{\bfr}^{-1} (\tilde{\bfy}  - \bfm_\bfr - \mu \bone)$. Since $\bfalpha_s = (\bfx - \mu \bfw_s)/\sigma_\eta^{a_s}$ for $s = 1, 2$, we can set $\bfalpha_2 = \{\bfalpha_1 + \mu (\bar{\bfw}_2 - \bar{\bfw}_1)\}/\sigma_\eta^{a_2}$ when switching from the first to second scheme. The steps in the MCMC sampler using BSR are outlined below. 

\begin{algorithm}
\renewcommand{\thealgorithm}{}
\floatname{algorithm}{}
\caption{BSR Algorithm}
\begin{normalsize}
\begin{flushleft}
Initialize $\bftheta^{(0)}$ and $\bfr^{(0)}$. For $i=0, \dots, M-1$,
\begin{enumerate}
\item First scheme (set $a=a_1$, $\bfw=\bfw_1$):
\begin{enumerate}
\item Sample $\bfalpha_1^{(i+1)} \sim p(\bfalpha | \bfy, \bftheta_1^{(i)}, \bftheta_2^{(i)})$.
\item Sample $\mu^{(i+1)} \sim p(\mu | \bfy, \bfalpha_1^{(i+1)},{\sigma_\eta^2}^{(i)}, \phi^{(i)}, \bfr^{(i)})$.
\end{enumerate}
\item Switch to second scheme (set $a=a_2$, $\bfw=\bfw_2$):
\begin{enumerate}
\item Compute $\bfalpha^{(i+1)}_2 = \{\bfalpha_1^{(i+1)} + \mu^{(i+1)} (\bar{\bfw}_2 - \bar{\bfw}_1)\}/ {\sigma_\eta^{(i)}}^{a_2}$.
\item Sample $\sigma_\eta^{2(i+1)} \sim p(\sigma_\eta^2|\bfy, \bfalpha_2^{(i+1)} ,\mu^{(i+1)},\phi^{(i)}, \bfr^{(i)})$.
\item Sample $\phi^{(i+1)} \sim p(\phi | \bfy, \bfalpha_2^{(i+1)} ,\mu^{(i+1)}, \sigma_\eta^{2(i+1)} , \bfr^{(i)})$.
\item Sample $\bfr^{(i+1)} \sim p(\bfr | \bfy, \bfalpha_2^{(i+1)}, \bftheta^{(i+1)})$.
\end{enumerate}
\end{enumerate}
Return the draws $\{\bftheta^{(i)}, \bfr^{(i)} | i=1, \dots, M\}$.
\end{flushleft}
\end{normalsize}
\end{algorithm}

The values of $\{a, \bfw\}$ in \eqref{scheme1} and \eqref{scheme2} depend on the true values of $\bftheta$ and $\bfr$, which are unknown. Hence we propose the following procedure to overcome this difficulty. To initialize $\bftheta^{(0)}$, $\bar{\bfw}_1$, $\bar{\bfw}_2$ and $a_2$, we run Algorithm 3 on transformed responses $\tilde{y}_t$, by approximating the distribution of $\log \epsilon_t^2$ in the SV model by a single Gaussian $\N(-1.27, 4.93)$, and $\log \epsilon_t$ in the stochastic conditional duration model by $\N(-0.58, 1.64)$. The maximum likelihood estimate $\hat{\bftheta}$ is used as initial estimate $\bftheta^{(0)}$, while $\bfr^{(0)}$ is simulated randomly from its prior distribution. This initialization is also used for the baseline MCMC algorithm and ASIS. We use $\hat{\bftheta}$ to compute $\{\bar{\bfw}_1$, $\bar{\bfw}_2$, $a_2\}$ by assuming $\sigma_\epsilon^2$ is constant across $t$ (equal to 4.93 and 1.64 for the SV and stochastic conditional duration models respectively). These values of $\{\bar{\bfw}_1$, $\bar{\bfw}_2$, $a_2\}$ are held fixed for the first two-thirds of the burn-in period. We estimate the posterior means of $\bftheta$ and $\bfr$ using draws after the first one-third till the first two-thirds of burn-in, and these are used to update the values of $\{\bar{\bfw}_1$, $\bar{\bfw}_2$, $a_2\}$ after the first two-thirds of burn-in. For the remainder of the sampling period, $\{\bar{\bfw}_1$, $\bar{\bfw}_2$, $a_2\}$ are held fixed at these values. Our experiments indicate that BSR works well even with these rather crude estimates of $\{\bar{\bfw}_1$, $\bar{\bfw}_2$, $a_2\}$.

For the SV model, it may be of interest to estimate the volatility at time $t$, $\exp(x_t/2)$. Since $p(\bfx | \bfy) = \int p(\bfx | \bftheta, \bfr, \bfy)p(\bftheta, \bfr | \bfy) \, d\bftheta \, d\bfr$, we can draw $\bfx$ by sampling from $p(\bfx | \bftheta, \bfr, \bfy)$, which is $\N(C_\bfalpha^{-1} \bfc_\bfalpha, C_\bfalpha^{-1} )$ under the CP, for each draw of $(\bftheta, \bfr)$ from $p(\bftheta, \bfr | \bfy)$. The MCMC samplers presented are based on a mixture of normals approximation of the $\log \chi_1^2$ and log exp(1) distributions, and hence the samples are not strictly drawn from the true posterior. This minor approximation error can be corrected by reweighting the draws \citep[see][]{Kim1998}. As the mixture approximation used is very accurate, \cite{Kim1998} and \cite{Omori2007} report that the effect of reweighting is small and estimates of posterior means are not significantly different statistically. There is however, some improvement in the Monte Carlo standard errors. We focus on comparison of different MCMC samplers, and do not perform reweighting in our experiments. \cite{Hosszejni2021} present two R packages, {\tt stochvol} and {\tt factorstochvol} for Bayesian estimation of univariate SV models (with linear means, leverage or heavy tails) and multivariate factor SV models. In {\tt stochvol}, sampling of the latent states also relies on a Gaussian mixture approximation and correction of the approximation error, while not implemented by default, can be enabled simply through an argument.

\section{Experimental results}
We compare BSR with the baseline MCMC algorithm using the CP or NCP and ASIS, using simulated data and several benchmark real datasets. Following \cite{Kastner2014}, we use the {\em inefficiency factor} to assess simulation efficiency, which is estimated as the ratio of the variance of the sample mean from the MCMC sampler, to the variance of the sample mean based on independent draws. It can be computed using the R package {\tt coda} \citep{Plummer2006} as number of draws/effective sample size. Smaller inefficiency factors indicate better mixing and lower autocorrelation among draws. Each MCMC sampler is run for 30,000 iterations, of which the first 10,000 are discarded as burn-in. All code is written in Julia 1.6.1 \citep{Bezanson2017} and R \citep{R2020}. The experiment on consumer price index is run on an Intel Xeon Gold 5222 CPU @3.80GHz while all other experiments are run on an Intel Core i9-9900K CPU @3.60 GHz.

\subsection{Simulations}
First, we present results on data simulated from the SV and stochastic conditional duration models. For each model, we consider $\mu = -10$, $\phi \in \{0.4, 0.7, 0.95\}$ and $\sigma_\eta^2 \in \{0.05, 0.5, 5\}$, which yields nine parameter settings. In each setting,  $n=3000$ observations are generated and each experiment is repeated ten times. For the priors, following \cite{Kastner2014}, we set $b_\mu = -10$, $B_\mu = 10$, $B_\sigma = {\sigma_\eta^2}_{\text{true}}$, $b_\phi = 40$ and $B_\phi = 80/(1+\phi_{\text{true}}) - 40$ (this represents a prior on $\phi$ with mean equal to $\phi_{\text{true}}$). 
\begin{table*}[tb!]
\caption{SV model: Inefficiency factors averaged over ten repetitions for simulated data.}
\label{tab_SVsim}
\begin{tabular}{@{}llrrrrrrrrrrrr@{}}
\hline
&& \multicolumn{4}{c}{$\mu$} & \multicolumn{4}{c}{$\sigma_\eta^2$} & \multicolumn{4}{c}{$\phi$}\\  
$\phi$ & $\sigma_\eta^2$ & NCP & CP & ASIS & BSR & NCP & CP & ASIS & BSR & NCP & CP & ASIS & BSR\\ \hline
\multirow{3}{*}{0.4} & 0.05 & 18 & 79 & 17 & 20 & 63 & 710 & 63 & 66 & 109 & 111 & 116 & 112 \\
& 0.5 & 17 & 16 & 9 & 9 & 42 & 68 & 32 & 31 & 51 & 54 & 52 & 48 \\
& 5 & 67 & 4 & 4 & 4 & 40 & 17 & 15 & 13 & 16 & 13 & 13 & 12 \\  \hline
\multirow{3}{*}{0.7} & 0.05 & 14 & 26 & 11 & 13 & 72 & 471 & 73 & 68 & 110 & 128 & 114 & 112 \\
& 0.5 & 32 & 7 & 5 & 5 & 54 & 64 & 37 & 31 & 43 & 43 & 34 & 28 \\
& 5 & 226 & 2 & 2 & 2 & 70 & 17 & 16 & 13 & 19 & 9 & 9 & 8 \\ \hline
\multirow{3}{*}{0.95} & 0.05 & 99 & 2 & 2 & 2 & 69 & 136 & 51 & 30 & 48 & 75 & 36 & 21 \\
& 0.5 & 823 & 1 & 1 & 1 & 114 & 28 & 24 & 13 & 33 & 9 & 8 & 5 \\
& 5 & 5994 & 1 & 1 & 1 & 330 & 12 & 12 & 9 & 13 & 3 & 2 & 2 \\  \hline
\end{tabular}
\end{table*}

The inefficiency factors averaged over ten repetitions for each parameter setting are shown in Tables \ref{tab_SVsim} and \ref{tab_SCDsim} for the SV and stochastic conditional duration models respectively. The BSR strategy worked very well for both models, and the inefficiency factors are (almost) always better than the worst of NCP and CP. It is even able to yield improvements over ASIS in many cases. We also observe that in a given setting, neither the CP or NCP may be superior to the other in terms of simulation efficiency across all parameters. Hence BSR can yield benefits beyond the CP and NCP by tailoring the parametrization to each block. We have also performed experiments for negative values of $\phi$ and the results (not shown) are similar to positive values of $\phi$ (much like a reflection about $\phi=0$).

\begin{table*}[tb!]
\caption{Stochastic conditional duration model: Inefficiency factors averaged over ten repetitions for simulated data.}
\label{tab_SCDsim}
\begin{tabular}{@{}llrrrrrrrrrrrr@{}}
\hline
&& \multicolumn{4}{c}{$\mu$} & \multicolumn{4}{c}{$\sigma_\eta^2$} & \multicolumn{4}{c}{$\phi$}\\  
$\phi$ & $\sigma_\eta^2$ & NCP & CP & ASIS & BSR & NCP & CP & ASIS & BSR & NCP & CP & ASIS & BSR\\ \hline
\multirow{3}{*}{0.4} & 0.05 & 20 & 35 & 19 & 18 & 94 & 402 & 87 & 87 & 112 & 110 & 113 & 112 \\
 & 0.5 & 34 & 10 & 8 & 8 & 51 & 49 & 34 & 33 & 42 & 36 & 36 & 33 \\
 & 5 & 197 & 2 & 2 & 2 & 70 & 9 & 8 & 7 & 8 & 6 & 6 & 6 \\  \hline
\multirow{3}{*}{0.7} & 0.05 & 21 & 16 & 11 & 11 & 111 & 302 & 92 & 96 & 110 & 127 & 106 & 105 \\
 & 0.5 & 92 & 4 & 4 & 3 & 76 & 43 & 34 & 29 & 44 & 27 & 25 & 21 \\
 & 5 & 700 & 1 & 1 & 1 & 130 & 10 & 10 & 8 & 11 & 5 & 5 & 4 \\  \hline
\multirow{3}{*}{0.95}  & 0.05 & 322 & 2 & 1 & 1 & 124 & 85 & 54 & 32 & 62 & 39 & 30 & 18 \\
 & 0.5 & 2427 & 1 & 1 & 1 & 239 & 23 & 21 & 16 & 36 & 6 & 6 & 5 \\
 & 5 & 9765 & 1 & 1 & 1 & 731 & 9 & 9 & 7 & 7 & 2 & 2 & 2 \\  \hline
\end{tabular}
\end{table*}

The average runtime of each MCMC sampler is given in Table \ref{tab_simtimes}. The amount of computation required by NCP and CP are about the same, while BSR requires slightly more computation (for updating the working parameters and switching to the second scheme in each iteration). ASIS requires the most computation due to switching of schemes and multiple simulation of certain parameters at each iteration, but the increase in runtime of BSR and ASIS compared to CP and NCP are not significant. 
\begin{table}[htb!]
\centering
\begin{tabular}{lrrrr}
\hline
& NCP & CP & ASIS & BSR \\   \hline
SV & 238 & 238 & 245 & 241 \\
Stochastic conditional duration & 202 & 202 & 210 & 206 \\    \hline
\end{tabular}
\caption{Runtimes averaged over all parameter settings and repetitions for simulated data.\label{tab_simtimes}}
\end{table}

\subsection{Real data}
We consider three datasets on exchange rates, US consumer price index and IBM transactions. We set the priors as $b_\phi = 20$, $B_\phi = 1.5$ \citep[following][]{Kim1998}, $B_\sigma = 0.5$ and $B_\mu = 100$. For the exchange rates, $b_\mu = -10$, while for the consumer price index and IBM transactions, $b_\mu = 0$. 

The exchange rate data is downloaded from the European Central Bank's Statistical Data Warehouse. \cite{Kastner2014} analyzed $n=3140$ observations of the daily Euro exchange rates of 23 currencies from 3 Jan 2000 to 4 Apr 2012 using the SV model. Of these, we select the Danish krone, New Zealand dollar and US dollar, which are representative of currencies with the lowest, moderate and highest persistence respectively. Let $r_t$ be the exchange rate at time $t$. The response $y_t = \log (r_t/r_{t-1}) - \sum_{t=2}^n \log (r_t/r_{t-1})/(n-1)$ is the mean-corrected differenced log return. Using MCMC samplers to fit the SV model to the exchange rates, we obtain the inefficiency factors in Table \ref{tab_IFex}. The simulation efficiency of CP is higher than NCP for $\mu$ but lower than NCP for $\{\sigma_\eta^2, \phi\}$. Hence, neither is a good option. ASIS achieves much better results but BSR is the clear winner, with the lowest inefficiency factors across all currencies and parameters. 
\begin{table*}
\caption{Inefficiency factors from SV model for exchange rates.}
\label{tab_IFex}
\begin{tabular}{@{}lrrrrrrrrrrrr@{}}
\hline & \multicolumn{4}{c}{$\mu$} & \multicolumn{4}{c}{$\sigma_\eta^2$} & \multicolumn{4}{c}{$\phi$}\\  
& NCP & CP & ASIS & BSR & NCP & CP & ASIS & BSR & NCP & CP & ASIS & BSR\\ \hline
Danish & 91 & 3 & 3 & 3 & 97 & 143 & 64 & 43 & 71 & 87 & 52 & 32 \\
NZ & 110 & 3 & 2 & 2 & 189 & 461 & 129 & 72 & 147 & 345 & 113 & 58 \\
US & 455 & 1 & 1 & 1 & 123 & 354 & 78 & 28 & 84 & 114 & 39 & 14 \\ \hline
\end{tabular}
\end{table*}

\begin{figure}[tb!]
\centering 
\includegraphics[width=0.48\textwidth]{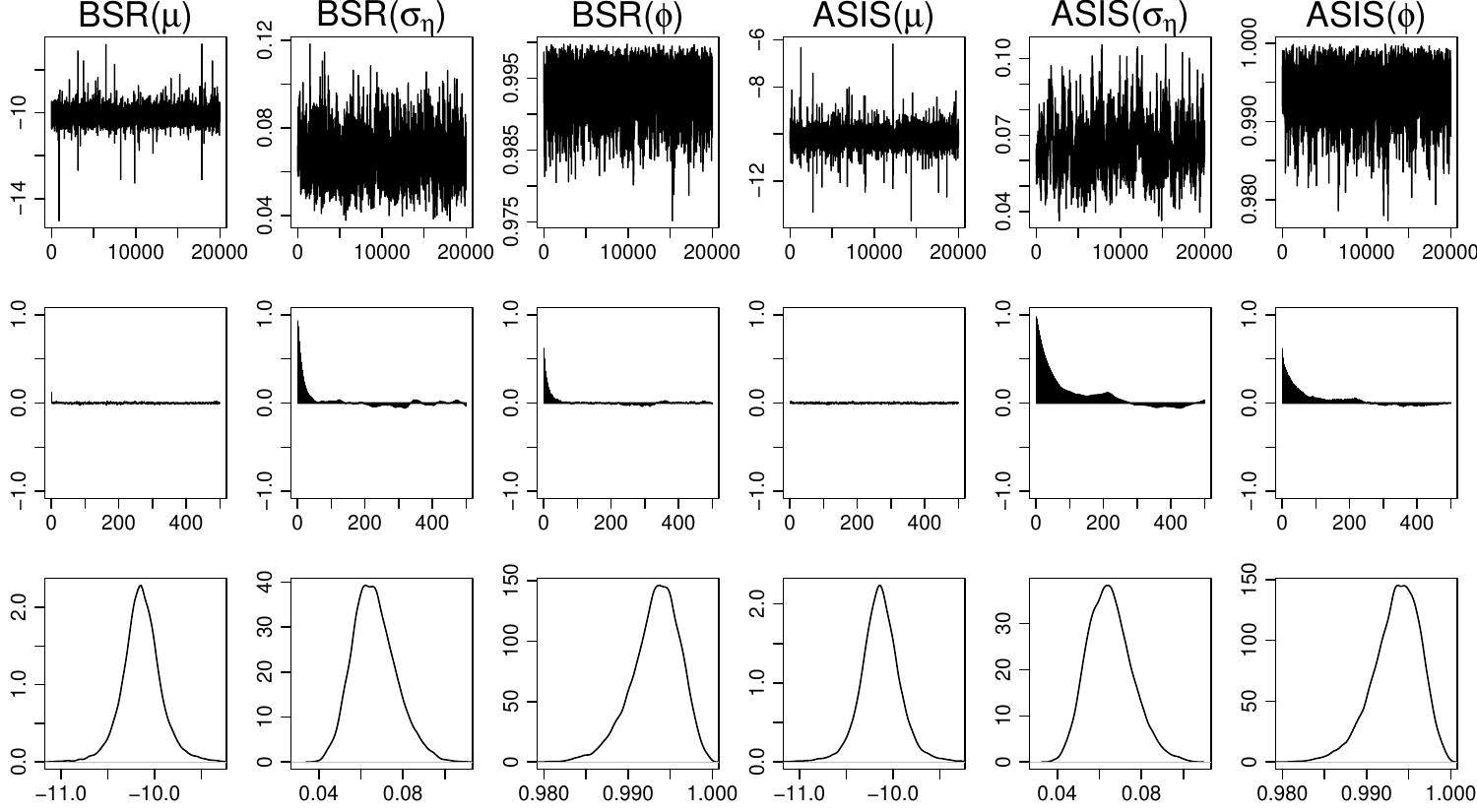}
\caption{US dollar exchange rates. First row: Trace plots. Second row: Sample autocorrelation functions. Third row: Estimated marginal posterior densities. \label{Fig_USplots}}
\end{figure}

The parameter estimates and runtime of each sampler are given in the Supplement \citep{Tan2022}. NCP is the fastest, followed by CP, BSR and ASIS, but the runtime of all samplers are practically the same. Figure \ref{Fig_USplots} shows the trace plots, sample autocorrelation function and marginal posterior densities of $\{\mu, \sigma_\eta, \phi\}$ for the US dollar. There is better mixing in BSR's $\sigma_\eta$ chain than ASIS's, and the sample autocorrelation for $\{\sigma_\eta, \phi\}$ also decay faster. These observations are consistent with the inefficiency factors. Figure \ref{Fig_USvol} shows the 0.05, 0.5 and 0.95 quantiles of volatilities estimated using 20000 samples from BSR, and a similar pattern with $|y_t|$ can be detected.

\begin{figure}[htb!]
\centering 
\includegraphics[width=0.48\textwidth]{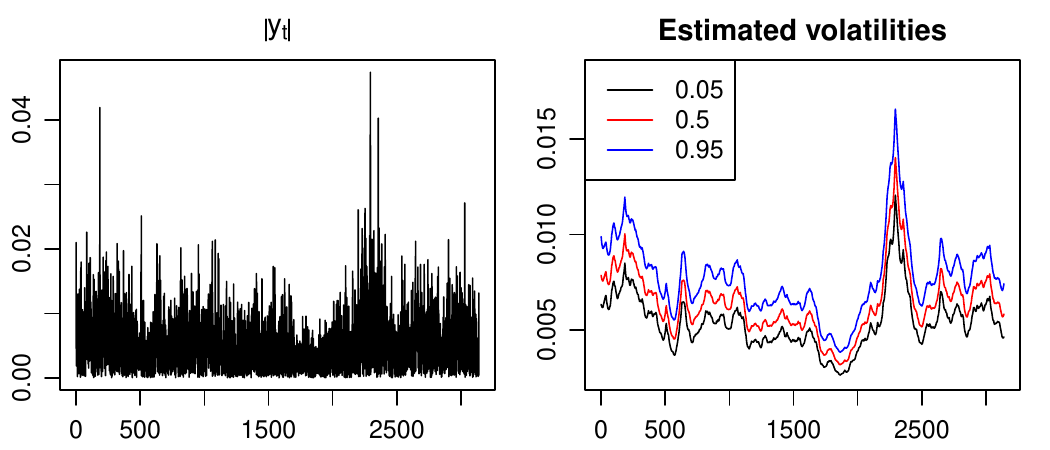}
\caption{US dollar exchange rates. Plot of $|y_t|$ and volatilities (0.05, 0.5 and 0.95 quantiles) estimated using BSR. \label{Fig_USvol}}
\end{figure}

The US consumer price index inflation rate from the 2nd quarter of 1947 to the 3rd quarter of 2011 \citep{Kroese2014} comprises of $n = 258$ observations and a SV model is fitted to the mean-corrected rates. Besides the baseline MCMC samplers, ASIS and BSR, we also provide comparisons with NUTS (Hamiltonian Monte Carlo No U-turn sampler in Stan) and PMMH (particle marginal Metropolis-Hastings algorithm in R package {\tt nimble}). No tuning is required for NUTS. Following \cite{Michaud2020}, PMMH is run using the auxiliary particle filter \citep{Pitt1999b} and a block random walk sampler with a normal proposal. It is rather sensitive to initialization, the number of particles $m$ and the scale of proposal, which must be set by the user. We set the starting values as the posterior means, and standard deviations of the proposal as $h$ times the posterior standard deviations computed using BSR. We consider $h \in \{0.6, 0.7, \dots, 1.1\}$ and $m \in \{50, 100, 150, 200\}$. For this dataset, PMMH tends to get stuck for many iterations and mixing is poor. We report results for $m=150$ and $h=0.7$, which has the best inefficiency factors. The parameter estimates, inefficiency factors and runtimes are given in Table \ref{tab_parcpi}. 

The simulation efficiency of CP is better than NCP, and ASIS improves upon CP, while BSR improves upon ASIS. NUTS also provided improvements compared to the respective baseline samplers. NUTS and PMMH are generally more computationally intensive than standard MCMC samplers. NUTS uses automatic differentiation to compute gradients of the Hamiltonian at each iteration, which enables it to explore the posterior distribution of highly correlated parameters more thoroughly. On the other hand, a particle filter has to be run at each iteration for PMMH and computation costs scale linearly with the number of particles. While PMMH is valid even for small number of particles, it has a risk of getting stuck due to the variance in the likelihood approximation. The consumer price index inflation exhibits high persistence, and the volatilities estimated using 20000 samples from BSR in Figure \ref{Fig_CPIvol} captures the high inflation in the late 1970s to early 1980s. After that, inflation remains low and stable until the 2008 financial crisis.
\begin{table*}[tb!]
\caption{Parameter estimates, inefficiency factors and runtime (seconds) from SV model for consumer price index.}
\label{tab_parcpi}
\begin{tabular}{@{}llrrrrrrr@{}}
\hline
&& NCP & CP & ASIS & BSR & NUTS (NCP) & NUTS (CP) & PMMH \\   \hline
\multirow{3}{*}{\parbox[t]{1.5cm}{Parameter estimates}} & $\mu$ & 1.56 $\pm$ 0.48 & 1.61 $\pm$ 0.48 & 1.62 $\pm$ 0.47 & 1.61 $\pm$ 0.47 & 1.61 $\pm$ 0.45 & 1.61 $\pm$ 0.55 & 1.66 $\pm$ 0.42 \\
& $\sigma_\eta$ &  0.60 $\pm$ 0.10 & 0.60 $\pm$ 0.11 & 0.59 $\pm$ 0.10 & 0.60 $\pm$ 0.10 & 0.61 $\pm$ 0.11 & 0.60 $\pm$ 0.11 & 0.56 $\pm$ 0.09 \\
& $\phi$ &  0.90 $\pm$ 0.04 & 0.90 $\pm$ 0.04 & 0.90 $\pm$ 0.04 & 0.90 $\pm$ 0.04 &  0.90 $\pm$ 0.04 &  0.90 $\pm$ 0.04  & 0.91 $\pm$ 0.03 \\ \hline 
\multirow{3}{*}{\parbox[t]{1.5cm}{Inefficiency factor}} & $\mu$ & 260 & 2 & 1 & 1 & 24 & 1 &72\\
& $\sigma_\eta$ &  50 & 37 & 26 & 19 & 8 & 24 &118 \\
& $\phi$ &  37 & 15 & 13 & 8 & 7 & 10 & 92 \\   \hline 
Runtime && 14 & 14 & 15 & 16 & 113 & 138 & 749 \\ \hline
\end{tabular}
\end{table*}

\begin{figure}[tb!]
\centering 
\includegraphics[width=0.48\textwidth]{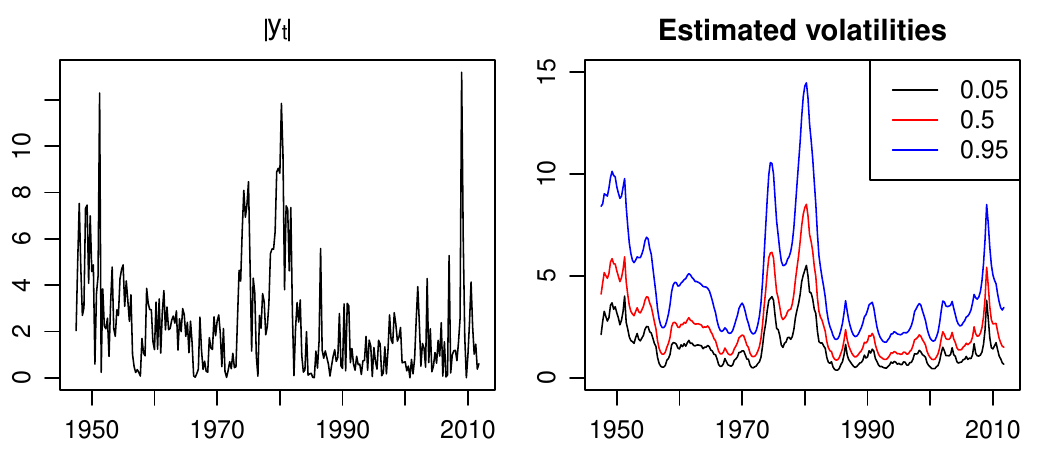}
\caption{Consumer price index. Plot of $|y_t|$ and volatilities (0.05, 0.5 and 0.95 quantiles) estimated using BSR. \label{Fig_CPIvol}}
\end{figure}

\begin{figure}
\centering 
\includegraphics[width=0.48\textwidth]{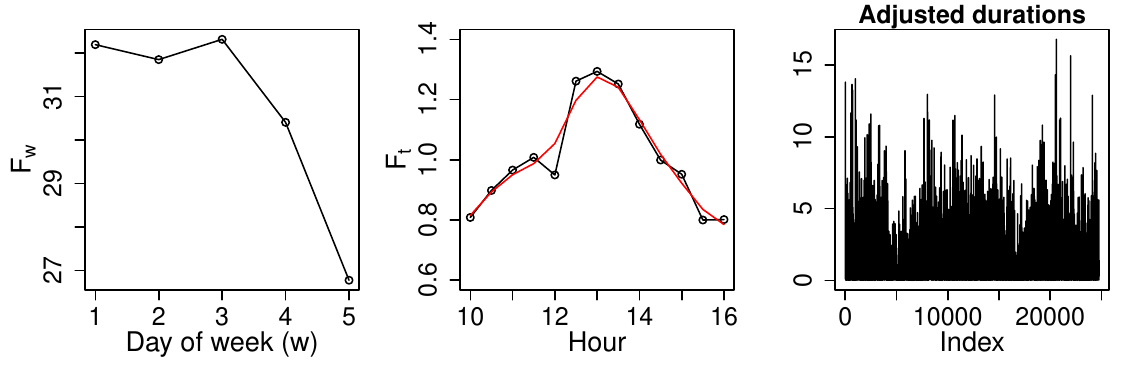}
\caption{IBM data. Left: Sample average duration $F_w$ for day $w$ of the week. Middle: Sample average duration at each knot (black) and smoothed value (red). Right: Adjusted durations. \label{Fig_seasonality}}
\end{figure}
The benchmark IBM transactions data is extracted from the TORQ data built by Joel Hasbrouck and NYSE \citep{Tsay2010}. It has the time of occurrence (measured in seconds from midnight) of each transaction from 1 Nov 1990 to 31 Jan 1991, and we analyze the duration between consecutive transactions, $d_i$, using the stochastic conditional duration model. Only transactions within normal trading hours (9.30 a.m to 4 p.m.) from 1 Nov to 21 Dec are considered to avoid holiday effects, and 23 Nov is excluded due to a halt in trading. Due to the unusually high concentration of short durations immediately after opening, \cite{Engle1998} recommend removing transactions before 10 a.m. and taking the first duration as the average duration of transactions from 9.50 a.m. to 10 a.m. After this procedure and the removal of zero durations, we have $n=24765$ observations. Following \cite{Feng2004}, we remove seasonality due to day-of-week effect (trading increases in frequency towards end of week) and time-of-day effect (trading is more frequent at the start and end, than in middle of the day) before analysis. To remove day-of-week effect, we compute the average duration $F_w$ for each weekday $w$ and set $\tilde{d}_i = d_i/F_w$ if $i$ falls in the weekday $w$. For time-of-week effect, we construct 13 knots at 30 minutes interval from 10 a.m. to 4 p.m., $k_1 =36000, \dots, k_{13} = 57600$. The value at knot $k_s$ is the average of durations whose transaction time falls in $(k_s-900, k_s +900]$. The smoothed value $F_s$ is computed using piecewise cubic splines via the {\tt smooth.spline} function in R, and the adjusted duration is $\tilde{d_i}/F_s$ if transaction time of $d_i$ falls in $(k_s-900, k_s +900]$. See Figure \ref{Fig_seasonality} for illustration.

Fitting the stochastic conditional duration model to the adjusted durations, the parameter estimates and inefficiency factors are shown in Table \ref{tab_paribm}. The simulation efficiency of the CP is better than NCP for $\mu$ and $\phi$, but worse for $\sigma_\eta$. ASIS improves the simulation efficiency for $\sigma_\eta$ and $\phi$ significantly but BSR produces even better results. The trace plots in Figure \ref{Fig_IBMplots} also indicate that there is better mixing in the MCMC chains of  $\sigma_\eta$ and $\phi$ for BSR than ASIS (sample autocorrelations also decay faster).
\begin{table*}[tb!]
\caption{Parameter estimates and inefficiency factors from stochastic conditional duration model for IBM data.}
\label{tab_paribm}
\begin{tabular}{@{}llrrrr@{}}
\hline
&& NCP & CP & ASIS & BSR \\   \hline
\multirow{3}{*}{\parbox[t]{1.4cm}{Parameter estimates}} & $\mu$ & $-0.112$ $\pm$ 0.022 & $-0.119$ $\pm$ 0.023 & $-0.119$ $\pm$ 0.023 & $-0.119$ $\pm$ 0.023 \\
& $\sigma_\eta$ &    0.149 $\pm$ 0.008 & 0.150 $\pm$ 0.008 & 0.150 $\pm$ 0.008 & 0.150 $\pm$ 0.008 \\
& $\phi$ &  0.957 $\pm$ 0.004 & 0.956 $\pm$ 0.004 & 0.957 $\pm$ 0.004 & 0.956 $\pm$ 0.004 \\ \hline 
\multirow{3}{*}{\parbox[t]{1.4cm}{Inefficiency factor}} & $\mu$ &   132 & 3 & 2 & 2 \\
& $\sigma_\eta$ &    259 & 311 & 135 & 88 \\
& $\phi$ &    207 & 186 & 107 & 68 \\    \hline
\end{tabular}
\end{table*}

\begin{figure}[tb!]
\centering 
\includegraphics[width=0.48\textwidth]{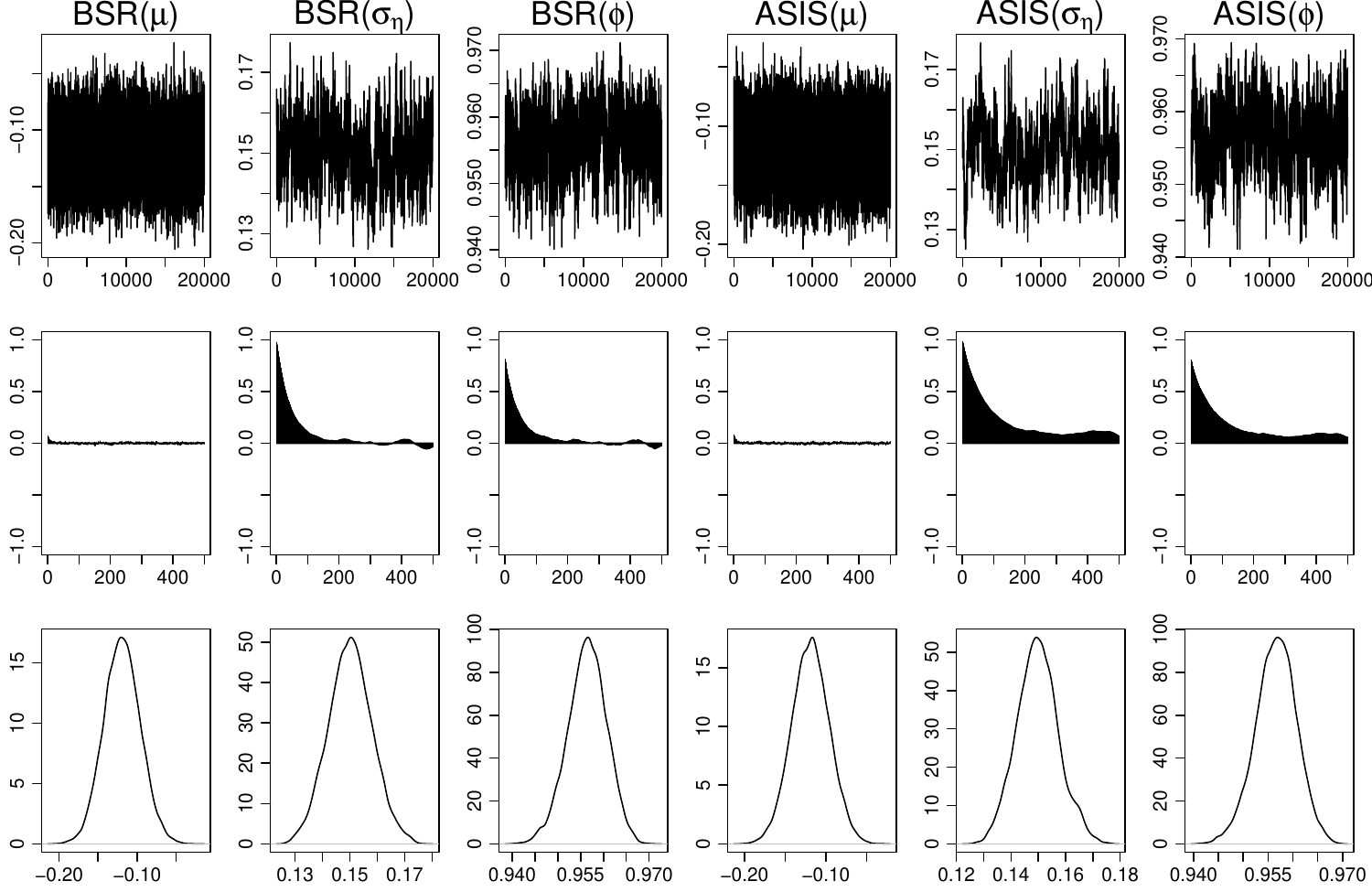}
\caption{Left: IBM data. First row: Trace plots. Second row: Sample autocorrelation functions up to lag 500. Third row: Estimated marginal posterior densities. \label{Fig_IBMplots}}
\end{figure}

\section{Conclusion}
This article investigates efficient data augmentation techniques for some classes of state space models by introducing working parameters $\{a, \bfw\}$ for rescaling and recentering of the latent states. First, we focus on maximum likelihood estimation of the AR(1) plus noise model via the EM algorithm and derived optimal values of $\{a, \bfw\}$ by minimizing the fraction of missing information. An alternating expectation-conditional maximization algorithm that allows the optimal parametrization for each parameter to be used while conditioning on others is then designed. Experiments indicate that this algorithm is often able to outperform the CP and NCP in rate of convergence and maximized likelihood. The proposed data augmentation scheme is then utilized to design efficient MCMC algorithms for Bayesian inference of the SV and stochastic conditional duration models through a mixture of normals approximation. A BSR strategy which allows the data augmentation scheme to be tailored to each block is proposed for a MCMC sampler. Experimental results indicate that BSR works well, surpassing ASIS in all real applications.

These encouraging results motivate the application of BSR to other classes of state space models where the parametrization greatly affects the convergence of the MCMC sampler. Consider the multivariate factor SV model of \cite{Pitt1999c}, where each of the factor terms and idiosyncratic error terms follow independent SV processes. As the standard full conditional sampler converges slowly, \cite{Kastner2017} employ ASIS to improve convergence. We can also apply BSR to sample the parameters of each SV process via the mixture of normals approximation, by allowing each SV process to have its own set of working parameters. Optimal values of the working parameters for each SV process can likely be deduced from this article by conditioning on other variables. For parsimony, one may also restrict $\bfw_t$ to be constant across time points which is feasible when the number of observations is large (Theorem \ref{thm_alg1_largen}). \cite{Li2020} further apply ASIS to factor SV models with leverage, asymmetry and heavy tails, and the BSR strategy is likely to be useful here as well although it is challenging to derive optimal values of the working parameters analytically. It is worth exploring if plugging in values based on Gaussian approximations of such models will yield improvements compared to the NCP and CP.

Another application is to the time-varying parameter models investigated by \cite{Bitto2019}, which resemble regression models with time-varying coefficients following a random walk. To prevent overfitting, the double gamma shrinkage prior was placed on process variances to reduce time-varying coefficients to statics ones. The parametrization is important here as the CP is preferred for time-varying coefficients while the NCP is preferred for static ones, and \cite{Bitto2019} use ASIS to ``combine the best of both worlds". In this case, it is possible to partially rescale and recenter the time-varying coefficients using fixed mean and standard deviation parameters. If each coefficient is allowed to have its own set of working parameters, then BSR offers the flexibility of using the optimal parametrization for each coefficient and hence it can potentially perform as well as or even better than ASIS. However, the random walk is a non-stationary process and the optimal working parameters will have to be re-derived using say the EM algorithm as a starting point (as in this article), or some other criteria such as minimization of the posterior correlation between blocks. 

\cite{Kreuzer2020} consider nonlinear state space models where the latent states follow an AR(1) process and apply ASIS to improve the efficiency of the proposed elliptical slice sampler. They consider an NCP of the form $\tilde{x}_{t+1} = (x_{t+1} - \mu - \phi(x_{t} - \mu))/\sigma_\eta$ instead of $\tilde{x}_{t+1} = (x_{t+1} - \mu)/\sigma_\eta$, and it is worthwhile to contemplate the construction of an associated PNCP which directly impacts the convergence rate of $\phi$. Finally, it is also interesting to explore how the proposed data augmentation scheme can be applied to construct efficient variational approximation algorithms for state space models whose rate of convergence is also strongly dependent on the choice of parametrization.

%
%

\begin{acks}[Acknowledgments]
The author would like to thank the Editor, Associate Editor and three referees for their comments and helpful suggestions which have improved this manuscript greatly. The author also wish to thank Dion Kwan and Robert Kohn for comments and discussion on this work.
\end{acks}
\begin{funding}
The author was supported by the start-up grant (R-155-000-190-133).
\end{funding}

\begin{supplement}
\stitle{Zip file}
\sdescription{Julia code and a pdf file containing derivations, proofs and additional details.} 
\end{supplement}


\bibliographystyle{imsart-nameyear} 
\bibliography{ref}       


\newpage 
\begin{frontmatter}
\title{Supplement to ``Efficient data augmentation techniques for some classes of state space models"}
\runtitle{Supplement}

\end{frontmatter}

\setcounter{section}{0} \renewcommand{\thesection}{S\arabic{section}}
\setcounter{figure}{0} \renewcommand{\thefigure}{S\arabic{figure}}
\setcounter{table}{0} \renewcommand{\thetable}{S\arabic{table}}
\setcounter{equation}{0} \renewcommand{\theequation}{S\arabic{equation}}
\setcounter{corollary}{0} \renewcommand{\thecorollary}{S\arabic{corollary}}
\setcounter{lemma}{0} \renewcommand{\thelemma}{S\arabic{lemma}}
\setcounter{theorem}{0} \renewcommand{\thetheorem}{S\arabic{theorem}}
\setcounter{property}{0} \renewcommand{\theproperty}{S\arabic{property}}
\setcounter{algorithm}{0} \renewcommand{\thealgorithm}{S\arabic{algorithm}}

\section{Observed information matrix} \label{sec_obs_info_mat}
For the partially noncentered state space model, 
\begin{equation*} 
\begin{gathered}
\bfy | \bfalpha, \bftheta  \sim \N(\sigma_\eta^a \bfalpha + \mu \bfw, \sigma_\epsilon^2 I), \\
\sigma_\eta^{a} \bfalpha | \bftheta \sim \N( \mu \bar{\bfw}, \sigma_\eta^{2}\Lambda^{-1} ),
\end{gathered}
\end{equation*}
the marginal distribution of $\bfy$ is $\N(\mu \bone, S)$ as
\begin{equation*}
\begin{aligned}
\E(\bfy) 
&= \E[\E(\bfy|\bfalpha)] = \E(\sigma_\eta^a \bfalpha + \mu \bfw) \\
&= \mu \bar{\bfw} + \mu \bfw = \mu \bone, \\
\var(\bfy) 
&= \var[\E(\bfy|\bfalpha)] + \E[\var(\bfy|\bfalpha)] \\
&= \sigma_\eta^2 \Lambda^{-1} + \sigma_\epsilon^2 I =  S.
\end{aligned}
\end{equation*}
Hence the log-likelihood of $\bfy$ is
\begin{multline*}
L(\bftheta) = \log p(\bfy|\bftheta) \\
= -\tfrac{n}{2}\log (2\pi) -\tfrac{1}{2}\log|S| - \tfrac{1}{2} (\bfy-\mu \bone)^T S^{-1} (\bfy-\mu \bone). 
\end{multline*}
We compute $L(\bftheta^{(i)})$ at the end of each iteration in the EM algorithm and the algorithm is terminated when the relative increment in $L(\bftheta)$ is negligible. We can compute $L(\bftheta^{(i)})$ efficiently by finding the $LDL^T$ decomposition of $(V_0/\sigma_\epsilon^2)^{-1}$ where $D$ is a diagonal matrix and $L$ is a unit lower triangular matrix, and noting that $S^{-1} =  \Lambda V_0/(\sigma_\eta^2 \sigma_\epsilon^2) = (I - V_0/\sigma_\epsilon^2)/\sigma_\epsilon^2$. Further useful idenities are given in Appendix \ref{Appendix A}. Hence 
\[
\log |S^{-1}| = \log(1-\phi^2) - n\log(\sigma_\eta^2) - \log|D|,
\]
and $S^{-1} (\bfy - \mu\bone)$ can be evaluated efficiently using back-substitution. 
Note that $S^{-1} (\bfy - \mu \bone) = \Lambda \bfm_{0\bf1}/\sigma_\eta^2$. We have 
\begin{align*}
\nabla_\mu L &= (\bfy-\mu \bone)^T S^{-1} \bone, \qquad
\nabla_\mu^2 L = - \bone^T S^{-1} \bone, \\
\nabla_{\sigma_\eta^2} L &= \tfrac{1}{2}\{ (\bfy-\mu \bone)^T S^{-1} \Lambda^{-1} S^{-1} (\bfy-\mu \bone) \\
& \quad - \tr(S^{-1}\Lambda^{-1})\} , \\
\nabla_{\sigma_\eta^2}^2 L &= \tr(V_0^2)/(2\sigma_\eta^4 \sigma_\epsilon^4)  - \bfm_{0\bf1}^T S^{-1}\bfm_{0\bf1}/\sigma_\eta^4,  \\
\nabla_{\mu, \sigma_\eta^2}^2 L &= - \bfm_{0\bf1}^T S^{-1} \bone/\sigma_\eta^2.
\end{align*}
Hence $I_{\mu, \mu}^\obs = \bone^T S^{-1} \bone$, $I_{\mu, \sigma_\eta^2}^\obs = { \bfm_{0\bf1}^T S^{-1} \bone}/{\sigma_\eta^2}$ and $I_{\sigma_\eta^2, \sigma_\eta^2}^\obs = [2{\bfm_{0\bf1}^T S^{-1}\bfm_{0\bf1}} - {\tr(V_0^2)} /\sigma_\epsilon^4]/{(2\sigma_\eta^4)}$.

\section{Updates in EM algorithm}
Let $\Lambda_1= \partial \Lambda/\partial \phi$, which is a symmetric tridiagonal matrix with diagonal $(0, 2\phi, \dots, 2\phi, 0)^T$ and off-diagonal elements equal to $-1$. Let
\[
U = \sigma_\eta^{2a} V_a^{(i)} \\+ (\sigma_\eta^a \bfm_{a\bfw}^{(i)} -\mu \bar{\bfw})(\sigma_\eta^a \bfm_{a\bfw}^{(i)} -\mu \bar{\bfw})^T.
\] 
The first derivative of $Q(\bftheta|\bftheta^{(i)})$ with respect to each parameter is given below. 
\begin{equation*}
\begin{aligned}
\nabla_\mu Q(\bftheta | \bftheta^{(i)}) 
&= \frac{\bfw^Tz^{(i)} }{\sigma_\epsilon^2} + \frac{(\sigma_\eta^a \bfm_{a\bfw}^{(i)} - \mu \bar{\bfw})^T\Lambda \bar{\bfw}}{\sigma_\eta^2}. \\
\nabla_\phi Q(\bftheta | \bftheta^{(i)}) 
&= - \frac{\tr(\Lambda_1 U)}{2\sigma_\eta^2} - \frac{\phi}{1-\phi^2} \\
&= - \frac{1}{\sigma_\eta^2} \bigg[ \phi \sum_{t=2}^{n-1} U_{tt} - \sum_{t=1}^{n-1} U_{t, t+1} \bigg] - \frac{\phi}{1-\phi^2}. \\
\nabla_{\sigma_\epsilon^2} Q(\bftheta | \bftheta^{(i)}) 
&= \frac{\sigma_\eta^{2a} \tr(V_a^{(i)}) + z^{(i)T} z^{(i)}}{2\sigma_\epsilon^4} - \frac{n}{2\sigma_\epsilon^2}. \\
\end{aligned}
\end{equation*}
\begin{multline*}
\nabla_{\sigma_\eta^2} Q(\bftheta | \bftheta^{(i)}) = \frac{1}{2\sigma_\eta^4} \Big\{   a\sigma_\eta^{a+2}\sigma_\epsilon^{-2} (\bfy- \mu \bfw)^T  \bfm_{a\bfw}^{(i)} \\
+ (1-a)\sigma_\eta^{2a} [ \tr(\Lambda V_a^{(i)}) + \bfm_{a\bfw}^{(i)T} \Lambda \bfm_{a\bfw}^{(i)} ]  + \mu^2 \bar{\bfw}^T \Lambda \bar{\bfw} \\
+ n(a-1)\sigma_\eta^2+ (a-2) \mu \sigma_\eta^a \bfm_{a\bfw}^{(i)T} \Lambda \bar{\bfw} \\
- a \sigma_\eta^{2a+2}\sigma_\epsilon^{-2}[ \tr(V_a^{(i)}) + \bfm_{a\bfw}^{(i)T} \bfm_{a\bfw}^{(i)} ] \Big\}. 
\end{multline*}
Setting $\nabla_{\theta_s} Q(\bftheta | \bftheta^{(i)}) = 0$ then leads to the update for $\theta_s$. When $\bftheta^{(i)} = \bftheta$ at convergence, we can use the identities in Appendix \ref{Appendix A} to obtain
\begin{align*}
\nabla_\mu Q(\bftheta|\bftheta) &= (\bfy - \mu \bone)^T S^{-1} \bone, \\
\nabla_{\sigma_\eta^2} Q(\bftheta | \bftheta) &= \tfrac{1}{2\sigma_\eta^2} \big\{  n(a-1) - a \tr(\tfrac{V_0}{\sigma_\epsilon^2}) +(1-a)\tr(V_0 \tfrac{ \Lambda }{\sigma_\eta^2}) \\
& \quad + \bfm_{0\bone}^T \tfrac{ \Lambda }{\sigma_\eta^2} \bfm_{0\bone}  \big\}.  \\
\nabla_\phi Q(\bftheta | \bftheta) 
&= - \frac{\tr[\Lambda_1 (V_0 + \bfm_{0\bf1} \bfm_{0\bf1}^T)]}{2\sigma_\eta^2} - \frac{\phi}{1-\phi^2} \\
\nabla_{\sigma_\epsilon^2} Q(\bftheta | \bftheta) 
&= \frac{\tr(V_0) + (\bfy - \bfm_{0\bfzero})^T(\bfy - \bfm_{0\bfzero})}{2\sigma_\epsilon^4} - \frac{n}{2\sigma_\epsilon^2}. \\
\end{align*}
Setting $\nabla_\mu Q(\bftheta|\bftheta) = 0$ yields the MLE,
\begin{equation} \label{mueq}
\begin{aligned}
\mu = \frac{\bfy^T S^{-1}\bone}{\bone^T S^{-1} \bone}.
\end{aligned}
\end{equation}
Setting $\nabla_{\sigma_\eta^2} Q(\bftheta|\bftheta) = 0$, we obtain 
\begin{equation*}
\sigma_\eta^2 = \frac{(1-a) \tr(V_0  \Lambda)  + \bfm_{0\bone}^T  \Lambda \bfm_{0\bone} }{n(1-a) + a \tr(V_0)/\sigma_\epsilon^2 },
\end{equation*}
which leads to 
\begin{align} \label{sigetaeq}
n \sigma_\eta^2 &=\tr(V_0  \Lambda)  + \bfm_{0\bone}^T  \Lambda \bfm_{0\bone}  &&(a=0),\\
\gamma \tr(V_0) &= \bfm_{0\bone}^T  \Lambda \bfm_{0\bone}  &&(a=1). \nonumber 
\end{align}
These two formulas are equivalent since 
\[
\gamma \tr(V_0)  = \sigma_\eta^2 \tr[ (V_0^{-1} - \frac{ \Lambda }{\sigma_\eta^2}) V_0] = n \sigma_\eta^2 - \tr(V_0  \Lambda).
\]
Setting $\nabla_\phi Q(\bftheta|\bftheta) = 0$, 
\begin{equation} \label{phieq}
2 \phi \sigma_\eta^2 = - (1-\phi^2) [\tr(\Lambda_1V_0) + \bfm_{0\bone}^T\Lambda_1\bfm_{0\bone}].
\end{equation}
Setting $\nabla_{\sigma_\epsilon^2} Q(\bftheta|\bftheta) = 0$,
\begin{equation} \label{sigepseq}
n\sigma_\epsilon^2 =  (\bfy - \bfm_{0\bfzero})^T(\bfy - \bfm_{0\bfzero}) +  \tr(V_0).
\end{equation}

We use $I_\augm(\bftheta^*)$ to optimize the rate of convergence of the EM algorithm with respect to $a$ and $\bfw$, where
\begin{equation*}
\begin{aligned}
I_{\augm}(\bftheta^*) &= - \int p(\bfalpha|\bfy,\bftheta)  \nabla^2_\bftheta \log p(\bfy, \bfalpha|\bftheta) \, d\bfalpha  \big|_{\bftheta= \bftheta^*} \\
&= - \nabla^2_\bftheta \int p(\bfalpha|\bfy,\bftheta^{(i)}) \log p(\bfy, \bfalpha|\bftheta) \, d\bfalpha \big|_{\bftheta, \bftheta^{(i)}= \bftheta^*} \\
&= -\nabla^2_\bftheta Q(\bftheta|\bftheta^{(i)}) \big|_{\bftheta, \,\bftheta^{(i)}= \bftheta^*}.
\end{aligned}
\end{equation*}
Note that $\nabla^2_\bftheta$ applies only to $\log p(\bfy, \bfalpha|\bftheta)$. The above expression shows that $I_\augm(\bftheta^*)$ measures the curvature of $Q(\bftheta|\bftheta^{(i)})$ at the MLE $\bftheta^*$.

\section{Unknown location parameter}
In this section, we consider the location parameter $\mu$ as the only unknown parameter. 
Since
\[
\nabla_\mu^2 Q(\bftheta | \bftheta^{(i)}) = - (\sigma_\epsilon^{-2} \bfw^T \bfw+ \sigma_\eta^{-2} \bar{\bfw}^T\Lambda \bar{\bfw}), 
\]
we have
\[
I_{ \mu, \mu} = \sigma_\epsilon^{-2} \bfw^T \bfw+ \sigma_\eta^{-2} \bar{\bfw}^T\Lambda \bar{\bfw} = \tau(\bfw).
\]

\begin{proof}[Proof of Theorem 1]
From Section S1, $I_\obs = \bone ^T S^{-1} \bone$. Since $I_{\augm} = \tau(\bfw)$, the rate of convergence is 
\[
1 - I_\obs I_{\augm}^{-1} = 1 -  \bone ^T S^{-1} \bone/\tau(\bfw) = \rho(\bfw)^T V_0 \rho(\bfw)/\tau(\bfw).
\]
Since $\rho(\bfw) = 0$ only at $\bfw = \sigma_\eta^{-2} V_0 \Lambda \bone$, the convergence rate is minimized to zero at $\bfw^\opt$. 
\end{proof}

From (10), $\bar{\bfw}^\opt = \sigma_\epsilon^{-2} V_0 \bone$ is simply the row sums of $V_0$ divided by $\sigma_\epsilon^2$. Recall that $V_0^{-1} = \sigma_\epsilon^{-2} I + \sigma_\eta^{-2}\Lambda$ and $\sigma_\eta^2 V_0^{-1} = \gamma I + \Lambda$. If $\phi \neq 0$, let 
\[
Q = \sigma_\eta^{2} V_0^{-1}/ |\phi|,
\]
so that $Q$ is a symmetric tridiagonal matrix with diagonal $(d, c, \dots, c, d)^T$  and off-diagonal elements equal to $-b$, where $b = \phi/|\phi|$, $d = (\gamma + 1)/|\phi|$ and $c = d + |\phi|$. Note that $b=\pm 1$, $d > 1$ and $c > 2$. From Corollary 2, Property 2 and Theorems 2 and 5 of \cite{Tan2019}, we have the following properties of $Q^{-1}$ respectively.

\begin{property} \label{prop1}
If $\phi \neq 0$, $Q^{-1}_{ij} = u_i v_j$ for $i \leq j$, where 
\begin{equation*}
\begin{gathered}
v_i = b^{i-1}\kappa_{n-i}/{\kappa}, \quad u_i = b^{i-1} \kappa_{i-1}/\kappa_0, \quad (i=1, \dots, n) ,\\
\kappa_i = \vp \rpp^i - \vm \rmm^i, \quad (i=0, \dots, n-1), \\
\kappa = \vp^2\rpp^{n-1}  -  \vm^2\rmm^{n-1}, \\
\varphi_{\pm} = r_{\pm} - |\phi|, \quad  r_{\pm} = (c \pm \sqrt{c^2-4})/2.
\end{gathered}
\end{equation*}
\end{property}
Note that $\rpp\rmm = 1$, $\rpp + \rmm = c$ and $\kappa> 0$.

\begin{property} \label{prop2}
If $0 < \phi < 1$, $u_i$ and $v_i$ are positive for $i=1, \dots, n$, and all elements of $Q^{-1}$ are positive.
\end{property}

\begin{property}\label{prop3}
The sum of the $i$th row of $Q^{-1}$ is 
\[
s_i = \frac{1 - b(1-\phi) (v_i + v_{n-i+1}) }{c - 2b}.
\]
If $0 < \phi <1$, 
\[
\frac{1}{d-\phi} < s_i < \frac{1}{c-2}.
\]
If $-1 < \phi < 0$, 
\[
\frac{2}{c+2} - \frac{1}{d+\phi} < s_i < \frac{1}{d + \phi}.
\]
\end{property}

\begin{theorem} \label{thm_wopt}
Let $w_t^{\opt}$ be the $t^\text{th}$ element of $\bfw^\opt$. If $\phi=0$,  $w_t^{\opt} = (1+\gamma)^{-1}$ and if $ \phi \neq 0$, 
\begin{equation} \label{eq_wopt}
w_t^{\opt} =  \frac{(1-\phi)^2 + b \gamma (1 - \phi)(v_t + v_{n-t+1})}{ (1-\phi)^2 + \gamma}.
\end{equation}
\end{theorem}
\begin{proof}
If $\phi=0$, $\Lambda = I$ and $V_0 = \sigma_\eta^{2}/(1+\gamma)I$. Hence $\bfw^\opt = 1/(1+\gamma) \bone$. 
If $\phi \neq 0$, $\bar{\bfw}^\opt = \gamma/|\phi| Q^{-1} \bone$ and $w_t^{\opt} = 1 -  \gamma s_t/|\phi|$. Substituting the expression for $s_t$ from Property \ref{prop3} and noting that $|\phi|(c-2b) = (1-\phi)^2 + \gamma $ yields the result in \eqref{eq_wopt}. 
\end{proof}

As $u_1 = 1$ and $Q_{1j}^{-1} = v_j$ from Property \ref{prop1}, Theorem \ref{thm_wopt} implies that $\bfw^\opt$ can be computed using only elements from the first row of $Q^{-1}$. In addition, $\bfw^\opt$ is symmetric since $w_t^\opt = w_{n-t+1}^\opt$ for each $t$.

\begin{proof}[Proof of Corollary 1]
From Theorem \ref{thm_wopt}, $w_t^\opt = (1+\gamma)^{-1}$ if $\phi=0$ which is the value of the lower and upper bounds. If $0 < \phi < 1$, $b=1$ and $v_t > 0$ for $t=1, \dots, n$ from Property \ref{prop2}. Hence $w_t^\opt > 1 - B_1$ from \eqref{eq_wopt}. From Property \ref{prop3}, $s_t > (d-\phi)^{-1}$ implies that $w_t^\opt = 1 -  \gamma s_t/|\phi|< 1- {\gamma}/\{\phi (d-\phi)\}= 1-B_2$. If $-1 < \phi < 0$, $b=-1$ and $\phi=-|\phi|$. From Property \ref{prop3}, $2/(c+2) - 1/(d+\phi) < s_t < 1/(d + \phi)$ implies $1 - B_2 \leq w_t^\opt \leq 1 + B_2 - 2B_1 $. When $\phi \in [0,1)$, $ 0 < B_2 \leq B_1 < 1$ and when $\phi \in (0,1)$, $0 < B_1 < B_2 < 1$.
\end{proof}

The proof of Corollary 2 depends on Lemmas \ref{lem_S1} and \ref{lem_S2}. 
\begin{lemma} \label{lem_S1}
If $0 < \phi <1$, then 
\[
cv_{i} = v_{i-1}+ v_{i+1}, \quad i = 2, \dots, n-1.
\]
\end{lemma}
\begin{proof}
First, 
\begin{equation*}
\begin{aligned}
\kappa_{i-1}+ \kappa_{i+1} &= \vp \rpp^{i-1} - \vm \rmm^{i-1} + \vp \rpp^{i+1} - \vm \rmm^{i+1} \\
&= \vp \rpp^{i} (1/\rpp + \rpp) - \vm \rmm^{i} (1/\rmm + \rmm) \\
&= \vp \rpp^{i} (\rmm + \rpp) - \vm \rmm^{i} (\rpp + \rmm) \\
&=  c\kappa_i.
\end{aligned}
\end{equation*}
As $b=1$, $c v_{i} = c \kappa_{n-i}/ \kappa = (\kappa_{n-i-1} + \kappa_{n-i+1})/\kappa = v_{i-1} + v_{i+1}$.
\end{proof}

\begin{lemma} \label{lem_S2}
If $0 < \phi <1$, then 
\[
2\phi s_t - s_{t-1} - s_{t+1} < 0.
\]
\end{lemma}
\begin{proof}
As $b=1$, from Property \ref{prop3},
\[
(c - 2)s_t = 1 - (1-\phi) (v_t + v_{n-t+1}).
\]
From Lemma \ref{lem_S1},
\begin{align*}
(c-2)& (2\phi s_t - s_{t-1} - s_{t+1}) \\
&= 2\phi - 2 - (1-\phi)\{ 2\phi (v_t + v_{n-t+1}) \\\
& \quad - (v_{t-1} + v_{n-t+2}) - (v_{t+1} + v_{n-t})\} \\
&= 2\phi - 2 - (1-\phi)(2\phi - c )(v_t + v_{n-t+1}) \\
&= 2\phi - 2 - (2\phi - c )\{ 1 - (c-2) s_t \} \\
&= (c-2)\{ 1 + (2\phi-c) s_t \}.
\end{align*}
From Property \ref{prop3}, $s_t >1/(d-\phi)$ implies that 
\[
1+ (2\phi-c) s_t < 1 + \frac{2\phi-c}{d-\phi}  = 0.
\]
Hence $2\phi s_t - s_{t-1} - s_{t+1} < 0$.
\end{proof}

\begin{proof}[Proof of Corollary 2]
Writing
\[
\bfw^\opt = (\gamma \Lambda^{-1} + I) ^{-1}\bone,
\]
\begin{align*}
\nabla_\gamma \bfw^\opt &= - Q^{-1} \bfw^\opt/|\phi|, \\ 
\nabla_\phi \bfw^\opt &= Q^{-1} \Lambda_1 \bar{\bfw}^\opt/|\phi|.
\end{align*}
From Property \ref{prop2} and Corollary 1, all elements of $Q^{-1}$ and $\bfw^\opt$ are positive. Hence each element of $\nabla_\gamma \bfw^\opt$ is negative and $w_t^\opt$ decreases strictly with $\gamma$ for all $t$. To show that each element of $\nabla_\phi \bfw^\opt $ is negative, it suffices to show that each element of $\Lambda_1 \bar{\bfw}_\opt$ (a symmetric vector) is negative. The first and last elements are equal to $-{\bar{w}^\opt}_2$, which is negative from Corollary 1. For $t=2, \dots, n-1$, the $t^\text{th}$ element of $\Lambda_1\bar{\bfw}_\opt$ is 
\begin{equation*}
\begin{aligned}
2\phi \bar{w}_t^\opt - \bar{w}_{t-1}^\opt - \bar{w}_{t+1}^\opt &=\frac{\gamma}{\phi} [2\phi s_t - s_{t-1} -s_{t+1}] < 0        
\end{aligned}                                  
\end{equation*}
from Lemma \ref{lem_S2}. It is clear from \eqref{eq_wopt} that as $\phi$ approaches 1, $\bfw^\opt$ approaches $\bf0$.
\end{proof}

The proof of Theorem 2 depends on Lemma \ref{lem_S3}.
\begin{lemma}\label{lem_S3}
Let $u^* = \gamma/(1-\phi)^2$. If $\phi \neq 0$, 
\[
\lim_{n \rightarrow \infty}  \frac{1}{n}\sum_{t=1}^n w_t^{\opt} = \frac{1}{1+ u^*},
\]
\end{lemma}
\begin{proof}
From the expression of $w_t^{\opt}$ in Theorem \ref{thm_wopt}, 
\begin{equation*}
\begin{aligned}
\frac{1}{n}\sum_{t=1}^n w_t^{\opt} =  \frac{1}{ 1+ u^*} + \frac{ 2b \gamma (1 - \phi) }{ n[(1-\phi)^2 + \gamma]} \sum_{t=1}^n v_t,
\end{aligned}
\end{equation*}
where
\begin{align*}
\kappa \sum_{t=1}^n v_t &= \sum_{t=1}^n  b^{t-1} (\vp \rpp^{n-t} - \vm \rmm^{n-t}) \\
&= \frac{\vp(b^n - \rpp^n)}{b - \rpp} - \frac{\vm(b^n - \rmm^n)}{b - \rmm} \\
&= [\vp(b^{n+1} - b^n \rmm -b \rpp^n + \rpp^{n-1}) \\
& \quad - \vm(b^{n+1} - b^n \rpp - b \rmm^n + \rmm^{n-1})]/(2-bc).
\end{align*}
Thus 
\begin{equation*}
\begin{aligned}
\frac{1}{n} \sum_{t=1}^n v_t 
&= \frac{\vp (R_1^{n - 1} - R_1^{n-2} R_2 -b \rpp + 1)}{n(2-bc)\left(\vp^2 - \vm^2 R_2^{n-1} \right) } \\
& \quad - \frac{\vm (R_1^{n-1} - R_1^{n-2} - b \rmm  R_2^{n-1} +R_2^{n-1})}{n(2-bc)\left(\vp^2 - \vm^2 R_2^{n-1} \right) } \\
\end{aligned}
\end{equation*}
where $R_1 = b/\rpp$ and $R_2= \rmm/\rpp$. Since $|R_1| <1$ and $|R_2| <1$, $\lim_{n \rightarrow \infty} \sum_{t=1}^n v_t /n = 0$ and $\lim_{n \rightarrow \infty} \sum_{t=1}^n w_t^{\opt}/n  = 1/ (1+ u^*)$. 
\end{proof}

\begin{proof}[Proof of Theorem 2]
If $\bfw = w\bone$, 
\begin{align*}
\nabla_w \tau(w\bone) &= 2[(\bone^T V_0^{-1} \bone) w - \bone^T \Lambda \bone/\sigma_\eta^2] \\
\nabla_w^2 \tau(w\bone) &= 2(\bone^T V_0^{-1} \bone) > 0.
\end{align*}
Hence $\tau(w\bone)$ is minimized at $w^\opt  = 1/(1+u)$ where $u = n\gamma /(\bone^T \Lambda \bone)$, and $\tau(w^\opt \bone) = n/[\sigma_\epsilon^2 (1+u)]$. 
Since $\bone^T \Lambda \bone = n(1-\phi)^2 + 2\phi(1-\phi)$, $u \rightarrow u^*$ and hence $w^\opt \rightarrow 1/(1+u^*)$ as $n \rightarrow \infty$. The rate of convergence of Algorithm 1 at $w^\opt$ is 
\[
1 - \sigma_\epsilon^2 (1+u) \bone^T S^{-1} \bone/n = 1 - (1+u) (\bone^T \bfw^\opt)/n,
\]
since $\sigma_\epsilon^2 \bone^T S^{-1} \bone = \bone^T (I+\gamma\Lambda^{-1})^{-1} \bone = \bone^T \bfw^\opt$. From Lemma \ref{lem_S3}, $\lim_{n \rightarrow \infty}  (\bone^T \bfw^\opt)/n  = 1/ (1+ u^*)$. Hence the rate of convergence of Algorithm 1 goes to zero.
\end{proof}

Figure \ref{fig_wconstrate} shows how the rate of convergence of Algorithm 1 converges to zero as $n$ increases if $\bfw =w^\opt \bone$ for $\phi \in \{-0.9, 0.9\}$ and $\gamma \in \{0.1, 1, 10\}$.
\begin{figure}[htb!]
\includegraphics[width=0.48\textwidth]{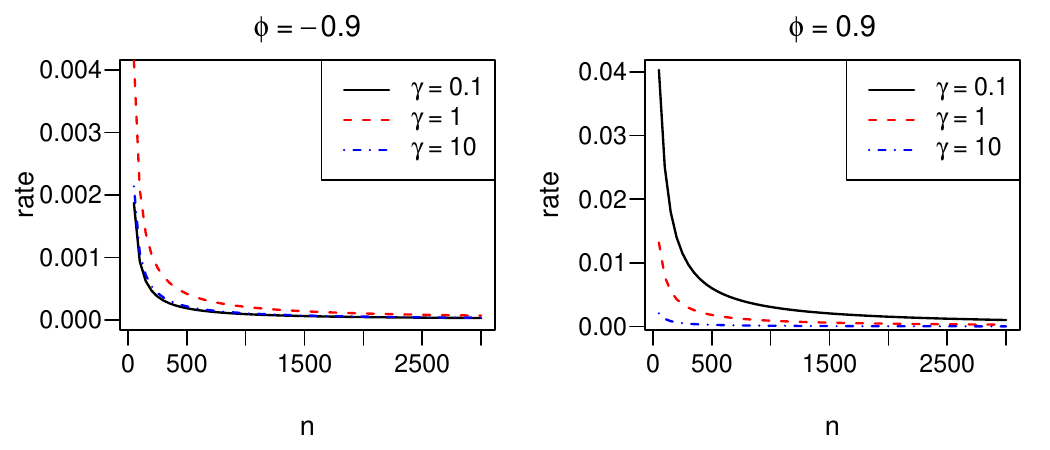}
\caption{Convergence rate of Algorithm 1 when $\bfw = w^\opt\bone$.}
\label{fig_wconstrate}
\end{figure}

\section{Unknown scale parameter} \label{sec unknown scale parameter}
Suppose $\sigma_\eta^2$ is the only unknown parameter.
\begin{multline*}
\nabla_{\sigma_\eta^2}^2 Q(\bftheta | \bftheta^{(i)}) = \tfrac{1}{2\sigma_\eta^4} \Big\{a (\tfrac{a}{2} - 1) \sigma_\epsilon^{-2}  \sigma_\eta^{a}(\bfy- \mu \bfw)^T \bfm_{a\bfw}^{(i)} \\
+n(1-a) - a(a-1) \sigma_\epsilon^{-2}\sigma_\eta^{2a}  [ \tr(V_a^{(i)})  +{\bfm_{a\bfw}^{(i)}}^T\bfm_{a\bfw}^{(i)} ] 
\\
- 2 \mu^2 \bar{\bfw}^T \tfrac{\Lambda}{\sigma_\eta^2}\bar{\bfw} + (a-2)(\tfrac{a}{2} - 2)  \mu \sigma_\eta^a {\bfm_{a\bfw}^{(i)}}^T \tfrac{\Lambda}{\sigma_\eta^2} \bar{\bfw} \\
- (a-1)(a-2) \sigma_\eta^{2a}  [ \tr(\tfrac{\Lambda}{\sigma_\eta^2} V_a^{(i)}) + {\bfm_{a\bfw}^{(i)}}^T \tfrac{\Lambda}{\sigma_\eta^2} \bfm_{a\bfw}^{(i)} ]  \Big\}.
\end{multline*}
When $\bftheta = \bftheta^{(i)}$ (at convergence), we can use the identities in Appendix \ref{Appendix A} to obtain
\begin{equation} \label{Qsigmaeta}
\begin{aligned}
&\nabla_{\sigma_\eta^2}^2 Q(\bftheta | \bftheta) = - \tfrac{1}{2\sigma_\eta^4} \Big\{ a(a-1)\sigma_\epsilon^{-2} [\tr(V_0)  + \bfm_{0\bfw}^T\bfm_{0\bfw}] \\
& \quad + (a-1)(a-2) [ \tr(\tfrac{\Lambda}{\sigma_\eta^2} V_0) + \bfm_{0\bfw}^T \tfrac{\Lambda}{\sigma_\eta^2} \bfm_{0\bfw} ] \\
& \quad + 2 \mu^2 \bar{\bfw}^T \tfrac{\Lambda}{\sigma_\eta^2} \bar{\bfw} - a (\tfrac{a}{2} - 1)  \sigma_\epsilon^{-2}   \bfm_{0\bfw}^T(\bfy- \mu \bfw) \\
& \quad + n(a-1) - (a-2) (\tfrac{a}{2} - 2) \mu \bfm_{0\bfw}^T \tfrac{\Lambda}{\sigma_\eta^2} \bar{\bfw} \Big\}, \\
&= - \tfrac{1}{2\sigma_\eta^4} \Big\{ \tfrac{a^2}{2} \bfm_{0\bfw}^TV_0^{-1} \bfm_{0\bfw}  -2a \mu \bfm_{0\bone}^T \tfrac{\Lambda}{\sigma_\eta^2} \bar{\bfw}  \\ 
& \quad  + n(1-a)^2 + 2(a-1) [\tr(V_0)/\sigma_\epsilon^2  - {\bfm_{0\bone}}^T \tfrac{\Lambda}{\sigma_\eta^2}  \bfm_{0\bone} ] \Big\}. 
\end{aligned}
\end{equation}
Substituting \eqref{sigetaeq} in \eqref{Qsigmaeta}, we obtain 
\begin{equation*}
\begin{aligned}
I_{\sigma_\eta^2, \sigma_\eta^2} &= \tfrac{1}{2\sigma_\eta^4} \left\{ \tfrac{a^2}{2}\bfm_{0\bfw}^T V_0^{-1} \bfm_{0\bfw} - 2a \mu \bfm_{0\bone}^T \tfrac{\Lambda}{\sigma_\eta^2} \bar{\bfw} \right. \\
&\left. \quad +  n(a-1)^2  \right\}. 
\end{aligned}
\end{equation*}

\begin{proof}[Proof of Theorem 3]
The rate of convergence of Algorithm 2 can be optimized by minimizing $I_{\sigma_\eta^2, \sigma_\eta^2}$ with respect to $a$ and $\bfw$. 
\begin{equation*}
\begin{aligned}
\nabla_a I_{\sigma_\eta^2, \sigma_\eta^2} 
&= \tfrac{1}{2\sigma_\eta^4} \left\{ a\bfm_{0\bfw}^T V_0^{-1} \bfm_{0\bfw}  - 2\mu \bar{\bfw}^T \tfrac{\Lambda}{\sigma_\eta^2}  \bfm_{0\bone} \right. \\
&\left. + 2n(a-1) \right\} 
\end{aligned}
\end{equation*}
Setting $\nabla_a I_{\sigma_\eta^2, \sigma_\eta^2} = 0$, we obtain 
\begin{align} \label{aopt}
a &= \frac{2n + 2\mu \bar{\bfw}^T \frac{\Lambda}{\sigma_\eta^2}  \bfm_{0\bone}}{2n + \bfm_{0\bfw}^T V_0^{-1} \bfm_{0\bfw}} \\
&= 1 - \frac{ \bfm_{0\bfw}^T V_0^{-1} \bfm_{0\bfw} - 2\mu \bar{\bfw}^T \frac{\Lambda}{\sigma_\eta^2}  \bfm_{0\bone} }{2n + \bfm_{0\bfw}^T V_0^{-1} \bfm_{0\bfw}}. \nonumber
\end{align}
Note that $a \leq 1$ for any $\bfw$ as  
\begin{multline*}
\bfm_{0\bfw}^T V_0^{-1} \bfm_{0\bfw} - 2\mu \bar{\bfw}^T \tfrac{\Lambda}{\sigma_\eta^2}  \bfm_{0\bone} \\
= \bfm_{0\bfw}^T\bfm_{0\bfw}/\sigma_\epsilon^2 + \bfm_{0\bone} ^T \tfrac{\Lambda}{\sigma_\eta^2} \bfm_{0\bone} + \mu^2 \bar{\bfw}^T \tfrac{\Lambda}{\sigma_\eta^2} \bar{\bfw} \geq 0.
\end{multline*}
\begin{equation*} 
\begin{aligned}
\nabla_{\bar{\bfw}} I_{\sigma_\eta^2, \sigma_\eta^2}
&= \frac{a^2\mu}{2\sigma_\eta^4} V_0^{-1}  \left\{ \bfm_{0\bone} + \mu \bar{\bfw} - \frac{2V_0 \Lambda}{a\sigma_\eta^2}  \bfm_{0\bone} \right\}.  
\end{aligned}
\end{equation*}
Setting $\nabla_{\bar{\bfw}} I_{\sigma_\eta^2, \sigma_\eta^2} = 0$ leads to 
\begin{equation}\label{wbaropt}
\bar{\bfw} = \frac{1}{\mu} \left(\frac{2V_0 \Lambda}{a\sigma_\eta^2} - I \right) \bfm_{0\bone}.
\end{equation}
Note that \eqref{wbaropt} implies $\bfm_{0\bfw} = \frac{2V_0 \Lambda}{a\sigma_\eta^2} \bfm_{0\bone}$. Solving \eqref{aopt} and \eqref{wbaropt} simultaneously, 
\begin{multline*} 
2na + \tfrac{4}{a} \bfm_{0\bone}^T  \tfrac{\Lambda}{\sigma_\eta^2}V_0 \tfrac{\Lambda}{\sigma_\eta^2} \bfm_{0\bone} \\
= 2n + \tfrac{4}{a} \bfm_{0\bone}^T \tfrac{\Lambda}{\sigma_\eta^2}V_0 \tfrac{\Lambda}{\sigma_\eta^2}  \bfm_{0\bone} - 2 \bfm_{0\bone}^T \tfrac{\Lambda}{\sigma_\eta^2}  \bfm_{0\bone}.
\end{multline*}
Thus 
\[
 a= 1 - \frac{\bfm_{0\bone}^T \Lambda \bfm_{0\bone}}{n\sigma_\eta^2}
= \frac{\tr(V_0\Lambda) }{n \sigma_\eta^2} = 1 - \frac{\tr(V_0)}{n\sigma_\epsilon^2},
\]
where the last two equalities follows from \eqref{sigetaeq}. The second order derivatives are 
 \begin{equation*}
\begin{aligned}
\nabla_{a}^2 I_{\sigma_\eta^2, \sigma_\eta^2} &= ( 2n + \bfm_{0\bfw}^T V_0^{-1} \bfm_{0\bfw} )/(2\sigma_\eta^4),\\
\nabla_{\bar{\bfw}}^2 I_{\sigma_\eta^2, \sigma_\eta^2} &= a^2 \mu^2 V_0^{-1}/(2\sigma_\eta^4), \\
\nabla_{\bar{\bfw}, a}^2 I_{\sigma_\eta^2, \sigma_\eta^2}&= \mu (aV_0^{-1} \bfm_{0\bfw} - \tfrac{\Lambda}{\sigma_\eta^2} \bfm_{0\bf1})/\sigma_\eta^4.  
\end{aligned}
\end{equation*}
Let $g = 2 \frac{\Lambda}{\sigma_\eta^2} \bfm_{0\bf1}$. At $(a^\opt, \bar{\bfw}^\opt)$, the Hessian of $I_{\sigma_\eta^2, \sigma_\eta^2}$ is 
\begin{equation*}
\mathcal{H} = \frac{1}{2\sigma_\eta^4} \begin{bmatrix}
 2n + (a^\opt)^{-2} g^T V_0 g & \mu g^T  \\
\mu g & \mu^2  (a^\opt)^{2} V_0^{-1}
\end{bmatrix},
\end{equation*}
$\mathcal{H}$ is symmetric positive definite because 
\begin{equation*}
\begin{aligned}
x^T \mathcal{H} x = \frac{1}{2\sigma_\eta^4} \left\{ 2nx_1^2 + \Big( \frac{x_1 g}{a^\opt} + \mu a^\opt V_0^{-1}   x_2 \Big)^T V_0 \right.\\
\left. \times  \left( \frac{x_1 g}{a^\opt}  + \mu a^\opt V_0^{-1} x_2 \right) \right\} > 0
\end{aligned}
\end{equation*}
for any $x = (x_1, x_2)^T \neq 0$, where the dimensions of $x_1, x_2$ matches that of $a$ and $\bfw$. Hence $I_{\sigma_\eta^2, \sigma_\eta^2} $  is minimized at $(a^\opt, \bar{\bfw}^\opt)$. 
\end{proof}

\begin{proof}[Proof of Corollary 3]
$a^\opt =\tr(C^{-1})/n$ where $C = \gamma \Lambda^{-1} + I$. Hence $a^\opt$ is a function of $\phi$ and $\gamma$ only. The eigenvalues $\{\lambda_i\}$ of $\gamma \Lambda^{-1} $ are positive since $\Lambda^{-1}$ is a symmetric positive definite matrix. Thus the eigenvalues of $C$ are $\{\lambda_i + 1\}$ and $\tr(C^{-1}) = \sum_{i=1}^n 1/(\lambda_i +1) \in (0, n)$. It follows that $a^\opt \in (0,1)$. $\partial a^\opt/\partial \gamma = - \tr(C^{-1}\Lambda^{-1} C^{-1})/n < 0$ since $C^{-1}\Lambda^{-1} C^{-1}$ is positive definite.
\end{proof}

\begin{proof}[Proof of Corollary 4]
As $\bar{\bfw}^\opt = AS^{-1} (y/\mu - \bone)$ and $y \sim \N(\mu \bone, S)$, $\E(\bar{\bfw}^\opt) = \bfzero$ and $\cov(\bar{\bfw}^\opt) = AS^{-1}SS^{-1}A/\mu^2 = AS^{-1}A/\mu^2$.
\end{proof}

The proof of Theorem 4 requires the trace of $Q^{-1}$ which is stated in Property \ref{prop4} \citep[Theorem 3,][]{Tan2019}. The trace of $Q^{-2}$, which is required for deriving the rate of convergence of Algorithm 2 in Theorem 5 is also given in Property \ref{prop4}. 

\begin{property} \label{prop4}
\[
\tr(Q^{-1}) = \frac{n\kappa_0  (\vp^2\rpp^{n-1} + \vm^2 \rmm^{n-1}) + 2\gamma(\rpp^n - \rmm^n)}{\kappa_0^2 \kappa}
\]
and $\tr(Q^{-2}) = (\kappa_0^2 \kappa^2)^{-1} \mathcal{S}$, where
\begin{equation*}
\begin{gathered}
\mathcal{S}= 4n^2\gamma^2  + 8n\gamma  (\phi^2-1) + \frac{nc(\vp^4 \rpp^{2n-2} - \vm^4\rmm^{2n-2})} {\kappa_0} \\
- 4\gamma(1 + \phi^2) + 4\gamma \kappa_0^{-2} \{4 \gamma +  c( \vp^2\rpp^{2n-1} +\vm^2\rmm^{2n-1})\} \\
+ 2(\phi^2 -1)^2+ 2\kappa_0^{-1}(\phi^2-1)(\vp^2 \rpp^{2n-1} - \vm^2 \rmm^{2n-1}).
\end{gathered}
\end{equation*}
\end{property}

\begin{proof}[Proof of Theorem 4]
We have 
\[
a^\opt = 1 - \tr(V_0)/(n\sigma_\epsilon^2) = 1 - \gamma \tr(Q^{-1}) / (n|\phi|).
\]
From Property \ref{prop4}, 
\begin{multline*}
\lim_{n\rightarrow \infty} \frac{\tr(Q^{-1})}{n}  =\frac{1}{\kappa_0}\lim_{n\rightarrow \infty} \frac{\vp^2+ \vm^2 (\rmm/\rpp)^{n-1}  }{\vp^2 - \vm^2 (\rmm/\rpp)^{n-1} } \\
+ \frac{2\gamma}{\kappa_0^2} \lim_{n\rightarrow \infty} \frac{ 1 - (\rmm/\rpp)^n }{n\{\vp^2/\rpp -( \vm^2 /\rpp) (\rmm/\rpp)^{n-1}\}},
\end{multline*}
which reduces to $\kappa_0^{-1}$ since $0 <\rmm/\rpp < 1$. Moreover 
\[
\begin{aligned}
\kappa_0 &= \rpp-\rmm = \sqrt{c^2-4} \\
&=  \sqrt{\{(1-\phi)^2 + \gamma\}\{(1+\phi)^2 + \gamma\}}/|\phi|.
\end{aligned}
\]
Thus we obtain the large-sample limit of $a^\opt$.
\end{proof}

The large-sample estimate of $a^\opt$ is $\hat{a}^\opt = 1 - \gamma/\sqrt{f}$, where 
\begin{equation*}
\begin{aligned}
f &= \{(1-\phi)^2 + \gamma\}\{(1+\phi)^2 + \gamma\} \\
&= \gamma^2 + 2\gamma(1+\phi^2) + (1-\phi^2)^2. \\
\nabla_\gamma \hat{a}^\opt &= - [\gamma (1+\phi^2) + (1-\phi^2)^2] < 0. \\
\nabla_{|\phi|} \hat{a}^\opt &= 2f^{-3/2} \gamma |\phi| (\gamma+ \phi^2 - 1).
\end{aligned}
\end{equation*}

\begin{proof}[Proof of Theorem 5]
The optimal rate of convergence of Algorithm 2 is achieved when $a= a^\opt$ and $\bar{\bfw} = \bar{\bfw}^\opt$, where $\xi =  \bfm_{0\bone}^T \Lambda \bfm_{0\bone} / \sigma_\eta^2 = \tr(V_0)/\sigma_\epsilon^2$ from \eqref{sigetaeq}. Plugging $a= a^\opt$ and $\bar{\bfw} = \bar{\bfw}^\opt$ in $I_{\sigma_\eta^2, \sigma_\eta^2} $,
\begin{align*}
I_{\sigma_\eta^2, \sigma_\eta^2} 
&= \tfrac{1}{2\sigma_\eta^4} \big\{ 2a \bfm_{0\bone}^T \tfrac{\Lambda}{\sigma_\eta^2} \bfm_{0\bone} - 2 \bfm_{0\bone}^T \tfrac{\Lambda}{\sigma_\eta^2} V_0 \tfrac{\Lambda}{\sigma_\eta^2} \bfm_{0\bone} \\
& \quad + n(1-a)^2 \big\} \\
&= \tfrac{1}{2\sigma_\eta^4} \big\{ 2a \xi  - 2 \xi +  2 \bfm_{0\bone}^T S^{-1} \bfm_{0\bone} + \tfrac{\xi^2}{n} \big\} \\
&=  \tfrac{1}{2\sigma_\eta^4} \big\{   2 \bfm_{0\bone}^T S^{-1} \bfm_{0\bone}   -  \tfrac{\xi^2}{n} \big\}.
\end{align*}
The rate of convergence of Algorithm 2 at $(a^\opt, \bar{\bfw}^\opt)$ is 
\begin{equation*}
\begin{aligned}
1 - \frac{I_{\sigma_\eta^2, \sigma_\eta^2}^\obs}{I_{\sigma_\eta^2, \sigma_\eta^2}}
&= 1 - \frac{ 2{\bfm_{0\bf1}^T S^{-1}\bfm_{0\bf1}} - {\tr(V_0^2)} /\sigma_\epsilon^4}{2 \bfm_{0\bone}^T S^{-1} \bfm_{0\bone} -  \tr(V_0)^2 /(n\sigma_\epsilon^4)} \\
&= \frac{ \tr(V_0^2)/(n\sigma_\epsilon^4) -  \tr(V_0)^2 /(n^2\sigma_\epsilon^4)}{2 \bfm_{0\bone}^T S^{-1} \bfm_{0\bone}/n -  \tr(V_0)^2 /(n^2\sigma_\epsilon^4)}.
\end{aligned}
\end{equation*}
This rate is dependent on the observed data and varies even across data sets simulated from the same distribution. We approximate the convergence rate by replacing $\bfm_{0\bone}^T S^{-1} \bfm_{0\bone}/n$ with its mean, $\tr(V_0^2)/(n\sigma_\epsilon^4)$. Note that $\bfm_{0\bone}^T S^{-1} \bfm_{0\bone}/n$ converges in mean square to $\tr(V_0^2)/(n\sigma_\epsilon^4)$ since $\var(\bfm_{0\bf1}^T S^{-1}\bfm_{0\bf1}/n) = 2\tr(V_0^4)/(n^2\sigma_\epsilon^8) = \mathcal{O}(1/n)$. From Property \ref{prop4},
\begin{align*}
\lim_{n \rightarrow \infty} \frac{\tr(V_0)^2}{n^2\sigma_\epsilon^4} 
= \frac{\gamma^2}{\phi^2} \left[ \lim_{n \rightarrow \infty} \frac{\tr(Q^{-1})}{n} \right]^2 
= \frac{\gamma^2}{\phi^2 \kappa_0^2},
\end{align*}
since we have shown that $\lim_{n \rightarrow \infty} \tr(Q^{-1})/n = 1/\kappa_0$ when proving Theorem 4. In addition,
\begin{align*}
\lim_{n \rightarrow \infty} \frac{\tr(V_0^2)}{n\sigma_\epsilon^4} 
= \frac{\gamma^2}{\phi^2} \lim_{n \rightarrow \infty} \frac{\tr(Q^{-2})}{n}
= \frac{\gamma^2c}{\phi^2 \kappa_0^3}.
\end{align*}
This is because, from Property \ref{prop4}, all the terms in $\tr(Q^{-2})/n$ go to zero as $n \rightarrow \infty$ except for
\begin{multline*}
\frac{nc(\vp^4 \rpp^{2n-2} - \vm^4\rmm^{2n-2})} {\kappa_0^3 \kappa^2} =  \frac{c}{\kappa_0^3}\frac{(\vp^4 - \vm^4 (\tfrac{\rmm}{\rpp})^{2n-2})} {(\vp^2 - \vm^2 (\tfrac{\rmm}{\rpp})^{n-1})^2},
\end{multline*}
which goes to $c/\kappa_0^3$.
Thus the approximate convergence rate of Algorithm 2 when $n$ is large is 
\begin{equation*}
\begin{aligned}
J_{\text{Alg 2}}^\text{asymp} &= \lim_{n \rightarrow \infty} \frac{ \tr(V_0^2)/(n\sigma_\epsilon^4) -  \tr(V_0)^2 /(n^2\sigma_\epsilon^4)}{2 \tr(V_0^2)/(n\sigma_\epsilon^4)  -  \tr(V_0)^2 /(n^2\sigma_\epsilon^4)} \\
&= \frac{c - \kappa_0}{2c - \kappa_0} < \frac{1}{2}.
\end{aligned}
\end{equation*}
Recall that $\kappa_0 = \sqrt{c^2-4} < c$. Writing $J_{\text{Alg 2}}^\text{asymp}=4(c^2 + c\sqrt{c^2-4} + 4)^{-1}$,
\begin{equation*}
\begin{aligned}
\nabla_\gamma J_{\text{Alg 2}}^\text{asymp} &= -\frac{4(2c + \sqrt{c^2-4} + c^2/\sqrt{c^2-4})}{|\phi|(c^2 + c\sqrt{c^2-4} + 4)^2} < 0, \\
\nabla_{|\phi|} J_{\text{Alg 2}}^\text{asymp} &= -\nabla_\gamma J_{\text{Alg 2}}^\text{asymp} \frac{1 +\gamma -\phi^2}{ |\phi|} > 0.
\end{aligned}
\end{equation*}
Thus the approximate rate of convergence decreases as $\gamma$ increases and $|\phi|$ decreases.
\end{proof}

Next, we consider the rate of convergence of Algorithm 2 at some other combinations of $(a, \bfw)$, which are necessarily greater than the optimal rate above. For the CP ($a=0, \bfw=\bfzero$), 
\[
I_{\sigma_\eta^2, \sigma_\eta^2} = n/(2\sigma_\eta^4).
\]
Let $\delta = \bfm_{0\bf1}^T V_0^{-1} \bfm_{0\bf1}/(2n) > 0$ unless $\bfy = \mu \bone$. For the NCP ($a=0, \bfw=\bfzero$), 
\[
I_{\sigma_\eta^2, \sigma_\eta^2} = n\delta/(2\sigma_\eta^4).
\]
We can also consider fixing $\bfw = \bone$ and optimizing $I_{\sigma_\eta^2, \sigma_\eta^2}$ with respect to $a$ only. From \eqref{aopt}, this sub-optimal parametrization corresponds to
\[
a = 1/(1+\delta) \quad \text{and} \quad I_{\sigma_\eta^2, \sigma_\eta^2} = n\delta/[2\sigma_\eta^4(1+\delta)].
\]
The convergence rates follow a hierarchy,
\begin{equation*}
J_{\text{Alg 2}}^{\opt} \leq J_{\text{Alg 2}}^{\text{sub-opt}} < \min\{ J_{\text{Alg 2}}^{\text{CP}}, J_{\text{Alg 2}}^{\text{NCP}} \}.
\end{equation*}

\section{Augmented information matrix}
Let $\Lambda_2= \nabla_\phi^2 \Lambda = \diag([0,2 , \dots, 2, 0])$. The other elements in $\nabla_\bftheta^2 Q(\bftheta | \bftheta^{(i)})$ are given by 
\begin{align*}
\nabla_\phi^2 Q(\bftheta | \bftheta^{(i)}) &= - \frac{\tr(\Lambda_2 U)}{2\sigma_\eta^2} - \frac{1+\phi^2}{(1-\phi^2)^2}, \\
\nabla_{\sigma_\epsilon^2}^2 Q(\bftheta | \bftheta^{(i)}) &=  \frac{n}{2\sigma_\epsilon^4} - \frac{\sigma_\eta^{2a} \tr(V_a^{(i)}) +{z^{(i)}}^T z^{(i)}}{\sigma_\epsilon^6}, \\
\nabla_{\mu,\sigma_\eta^2}^2 Q(\bftheta | \bftheta^{(i)}) 
&= \tfrac{1}{2\sigma_\eta^2} \Big[2\mu\bar{\bfw}^T \tfrac{\Lambda}{\sigma_\eta^2}\bar{\bfw}  - \tfrac{a}{\sigma_\epsilon^2}  \sigma_\eta^a {\bfm_{a\bfw}^{(i)}}^T \bfw \\
\quad & + (a-2) \sigma_\eta^a {\bfm_{a\bfw}^{(i)}}^T \tfrac{\Lambda}{\sigma_\eta^2}\bar{\bfw} \Big],  \\
\nabla_{\mu, \phi}^2 Q(\bftheta | \bftheta^{(i)})
&= \bar{\bfw}^T \Lambda_1 (\sigma_\eta^a \bfm_{a\bfw}^{(i)} - \mu \bar{\bfw})/\sigma_\eta^2, \\
\nabla_{\mu, \sigma_\epsilon^2}^2 Q(\bftheta | \bftheta^{(i)}) 
&= - {z^{(i)}}^T\bfw/\sigma_\epsilon^4, \\
\nabla_{\sigma_\eta^2, \sigma_\epsilon^2}^2 Q(\bftheta | \bftheta^{(i)}) &= \frac{a [\sigma_\eta^{2a} \tr(V_a^{(i)}) - \sigma_\eta^a{z^{(i)}}^T \bfm_{a\bfw}^{(i)} ] }{2\sigma_\epsilon^4 \sigma_\eta^2}, \\
\nabla_{\phi, \sigma_\epsilon^2}^2 Q(\bftheta | \bftheta^{(i)}) &= 0,
\end{align*}
\begin{multline*}
\nabla_{\sigma_\eta^2, \phi}^2 Q(\bftheta | \bftheta^{(i)}) = \tfrac{1}{2\sigma_\eta^4}\{  (a-2) \mu \sigma_\eta^a {\bfm_{a\bfw}^{(i)}}^T \Lambda_1 \bar{\bfw} \\
+ \mu^2 \bar{\bfw}^T \Lambda_1 \bar{\bfw} + (1-a)\sigma_\eta^{2a} [\tr(\Lambda_1 V_a) + { \bfm_{a\bfw}^{(i)}}^T \Lambda_1  \bfm_{a\bfw}^{(i)} ]\}.
\end{multline*}

When $\bftheta = \bftheta^{(i)}$ at convergence, we can use the identities in Appendix \ref{Appendix A} to obtain
\begin{align*}
\nabla_\phi^2 Q(\bftheta | \bftheta) &= - \frac{\tr[ \Lambda_2 (V_0 + \bfm_{0\bf1} \bfm_{0\bf1}^T)]}{2\sigma_\eta^2} - \frac{1+\phi^2}{(1-\phi^2)^2}, \\
\nabla_{\sigma_\epsilon^2}^2 Q(\bftheta | \bftheta)  &= \frac{n}{2\sigma_\epsilon^4}- \frac{ \tr(V_0) + (\bfy - \bfm_{0\bfzero})^T (\bfy - \bfm_{0\bfzero})}{\sigma_\epsilon^6}, \\
\nabla_{\mu, \sigma_\eta^2}^2 Q(\bftheta | \bftheta) 
&= \tfrac{a}{2\sigma_\eta^2}\bfm_{0\bfw}^T \big( \tfrac{\Lambda}{\sigma_\eta^2} \bone - V_0^{-1} \bfw \big) - \bfm_{0\bone}^T \tfrac{\Lambda}{\sigma_\eta^4} \bar{\bfw}, \\
\nabla_{\mu, \phi}^2 Q(\bftheta | \bftheta) &= \bfm_{0\bone}^T \Lambda_1 \bar{\bfw}/ \sigma_\eta^2, \\
\nabla_{\mu, \sigma_\epsilon^2}^2 Q(\bftheta | \bftheta) &= - \bfm_{0\bone}^T\Lambda \bfw/ (\sigma_\eta^2 \sigma_\epsilon^2). \\
\nabla_{ \sigma_\eta^2,\phi}^2 Q(\bftheta | \bftheta) 
&=  \tfrac{1}{2\sigma_\eta^4} \{(1-a)[\tr(\Lambda_1 V_0) + \bfm_{0\bone}^T \Lambda_1 \bfm_{0\bone} ]\\
& \quad  - a \mu \bfm_{0\bone}^T \Lambda_1  \bar{\bfw}\}, \\
\nabla_{ \sigma_\eta^2, \sigma_\epsilon^2}^2 Q(\bftheta | \bftheta) &= \frac{a[\gamma \tr(V_0) - \bfm_{0\bone}^T \Lambda \bfm_{0\bone}- \mu \bfm_{0\bone}^T \Lambda \bar{\bfw} ]}{2\sigma_\epsilon^2 \sigma_\eta^4} ,  
\end{align*}

Next, we need to evaluate $\nabla^2_\bftheta Q(\bftheta|\bftheta)$ at $\bftheta= \bftheta^*$ (MLE). To simplify notation, we omit the asterisk below, but the expressions in the remainder of this section are all evaluated at $\bftheta = \bftheta^*$. Using the identities in \eqref{mueq}, \eqref{sigetaeq}, \eqref{phieq} and \eqref{sigepseq}, we can simplify $I_\augm(\bftheta^*)$ so that it is clearer whether there is any dependence on $a$ and $\bfw$. The elements in $I_\augm(\bftheta^*)$ are given by 
\begin{align*}
I_{\mu, \sigma_\eta^2} &= \frac{1}{\sigma_\eta^2} \left[ \frac{a \bfm_{0 \bfw}^T}{2} \left( V_0^{-1} \bfw - \frac{\Lambda \bone}{\sigma_\eta^2} \right) + \frac{\bfm_{0\bf1}^T \Lambda \bar{\bfw}}{\sigma_\eta^2} \right], \\
I_{\sigma_\eta^2, \phi} &= \frac{(1-a) \phi}{\sigma_\eta^2 (1-\phi^2)} + \frac{a \mu \bfm_{0\bone}^T \Lambda_1\bar{\bfw}}{2 \sigma_\eta^4}, \\
I_{\phi, \phi} &= \frac{1}{\sigma_\eta^2} \sum_{t=2}^{n-1} (V_0 + \bfm_{0\bone} \bfm_{0\bone}^T)_{tt} + \frac{1+\phi^2}{(1-\phi^2)^2}, \\
I_{\mu, \phi} &=- \frac{\bfm_{0\bf1}^T\Lambda_1 \bar{\bfw}}{\sigma_\eta^2},
\qquad
I_{ \mu, \sigma_\epsilon^2} =  -\frac{\bfm_{0\bone}^T\Lambda \bar{\bfw}}{\sigma_\epsilon^2\sigma_\eta^2},  \\
I_{\sigma_\eta^2, \sigma_\epsilon^2}  &= \frac{a\mu \bfm_{0\bf1}^T  \Lambda \bar{\bfw}}{2\sigma_\eta^4 \sigma_\epsilon^2}, 
\quad 
I_{\phi, \sigma_\epsilon^2} = 0, \quad I_{\sigma_\epsilon^2,\sigma_\epsilon^2} = \frac{n}{2\sigma_\epsilon^4}. 
\end{align*}

Note that $\bfm_{0\bf1}^T \frac{\Lambda}{\sigma_\eta^2} \bfw = - \bfm_{0\bf1}^T \frac{\Lambda}{\sigma_\eta^2} \bar{\bfw}$ as $\bfm_{0\bf1}^T \frac{\Lambda}{\sigma_\eta^2} \bone = 0$ from \eqref{mueq}. Next, we analyze the elements in $I_\augm(\bftheta^*)$ to identify those that will vary substantially with $a$ and $\bfw$ when $n$ is large. This is done by considering $\lim_{n \rightarrow \infty} I_{\theta_i, \theta_j}/n$. We will use the fact that since $\bfy \sim \N(\mu \bone, S)$, $\bfm_{0\bf1} \sim \N(\bfzero, V_0 S V_0/\sigma_\epsilon^4)$ and Theorem \ref{lem_quad} below.

\begin{lemma} \label{lem_normal_as}
Let $X_n\sim N(0, \sigma_n^2)$, where $\sigma_n^2 \leq M/n$ with $M>0$ being a constant. Then $X_n \overset{a.s.}{\longrightarrow} 0$.
\end{lemma}
\begin{proof}
If $x > 0$,
\[
\begin{aligned}
1 - \Phi(x) = \int_x^\infty \frac{ \e^{-z^2/2}}{\sqrt{2\pi}} dz 
\leq \int_x^\infty \frac{z\e^{-z^2/2}}{x\sqrt{2\pi}} dz 
= \frac{\e^{-x^2/2}}{x\sqrt{2\pi}}.
\end{aligned}
\]
\begin{multline*}
P(|X_n| > \epsilon) = 2[1-\Phi(\epsilon/\sigma_n)]  \\
\leq  2[1-\Phi(\epsilon\sqrt{n/M})]  \leq \sqrt{\frac{2M}{n\pi}} \frac{\e^{-n\epsilon^2/(2M)}}{\epsilon}.
\end{multline*}
Thus
\[
\begin{aligned}
\sum_{n=1}^\infty P(|X_n| > \epsilon) &\leq \frac{\sqrt{2M/\pi}}{\epsilon} \sum_{n=1}^\infty \frac{\e^{-n\epsilon^2/(2M)}}{\sqrt{n}}\\
& \leq \frac{\sqrt{2M/\pi}}{\epsilon} \sum_{n=1}^\infty \e^{-n\epsilon^2/(2M)} \\
&= \frac{\sqrt{2M/\pi}}{\epsilon(\e^{\epsilon^2/(2M)}-1)} < \infty.
\end{aligned}
\]
Hence $X_n \overset{a.s.}{\longrightarrow} 0$.
\end{proof}

\begin{theorem} \label{lem_quad}
Suppose $\bfu = (u_1, \dots, u_n)^T$ is a bounded constant vector, that is, there exists $C>0$ such that $|u_t| \leq C$ for all $t=1, \dots, n$ where $n > 1$. Then $\bfm_{0\bf1}^T \Lambda \bfu/ n$, $\bfm_{0\bf1}^T \Lambda_1 \bfu/ n$ and $\bfm_{0\bf1}^T S^{-1} \bfu/ n$ converge to zero almost surely.
\end{theorem}
\begin{proof}
As $\bfm_{0\bf1}^T \Lambda \bfu/ n$, $\bfm_{0\bf1}^T \Lambda_1 \bfu/ n$ and $\bfm_{0\bf1}^T S^{-1} \bfu/ n$ are all normally distributed with zero means, from Lemma \ref{lem_normal_as}, it suffices to show that $\var(\bfm_{0\bf1}^T \Lambda \bfu/ n)$, $\var(\bfm_{0\bf1}^T \Lambda_1 \bfu/ n)$ and $\var(\bfm_{0\bf1}^T S^{-1} \bfu/ n)$ are bounded by $M/n$ for some $M > 0$.
\begin{multline*}
\var\Big( \frac{\bfm_{0\bf1}^T\Lambda\bfu}{n} \Big) 
= \frac{\bfu^T \Lambda V_0 S V_0 \Lambda \bfu}{n^2\sigma_\epsilon^4} \\
= \frac{\sigma_\eta^4}{n^2} \bfu^T \bigg(\frac{I}{\sigma_\epsilon^2} -\frac{V_0}{\sigma_\epsilon^4} \bigg) \bfu
\leq \frac{\sigma_\eta^4}{n^2\sigma_\epsilon^2} \bfu^T \bfu
\leq  \frac{C^2 \sigma_\eta^4 }{n\sigma_\epsilon^2}.
\end{multline*}
\begin{multline*}
\var\Big( \frac{\bfm_{0\bf1}^T\Lambda_1 \bfu}{n} \Big) 
= \frac{ \bfu^T \Lambda_1 V_0 S V_0 \Lambda_1 \bfu }{n^2 \sigma_\epsilon^4} \\
= \tfrac{1}{n^2}\bfu^T \Lambda_1 (\sigma_\eta^2 \Lambda^{-1} - V_0) \Lambda_1 \bfu 
\leq \tfrac{\sigma_\eta^2}{n^2} \bfu^T \Lambda_1 \Lambda^{-1} \Lambda_1 \bfu \\
\leq \tfrac{\sigma_\eta^2}{n^2} \lambda_{\max}(\Lambda^{-1}) \| \Lambda_1\bfu \|^2 
\leq 16  \lambda_{\max}(\Lambda^{-1}) \sigma_\eta^2C^2/n,
\end{multline*}
where $\lambda_{\max}(\cdot)$ denotes the maximum eigenvalue. Note that
\begin{align*}
\| \Lambda_1 \bfu \|^2 
&= u_2^2 + u_{n-1}^2 + \sum_{t=2}^{n-1} (2\phi u_t - u_{t-1} - u_{t+1})^2 \\
&\leq 2C^2 + (n-2)(4C)^2 = (16n-30)C^2.
\end{align*}

Finally,
\begin{multline*}
\var\Big( \frac{\bfm_{0\bf1}^T S^{-1} \bfu}{n} \Big) 
= \frac{ \bfu^T  S^{-1} V_0 S V_0  S^{-1} \bfu }{n^2 \sigma_\epsilon^4} \\
= \frac{ \bfu^T  V_0^2  S^{-1} \bfu }{n^2 \sigma_\epsilon^4} 
= \frac{ \bfu^T  V_0^2}{n^2 \sigma_\epsilon^4} \Big( \frac{I}{\sigma_\epsilon^2} - \frac{V_0}{\sigma_\epsilon^4} \Big) \bfu \\
\leq \frac{ \bfu^T  V_0^2 \bfu }{n^2 \sigma_\epsilon^6} 
\leq \frac{\lambda_{\max}(V_0^2) \| \bfu \|^2}{n^2 \sigma_\epsilon^6} \leq \frac{ \lambda_{\max}(V_0^2) C^2}{n \sigma_\epsilon^6}.
\end{multline*}
\end{proof}
Assuming $\bfw$ (and hence $\bar{\bfw}$) is bounded, $\bfm_{0\bf1}^T \Lambda \bar{\bfw}/n$, $\bfm_{0\bf1}^T \Lambda_1 \bar{\bfw}/n$ and $\bfm_{0\bf1}^T S^{-1} \bar{\bfw}/n$ converge to zero almost surely. As $\Lambda = I$ if $\phi=0$, $\bfm_{0\bf1}^T \bar{\bfw}/n$ and $\bfm_{0\bf1}^T\bfw/ n$ also converge to zero almost surely. Thus we have 
\begin{align*}
\lim_{n \rightarrow \infty} \frac{I_{\mu, \phi}}{n} = 0, \quad
\lim_{n \rightarrow \infty} \frac{I_{ \mu, \sigma_\epsilon^2}}{n} = 0, \\
\lim_{n \rightarrow \infty} \frac{I_{\sigma_\eta^2, \phi}}{n} = 0, \quad
\lim_{n \rightarrow \infty} \frac{I_{\sigma_\eta^2, \sigma_\epsilon^2}}{n} = 0.
\end{align*}

\section{Kalman Filter and Smoother}
Consider the AR(1) plus noise model specified (with $a=0, \bfw=\bone$) as 
\begin{align}\label{SSM_01}
y_t &= \alpha_t + \mu + \sigma_\epsilon \epsilon_t, && t \geq 1,\\
\alpha_{t+1} &= \phi \alpha_{t} + \sigma_\eta \eta_t, \nonumber \\
\alpha_1 & \sim \N \left(0, \sigma_\eta^2/(1-\phi^2) \right). \nonumber
\end{align}
Define $\hat{\alpha}_{t|j} = \E(\alpha_t|y_{1:j})$ and $P_{t|j} = \var(\alpha_t|y_{1:j})$ for $t, j =1, \dots, n$, with $\hat{\alpha}_{1|0} = 0$ and $P_{1|0} = \sigma_\eta^2/(1-\phi^2)$ being the initial prior values. From \eqref{SSM_01}, the one-step ahead prediction step is given by
\begin{equation*}
\begin{aligned}
\hat{\alpha}_{t+1|t} = \phi \hat{\alpha}_{t|t}, \qquad  P_{t+1|t} = \phi^2 P_{t|t} + \sigma_\eta^2.
\end{aligned}
\end{equation*}
The joint distribution $p(\alpha_t, y_t|y_{1:t-1})$ is 
\begin{equation*}
\N \left( \begin{bmatrix} \hat{\alpha}_{t|t-1} \\  \hat{\alpha}_{t|t-1} + \mu \end{bmatrix}, 
\begin{bmatrix} P_{t|t-1} & P_{t|t-1} \\ P_{t|t-1} & F_t \end{bmatrix}  \right).
\end{equation*}
where $F_t = P_{t|t-1} + \sigma_\epsilon^2$. Hence, the updating step is 
\begin{equation} \label{update_step}
\begin{aligned}
\hat{\alpha}_{t|t} &=  \E(\alpha_t|y_{1:t-1}, y_t) \\
&= \hat{\alpha}_{t|t-1} + P_{t|t-1} F_t^{-1}v_t = \hat{\alpha}_{t|t-1} + K_tv_t, \\
P_{t|t} &=  \var(\alpha_t|y_{1:t-1}, y_t) \\
&= P_{t|t-1} - P_{t|t-1}^2 F_t^{-1} = P_{t|t-1} (1 - K_t),
\end{aligned}
\end{equation}
where $v_t = y_t - \mu- \hat{\alpha}_{t|t-1}$ and the Kalman gain $K_t = P_{t|t-1} F_t^{-1}$. The joint distribution $p(\alpha_t, \alpha_{t+1}|y_{1:t})$ is 
\begin{equation*}
\N \left( \begin{bmatrix} \hat{\alpha}_{t|t} \\ \hat{\alpha}_{t+1|t} \end{bmatrix}, 
\begin{bmatrix} P_{t|t} & \phi P_{t|t}\\
 \phi P_{t|t} &  P_{t+1|t}
\end{bmatrix}  \right),
\end{equation*}
and $\alpha_t$ is independent of $y_{t+1:n}$ given $\alpha_{t+1}$. Hence
\begin{equation*}
\begin{aligned}
\E(\alpha_t|\alpha_{t+1}, y_{1:n}) &= \E(\alpha_t|\alpha_{t+1}, y_{1:t}) \\
&= \hat{\alpha}_{t|t} + \phi P_{t|t} P_{t+1|t}^{-1} (\alpha_{t+1} - \hat{\alpha}_{t+1|t} ). \\
\var(\alpha_t|\alpha_{t+1}, y_{1:n}) &= \var(\alpha_t|\alpha_{t+1}, y_{1:t}) \\
&= P_{t|t} - \phi^2 P_{t|t}^2 P_{t+1|t}^{-1}. 
\end{aligned}
\end{equation*}
Applying the Tower property and law of total variance, 
\begin{align}\label{smoothing_eq}
\hat{\alpha}_{t|n} &= \E(\alpha_t|y_{1:n}) = \E[ \E(\alpha_t|\alpha_{t+1}, y_{1:n})|y_{1:n} ] \\
&= \hat{\alpha}_{t|t} + \phi P_{t|t} P_{t+1|t}^{-1} (\hat{\alpha}_{t+1|n} - \hat{\alpha}_{t+1|t} ), \nonumber \\
P_{t|n} &= \var(\alpha_t|y_{1:n}) =\E [\var(\alpha_t|\alpha_{t+1}, y_{1:n})|y_{1:n}] \nonumber\\
& \quad + \var[\E(\alpha_t|\alpha_{t+1}, y_{1:n}) |y_{1:n}] \nonumber\\
&= P_{t|t} - \phi^2 P_{t|t}^2 P_{t+1|t}^{-2} (P_{t+1|t} - P_{t+1|n}). \nonumber
\end{align}
From \eqref{update_step} and \eqref{smoothing_eq}, $\hat{\alpha}_{t|n} = \hat{\alpha}_{t|t-1} + P_{t|t-1} r_{t-1}$, where
\begin{equation*}
\begin{aligned}
r_{t-1} &=  F_t^{-1}v_t + \phi (1 - K_t) P_{t+1|t}^{-1} (\hat{\alpha}_{t+1|n} - \hat{\alpha}_{t+1|t} )\\
&=  F_t^{-1}v_t +L_t r_t, 
\end{aligned}
\end{equation*}
where $L_t = \phi (1 - K_t)$. Similarly, $P_{t|n} =  P_{t|t-1} - P_{t|t-1}^2N_{t-1}$, where
\begin{equation*}
\begin{aligned}
N_{t-1} &= F_t^{-1}  + L_t^2 P_{t+1|t}^{-2} (P_{t+1|t} - P_{t+1|n}) = F_t^{-1}  + L_t^2 N_t. 
\end{aligned}
\end{equation*}
Hence we obtain the backwards recurrence relations for $\{r_t\}$ and $\{N_t\}$ with initial values $r_n = 0$ and $N_n = 0$. 
In addition, $P_{t,t+1|n} = \E(\alpha_t \alpha_{t+1}|y_{1:n}) - \hat{\alpha}_{t|n}\hat{\alpha}_{t+1|n}$, where
\begin{multline*}
\E(\alpha_t \alpha_{t+1}|y_{1:n}) =\E [\E(\alpha_t \alpha_{t+1}|\alpha_{t+1}, y_{1:n})|y_{1:n} ] \\
=\E [\{\hat{\alpha}_{t|t} + \phi P_{t|t} P_{t+1|t}^{-1} (\alpha_{t+1} - \hat{\alpha}_{t+1|t} )\}\alpha_{t+1}|y_{1:n} ] \\
= \hat{\alpha}_{t|t} \hat{\alpha}_{t+1|n} + L_t P_{t|t-1} P_{t+1|t}^{-1} \{ P_{t+1|n} \\
+ (\hat{\alpha}_{t+1|n} - \hat{\alpha}_{t+1|t})\hat{\alpha}_{t+1|n} \}.
\end{multline*}
Thus $P_{t,t+1|n}$ is given by 
\begin{equation*}
\begin{aligned}
& (\hat{\alpha}_{t|t} - \hat{\alpha}_{t|n}) \hat{\alpha}_{t+1|n} + L_t P_{t|t-1} (P_{t+1|t}^{-1} P_{t+1|n} + r_t\hat{\alpha}_{t+1|n})  \\
& = L_t P_{t|t-1} (P_{t+1|t}^{-1} P_{t+1|n} +  r_t \hat{\alpha}_{t+1|n}) - P_{t|t-1} L_t r_t \hat{\alpha}_{t+1|n}   \\
& = P_{t|t-1} L_t (1 - P_{t+1|t} N_t).
\end{aligned}
\end{equation*}
Algorithm \ref{KF_Alg} summarizes the steps for implementing the Kalman filter and smoother. Note that $(P_{1|n}, \dots, P_{n|n})^T$ and $(P_{1, 2|n}, \dots, P_{n-1, n|n})^T$ are respectively the diagonal and first diagonal of $V_0$ and $(\hat{\alpha}_{1|n}, \dots, \hat{\alpha}_{n|n})^T = \bfm_{0\bf1} = V_0(\bfy - \mu \bone)/\sigma_\epsilon^2$. We can also use the Kalman filter to compute $\bar{\bfw}^\opt = V_0 \bone/\sigma_\epsilon^2$ and $a^\opt = 1 - \tr(V_0)/\sigma_\epsilon^2$ by setting $\bfy= \bfzero$, $\mu=-1$, $\sigma_\epsilon^2=1$ and $\sigma_\eta^2 = \gamma$ in the above algorithm. If $\sigma_\epsilon$ is not constant across $t$, we only need to replace $\sigma_\epsilon$ by $\sigma_{\epsilon_t}$ in the Kalman filter.

\begin{algorithm}
\caption{Kalman filter and smoother} \label{KF_Alg}
\begin{normalsize}
\begin{flushleft}
Set $\hat{\alpha}_{1|0} = 0$ and $P_{1|0} = \sigma_\eta^2/(1-\phi^2)$. For $t=1, \dots, n,$
\begin{enumerate}
\item Compute $v_t = y_t - \mu- \hat{\alpha}_{t|t-1}$, $F_t = P_{t|t-1} + \sigma_\epsilon^2$ and $K_t = P_{t|t-1}F_t^{-1}$.
\item Update: $\hat{\alpha}_{t|t} = \hat{\alpha}_{t|t-1} + K_t v_t$, $P_{t|t} = P_{t|t-1} ( 1 - K_t)$.
\item Predict: $\hat{\alpha}_{t+1|t} = \phi \hat{\alpha}_{t|t}$, $P_{t+1|t} = \phi^2 P_{t|t} + \sigma_\eta^2$. 
\end{enumerate}
Store $\{v_t\}$, $\{F_t\}$, $\{K_t\}$, $\{\hat{\alpha}_{t+1|t}\}$ and $\{P_{t+1|t}\}$.\\
Initialize $r_n = 0$ and $N_n = 0$. For $t = n, \dots, 1$,
\begin{enumerate}
\item Compute $L_t = \phi (1 - K_t)$.
\item If $t < n$, compute $P_{t,t+1|n} = P_{t|t-1} L_t (1 - P_{t+1|t} N_t)$. 
\item Recurrence relation: $r_{t-1} =  F_t^{-1}v_t +L_t r_t$ and $N_{t-1} = F_t^{-1}  + L_t^2 N_t$. 
\item Compute $\hat{\alpha}_{t|n} = \hat{\alpha}_{t|t-1} + P_{t|t-1} r_{t-1}$ and $P_{t|n} =  P_{t|t-1} - P_{t|t-1}^2N_{t-1}$.
\end{enumerate}
\end{flushleft}
\end{normalsize}
\end{algorithm}

\section{EM algorithm parameter estimates for simulated data}
For the simulations in Section 3.1 where all parameters are unknown, Table \ref{Alg3avgest} shows the averaged estimates obtained at convergence. Any differences have been highlighted in bold. The greatest differences arise when $|\phi|=0.3$ and $\gamma=0.1$. For these settings, NCP is preferred to CP as it converges faster and achieves a higher log-likelihood on average. PNCP leans towards NCP automatically in these case and its estimates are closer to that of NCP than CP. However, PNCP is able to achieve a higher average log-likelihood than NCP, suggesting that its estimates are closer on average to the true maximum likelihood estimates.
\begin{table*}[htb!]
\caption{(All parameters unknown) Averaged estimates of $\{\mu, \sigma_\eta^2, \phi, \sigma_\epsilon^2\}$.}
\label{Alg3avgest}
\begin{tabular}{@{}ccccc@{}}
\hline $\phi$ & $\gamma$ & NCP & CP & PNCP \\   \hline
\multirow{3}{*}{$-$0.9} & 0.1 &($-$1.000, 0.010, $-$0.897, 0.100) & ($-$1.000, 0.010, $-$0.897, 0.100) & ($-$1.000, 0.010, $-$0.897, 0.100) \\
& 1 & ($-$1.000, 0.101, $-$0.899, 0.100) & ($-$1.000, 0.101, $-$0.899, 0.100) & ($-$1.000, 0.101, $-$0.899, 0.100) \\
& 10 & ($-$0.999, 1.005, $-$0.900, 0.099) & ($-$0.999, 1.005, $-$0.900, 0.099) & ($-$0.999, 1.005, $-$0.900, 0.099) \\ \hline
\multirow{3}{*}{$-$0.6} & 0.1 &  ($-$1.000, 0.011, $-$0.581, 0.099) & ($-$1.000, 0.011, $-$0.581, 0.099) & ($-$1.000, 0.011, $-$0.581, 0.099) \\
& 1 & ($-$1.000, 0.102, $-$0.599, 0.099) & ($-$1.000, 0.102, $-$0.599, 0.099) & ($-$1.000, 0.102, $-$0.599, 0.099) \\
& 10 & ($-$0.999, 1.002, $-$0.601, 0.100) & ($-$0.999, 1.002, $-$0.601, 0.100) & ($-$0.999, 1.002, $-$0.601, 0.100) \\ \hline
\multirow{3}{*}{$-$0.3} & 0.1 &  ($-$1.000, {\bf 0.025, $-$0.285, 0.085}) & ($-$1.000, {\bf 0.023, $-$0.291, 0.087}) & ($-$1.000, {\bf 0.027, $-$0.283, 0.083}) \\
& 1 & ($-$1.000, 0.109, $-$0.299, 0.092) & ($-$1.000, 0.109, $-$0.299, 0.092) & ($-$1.000, 0.109, $-$0.299, 0.092) \\
& 10 & ($-$0.999, 0.969, $-$0.311, 0.129) & ($-$0.999, 0.969, $-$0.311, 0.129) & ($-$0.999, 0.969, $-$0.311, 0.129) \\ \hline
\multirow{3}{*}{0.3} & 0.1 &  ($-$1.000, {\bf 0.026, 0.276, 0.085}) & ($-$1.000, {\bf 0.024, 0.282, 0.087}) & ($-$1.000, {\bf 0.027, 0.274, 0.083}) \\
& 1 & ($-$0.999, 0.109, {\bf 0.296}, 0.091) & ($-$0.999, 0.109, {\bf 0.296}, 0.091) & ($-$0.999, 0.109, {\bf 0.297}, 0.091) \\
& 10 & ($-$0.997, 0.956, 0.313, {\bf 0.142}) & ($-$0.997, 0.956, 0.313, {\bf 0.141}) & ($-$0.997, 0.956, 0.313, {\bf 0.141}) \\ \hline
\multirow{3}{*}{0.6} & 0.1 &  ($-$1.000, 0.011, {\bf 0.578}, 0.100) & ($-$1.000, 0.011, {\bf 0.578}, 0.100) & ($-$1.000, 0.011, {\bf 0.579}, 0.100) \\
& 1 & ($-$0.999, 0.100, 0.601, 0.101) & ($-$0.999, 0.100, 0.601, 0.101) & ($-$0.999, 0.100, 0.601, 0.101) \\
& 10 & ($-$0.995, 0.992, 0.603, 0.107) & ($-$0.995, 0.992, 0.603, 0.107) & ($-$0.995, 0.992, 0.603, 0.107) \\ \hline
\multirow{3}{*}{0.9} & 0.1 &  ($-$0.998, 0.010, 0.901, 0.100) & ($-$0.998, 0.010, 0.901, 0.100) & ($-$0.998, 0.010, 0.901, 0.100) \\
& 1 & ($-$0.993, 0.100, 0.901, 0.101) & ($-$0.993, 0.100, 0.901, 0.101) & ($-$0.993, 0.100, 0.901, 0.101) \\
& 10 & {(\bf $-$0.977}, 0.996, 0.901, 0.103) & ({\bf $-$0.976}, 0.996, 0.901, 0.103) & ({\bf$-$0.976}, 0.996, 0.901, 0.103) \\    \hline
\end{tabular}
\end{table*}

\section{EM Algorithm for finding posterior mode (unknown scale parameter)}
Suppose a prior $p(\sigma_\eta^2)$ has been specified on $\sigma_\eta^2$. Then at convergence, 
\begin{multline*}
\nabla_{\sigma_\eta^2} Q(\bftheta | \bftheta) = \tfrac{1}{2\sigma_\eta^2} \big\{  n(a-1) - a \tr(\tfrac{V_0}{\sigma_\epsilon^2}) +(1-a)\tr(V_0 \tfrac{ \Lambda }{\sigma_\eta^2}) \\
 \quad + \bfm_{0\bone}^T \tfrac{ \Lambda }{\sigma_\eta^2} \bfm_{0\bone}  \big\}+ \nabla_{\sigma_\eta^2} \log p(\sigma_\eta^2)= 0. 
\end{multline*}
If we set $a=1$ and $\bfw=\bone$, we obtain 
\begin{equation} \label{posterior_mode}
\begin{aligned}
 \bfm_{0\bone}  \tfrac{\Lambda}{\sigma_\eta^2} \bfm_{0\bone} - \tr(\tfrac{V_0}{\sigma_\epsilon^2}) + 2\sigma_\eta^2\nabla_{\sigma_\eta^2} \log p(\sigma_\eta^2)= 0. \\
\end{aligned}
\end{equation}
Thus, the posterior mode $\sigma_\eta^2$ satisfies \eqref{posterior_mode}. From \eqref{Qsigmaeta}, 
\begin{multline*}
\nabla_{\sigma_\eta^2}^2 Q(\bftheta | \bftheta) 
= - \tfrac{1}{2\sigma_\eta^4} \big\{ n(1-a)^2 + \tfrac{a^2}{2} \bfm_{0\bfw}^TV_0^{-1} \bfm_{0\bfw}   \\ 
+ 2(a-1) [\tr(V_0)/\sigma_\epsilon^2  - {\bfm_{0\bone}}^T \tfrac{\Lambda}{\sigma_\eta^2}  \bfm_{0\bone} ] \\
-2a \mu \bfm_{0\bone}^T \tfrac{\Lambda}{\sigma_\eta^2} \bar{\bfw}    \big\} + \nabla^2_{\sigma_\eta^2} \log p(\sigma_\eta^2).
\end{multline*}
Evaluating $\nabla_{\sigma_\eta^2}^2 Q(\bftheta | \bftheta) $ at the posterior mode,
\begin{multline*}
I_{\sigma_\eta^2, \sigma_\eta^2}^B 
= \tfrac{1}{2\sigma_\eta^4} \big\{\tfrac{a^2}{2} \bfm_{0\bfw}^TV_0^{-1} \bfm_{0\bfw} -2a \mu \bfm_{0\bone}^T \tfrac{\Lambda}{\sigma_\eta^2} \bar{\bfw}  \\ 
+ n(1-a)^2 + 4(a-1) \sigma_\eta^2\nabla_{\sigma_\eta^2} \log p(\sigma_\eta^2) \big\} -  \nabla^2_{\sigma_\eta^2} \log p(\sigma_\eta^2)  .
\end{multline*}
Differentiating with respect to $a$ and $\bar{\bfw}$,
\begin{equation*}
\begin{aligned}
\nabla_a I_{\sigma_\eta^2, \sigma_\eta^2}^B &=\tfrac{1}{2\sigma_\eta^4} \{  a \bfm_{0\bfw}^TV_0^{-1} \bfm_{0\bfw}  -2\mu \bfm_{0\bone}^T \tfrac{\Lambda}{\sigma_\eta^2} \bar{\bfw} \\
& \quad + 2n(a-1) + 4\sigma_\eta^2\nabla_{\sigma_\eta^2} \log p(\sigma_\eta^2)  \}, \\
\nabla_{\bar{\bfw}} I_{\sigma_\eta^2, \sigma_\eta^2}^B &= \tfrac{a\mu}{2\sigma_\eta^4}(aV_0^{-1} \bfm_{0\bfw}  -2 \tfrac{\Lambda}{\sigma_\eta^2}\bfm_{0\bone}). \\
\end{aligned}
\end{equation*}
Solving $\nabla_a I_{\sigma_\eta^2, \sigma_\eta^2}^B$ and $\nabla_{\bar{\bfw}} I_{\sigma_\eta^2, \sigma_\eta^2}^B = \bf0$ simultaneously, 
\begin{align*}
a^\opt &  = 1 - \tr(V_0)/ (n\sigma_\epsilon^2), \\
\bar{\bfw}^\opt &= \frac{1}{\mu} \left(\frac{2V_0 \Lambda}{a^\opt \sigma_\eta^2} - I  \right) \bfm_{0\bone}.
\end{align*}
Thus we obtain the same expressions of $(a^\opt, \bar{\bfw}^\opt)$ as in Section 2.2.

\section{Gibbs sampler and variational Bayes}
Suppose $\mu$ is only unknown parameter and $p(\mu) \propto 1$.
\begin{multline*}
p(\mu, \bfalpha | \bfy) \propto p(\bfy | \bfalpha, \mu) p(\bfalpha | \mu) \\
\propto \exp \{ - (\bfy - \sigma_\eta^a \bfalpha - \mu \bfw)^T(\bfy - \sigma_\eta^a \bfalpha - \mu \bfw)/(2\sigma_\epsilon^2) \\
- (\sigma_\eta^a \bfalpha - \mu \bar{\bfw})^T \Lambda(\sigma_\eta^a \bfalpha - \mu \bar{\bfw}) / (2\sigma_\eta^2)\ \} \\
\propto \exp \big\{ -\tfrac{1}{2} [\sigma_\eta^{2a} \bfalpha^TV_0^{-1}\bfalpha -2 \sigma_\eta^a \bfalpha^T \bfy/\sigma_\epsilon^2 + \mu^2 \tau(\bfw)\\
- 2 \mu \bfw^T \bfy/\sigma_\epsilon^2+ 2\mu \sigma_\eta^{a} \bfalpha^T \rho(\bfw)] \big\}
\end{multline*}
where $\tau(\bfw) = \bfw^T \bfw / \sigma_\epsilon^2 + \bar{\bfw}^T \Lambda \bar{\bfw}/ \sigma_\eta^2$ and $\rho(\bfw) = V_0^{-1} \bfw - \sigma_\eta^{-2} \Lambda \bone$. Hence $p(\mu, \bfalpha | \bfy)$ is Gaussian with precision matrix given by 
\begin{equation*}
\begin{aligned}
\begin{bmatrix} \tau(\bfw)  & \sigma_\eta^a \rho(\bfw)^T\\  \sigma_\eta^a \rho(\bfw) & \sigma_\eta^{2a}V_0^{-1} \end{bmatrix}.
\end{aligned}
\end{equation*}
Since
\begin{equation*}
p(\mu | \bfalpha, \bfy) \propto \exp ( -\tfrac{1}{2} \{\mu^2 \tau(\bfw) - 2 \mu [ \tfrac{\bfw^T \bfy}{\sigma_\epsilon^2}- \sigma_\eta^{a} \bfalpha^T \rho(\bfw)] \}),
\end{equation*}
\begin{equation*}
\mu | \bfalpha, \bfy \sim \N( \tau(\bfw)^{-1} \{\tfrac{\bfw^T \bfy}{\sigma_\epsilon^2} - \sigma_\eta^a \bfalpha^T \rho(\bfw) \},  \tau(\bfw)^{-1} ).
\end{equation*}
Since 
\begin{equation*}
\begin{aligned}
p(\bfalpha | \mu, \bfy) &\propto \exp (-\tfrac{1}{2}\{ \sigma_\eta^{2a} \bfalpha^TV_0^{-1}\bfalpha -2\bfalpha^T \sigma_\eta^a [  \bfy/\sigma_\epsilon^2 \\
& \quad -\mu  \rho(\bfw)]\}), 
\end{aligned}
\end{equation*}
\begin{equation*}
\bfalpha | \mu, \bfy \sim \N( \sigma_\eta^{-a} V_0 \{  \bfy/\sigma_\epsilon^2-\mu  \rho(\bfw)\} ,  \sigma_\eta^{-2a} V_0 ).
\end{equation*}
Let $CC^T = \sigma_\eta^{-2a} V_0$ and  $Z_s \overset{iid}{\sim} \N(0,1)$ for $s=1, 2,3$. We can write 
\begin{align*}
\bfalpha^{(i)} &= \sigma_\eta^{-a} V_0 \{  \bfy/\sigma_\epsilon^2-\mu^{(i-1)}  \rho(\bfw)\} + CZ_1, \\
\mu^{(i)} &= \frac{\sigma_\epsilon^{-2} \bfy^T \bfw - \sigma_\eta^a {\bfalpha^{(i)}}^T \rho(\bfw) }{\tau(\bfw)} +\frac{Z_2}{\sqrt{ \tau(\bfw)}}  \\
& = \tau(\bfw)^{-1} \{-  \rho(\bfw)^T V_0 \{  \bfy/\sigma_\epsilon^2 - \mu^{(i-1)}  \rho(\bfw)\}   \\
& \quad +  \bfy^T \bfw/\sigma_\epsilon^2 + \sqrt{ \tau(\bfw)}Z_2 +  \sigma_\eta^a \rho(\bfw) ^T CZ_1 \}\\
& =  \tau(\bfw)^{-1} \{ \bfy^T S^{-1} \bone + \rho(\bfw)^T V_0 \rho(\bfw) \mu^{(i-1)}\\
& \quad + \sqrt{\tau(\bfw) + \rho(\bfw)^T V_0 \rho(\bfw) }Z_3\}.
\end{align*}

For variational Bayes,
\begin{align*}
&q(\mu) \propto \exp [\E_{q(\bfalpha)}\{ \log p(\mu, \bfalpha, \bfy)\}] \\
&\propto  \exp \left[  -\tfrac{1}{2} \{\mu^2 \tau(\bfw) - 2 \mu \{ \bfw^T \bfy/\sigma_\epsilon^2 - \sigma_\eta^{a}  \rho(\bfw)^T \bfm_\bfalpha^q\}\}  \right]. \\
&q(\bfalpha) \propto \exp [\E_{q(\mu)}\{ \log p(\mu, \bfalpha, \bfy)\}] \\
&\propto \exp \left[ -\tfrac{1}{2} \{ \sigma_\eta^{2a} \bfalpha^TV_0^{-1}\bfalpha -2 \sigma_\eta^a \bfalpha^T \{\bfy/\sigma_\epsilon^2 - m_\mu^q \rho(\bfw)\} \}  \right].
\end{align*}
Hence
\begin{align*}
q(\mu) &= \N( \tau(\bfw)^{-1}\{ \bfw^T \bfy/\sigma_\epsilon^2 - \sigma_\eta^{a}  \rho(\bfw)^T \bfm_\bfalpha^q\}, \tau(\bfw)^{-1}), \\
q(\bfalpha) &= \N( \sigma_\eta^{-a}V_0  \{\bfy/\sigma_\epsilon^2 - m_\mu^q \rho(\bfw)\}, \sigma_\eta^{-2a}V_0).
\end{align*}
In the updating, 
\[
{\bfm_\bfalpha^q}^{(i)} = \sigma_\eta^{-a}V_0  \{\bfy/\sigma_\epsilon^2 - {m_\mu^q}^{(i-1)} \rho(\bfw)\} 
\]
\begin{equation*}
\begin{aligned}
{m_\mu^q}^{(i)} &= \{ \bfw^T \bfy/\sigma_\epsilon^2 - \sigma_\eta^{a}  \rho(\bfw)^T {\bfm_\bfalpha^q}^{(i)} \}/\tau(\bfw) \\
&= \{ \bfy^T S^{-1} \bone + {m_\mu^q}^{(i-1)} \rho(\bfw)^T V_0\rho(\bfw) \}/\tau(\bfw).
\end{aligned}
\end{equation*}

\section{Parameters of mixture of normals}
Table \ref{Mixture} shows the parameters of the mixture approximation for the SV and SCD models in Section 5.
\begin{table}[htb!]
\caption{Parameters of mixture of normals for SV and SCD model.}
\label{Mixture}
\begin{tabular}{@{}cccrc@{}} \hline
& $k$ & $p_k$ & $m_k$ & $s_k^2$ \\ \hline
\multirow{10}{*}{SV} & 1 & 0.00609 & 1.92677 & 0.11265 \\
& 2 & 0.04775 & 1.34744 & 0.17788 \\
& 3 & 0.13057 & 0.73504 & 0.26768 \\
& 4 & 0.20674 & 0.02266 & 0.40601 \\
& 5 & 0.22715 & $-0.85173$ & 0.62699 \\
& 6 & 0.18842 & $-1.97278$ & 0.98583 \\
& 7 & 0.12047 & $-3.46788$ & 1.57469 \\
& 8 & 0.05591 & $-5.55246$ & 2.54498 \\
& 9 & 0.01575 & $-8.68384$ & 4.16591 \\
& 10 & 0.00115 & $-14.65000$ & 7.33342 \\
\hline
\multirow{10}{*}{SCD} & 1 & 0.00397 & $-5.09000$ & 4.50000 \\
&  2 & 0.03960 & $-3.29000$ & 2.02000 \\
&  3 & 0.16800 & $-1.82000$ & 1.10000 \\
&  4 & 0.14700 & $-1.24000$ & 0.42200 \\
&  5 & 0.12500 & $-0.76400$ & 0.19800 \\
&  6 & 0.10100 & $-0.39100$ & 0.10700 \\
&  7 & 0.10400 & $-0.04310$ & 0.07780 \\
&  8 & 0.11600 & $0.30600$ & 0.07660 \\
&  9 & 0.10700 & $0.67300$ & 0.09470 \\
&  10 & 0.08800 & $1.06000$ & 0.14600 \\
\hline
\end{tabular}
\end{table}

\section{MCMC Algorithms}
The conditional posterior distribution for the latent states $\bfalpha$ is
\begin{align*}
p(\bfalpha|\bfy, \bftheta, \bfr) &\propto \exp \big\{ -\tfrac{1}{2}[ (\tilde{\bfy}  - \bfm_\bfr  - \mu \bfw - \sigma_\eta^a \bfalpha)^T D_{\bfr}^{-1} \\
& \quad \times (\tilde{\bfy}  - \bfm_\bfr - \mu \bfw  - \sigma_\eta^a \bfalpha)  \\
& \quad + \sigma_\eta^{-2}(\sigma_\eta^a\bfalpha - \mu \bar{\bfw})^T \Lambda  (\sigma_\eta^a \bfalpha -  \mu \bar{\bfw}) ] \big\} \\
&\propto \exp \big\{ -\tfrac{1}{2}(\bfalpha^T C_\bfalpha \bfalpha -2 \bfalpha^T \bfc_{\bfalpha} ) \big\}.
\end{align*}
The conditional posterior distribution of the parameter $\mu$ is
\begin{multline*}
p(\mu|\bfy, \bfalpha,  \bfr,  \sigma_\eta^2, \phi) \propto \exp \{ -\tfrac{1}{2} [ (\tilde{\bfy}  - \bfm_\bfr  - \mu \bfw - \sigma_\eta^a \bfalpha)^T  \\
\times D_{\bfr}^{-1} (\tilde{\bfy}  - \bfm_\bfr - \mu \bfw  - \sigma_\eta^a \bfalpha)   + (\mu - b_\mu)^2/B_\mu \\
+ \sigma_\eta^{-2}(\sigma_\eta^a\bfalpha - \mu \bar{\bfw})^T \Lambda (\sigma_\eta^a \bfalpha -  \mu \bar{\bfw}) ]  \} \\
\propto \exp \big\{ -\tfrac{1}{2}( C_\mu \mu^2 -2 c_\mu \mu ) \big\}.
\end{multline*}
The conditional posterior distribution of $\phi$ is
\begin{multline*}
p(\phi|\bfy,  \bfalpha, \bfr, \mu, \sigma_\eta^2)  \propto \exp[ \{\log(1-\phi^2) \\
-  \sigma_\eta^{-2}(\sigma_\eta^a\bfalpha - \mu \bar{\bfw})^T \Lambda (\sigma_\eta^a \bfalpha - \mu \bar{\bfw})\}/2 \\
\quad  \qquad  + (b_\phi-1) \log(1+\phi) + (B_\phi-1) \log(1-\phi) ].
\end{multline*}
Lastly, the conditional posterior distribution of $\sigma_\eta^2$ is 
\begin{multline*}
p(\sigma_\eta^2 | \bfy, \bfalpha, \bfr, \mu, \phi) 
\propto \exp \{ -\sigma_\eta^{2a} \bfalpha^T ( D_{\bfr}^{-1} +  \sigma_\eta^{-2}\Lambda) \bfalpha/2 \\
+ \sigma_\eta^a \bfalpha^T D_{\bfr}^{-1} (\tilde{\bfy} - \bfm_\bfr - \mu \bfw) - \mu^2 \bar{\bfw}^T \Lambda \bar{\bfw}/(2\sigma_\eta^2)\\
+ \mu \Lambda \bar{\bfw} /\sigma_\eta^2 - [n(1-a)+1] (\log \sigma_\eta^2)/2 - \sigma_\eta^2/(2B_\sigma) \}.
\end{multline*}
If $\nu = \log \sigma_\eta^2$, then $p(\nu | \bfy, \bfalpha, \bfr, \mu, \phi) \propto \exp \{f_\nu(\nu)\}$, where
\[
\begin{aligned}
f_\nu(\nu) &= A_1 \e^{a\nu} + A_2 \e^{(a-1)\nu} + A_3 \e^{a\nu/2} + A_4 \e^{(a/2-1)\nu} \\
& \quad + A_5 \e^{-\nu} + A_6 \e^\nu + A_7 \nu,
\end{aligned}
\]
$A_1 = - \bfalpha^TD_{\bfr}^{-1}\bfalpha/2$, $A_2 = - \bfalpha^T \Lambda\bfalpha/2$, $A_3 =  \bfalpha^T D_{\bfr}^{-1} (\tilde{\bfy} - \bfm_\bfr - \mu \bfw)$, $A_4 =  \mu \bfalpha^T \Lambda \bar{\bfw}$, $A_5 = -  \mu^2 \bar{\bfw}^T \Lambda \bar{\bfw}/2$, $A_6 = -1/(2B_\sigma)$ and $A_7 = -[n(1-a)-1]/2$.
We have 
\[
\begin{aligned}
f_\nu'(\nu) &= aA_1 \e^{a\nu} + (a-1)A_2 \e^{(a-1)\nu} + \tfrac{1}{2} aA_3 \e^\frac{a\nu}{2}  \\
& \quad + \tfrac{1}{2} (a-2)A_4 \e^{\frac{(a-2)\nu}{2}} - A_5 \e^{-\nu} + A_6 \e^\nu + A_7.  \\
f_\nu''(\nu) &= a^2A_1 \e^{a\nu} + (a-1)^2A_2 \e^{(a-1)\nu} + \tfrac{1}{4} a^2A_3 \e^\frac{a\nu}{2} \\
& \quad + \tfrac{1}{4}(a-2)^2A_4 \e^{\frac{(a-2)\nu}{2}}  + A_5 \e^{-\nu} + A_6 \e^\nu. 
\end{aligned}
\]

\section{Extension to multivariate observations}
Suppose the observation equation in the AR(1) plus noise model is replaced by the more general equation,
\begin{align*}
\bfy_t &= Z_t x_t + \bfeps_t,  && \bfeps_t \sim \N(\bfzero, H_t),
\end{align*}
where $\bfy_t \in \mathbb{R}^p$ is the observation vector, $\bfeps_t \in \mathbb{R}^p$ is the error vector, $H_t$ is a $p \times p$ covariance matrix and $Z_t$ is a $p \times 1$ vector assumed to be known. Let $\bfy = (\bfy_1^T, \dots, \bfy_n^T)^T$, $Z = \diag(Z_1, \dots, Z_n)$, $H= \diag(H_1, \dots, H_n)$ and 
\[
\bftheta = (\mu, \sigma_\eta^2, \phi, \vech(H_1)^T, \dots, \vech(H_n)^T)^T.
\]
In matrix notation, 
\[
\bfy|\bfalpha, \bftheta \sim \N(\sigma_\eta^a Z \bfalpha + \mu Z\bfw, H).
\]
The marginal distribution of $\bfy|\bftheta$ is $\N(\mu Z \bone, S)$, where $S= H + \sigma_\eta^2 Z \Lambda^{-1} Z^T$ and $\bfalpha| \bfy,\bftheta \sim \N(\bfm_{a\bfw}, V_a)$, where 
\begin{align*}
V_a &=  \sigma_\eta^{-2a}(Z^T H^{-1} Z + \sigma_\eta^{-2}\Lambda)^{-1}, \\
\bfm_{a\bfw} &= \sigma_\eta^{a} V_a [Z^TH^{-1} (\bfy - \mu Z\bfw)+ \mu \sigma_\eta^{-2}\Lambda \bar{\bfw}].
\end{align*}
When $a=0$ and $\bfw = \bone$, $V_0 =(Z^T H^{-1} Z + \sigma_\eta^{-2}\Lambda)^{-1}$ and $\bfm_{0\bone} = V_0 Z^T H^{-1} (\bfy - \mu Z \bone)$. Further useful identities are given in Appendix \ref{Appendix B}. It can be derived that 
\begin{multline*}
Q(\bftheta|\bftheta^{(i)}) = \tfrac{1}{2} \big[ \log(1-\phi^2) - {z^{(i)T}} H^{-1} z^{(i)} - \log|H| \\
- n(1-a) \log(\sigma_\eta^2) - \sigma_\eta^{2(a-1)} \tr(\Lambda V_a^{(i)}) \\
- \sigma_\eta^{-2} (\sigma_\eta^a \bfm_{a\bfw}^{(i)} -\mu \bar{\bfw})^T \Lambda (\sigma_\eta^a \bfm_{a\bfw}^{(i)} -\mu \bar{\bfw}) \\
- \sigma_\eta^{2a}\tr(Z^T H^{-1} Z V_{a}^{(i)}) - n(p+1)\log(2\pi) \big],
\end{multline*}
where $z^{(i)} = \bfy - \sigma_\eta^a Z \bfm_{a\bfw}^{(i)} - \mu Z \bfw$.
\begin{equation*}
\begin{aligned}
\nabla_\mu Q(\bftheta | \bftheta^{(i)}) &= {z^{(i)T}} H^{-1} Z \bfw+ (\sigma_\eta^a \bfm_{a\bfw}^{(i)} - \mu \bar{\bfw})^T \tfrac{\Lambda}{\sigma_\eta^2} \bar{\bfw}. \\
\nabla_\mu^2 Q(\bftheta | \bftheta^{(i)})
&= - (\bfw^T Z^T H^{-1} Z \bfw+ \sigma_\eta^{-2} \bar{\bfw}^T\Lambda \bar{\bfw}).
\end{aligned}
\end{equation*}
\begin{multline*}
\nabla_{\sigma_\eta^2} Q(\bftheta | \bftheta^{(i)}) 
= \tfrac{1}{2\sigma_\eta^4} \{  a\sigma_\eta^{a+2}(\bfy- \mu Z\bfw)^T H^{-1} Z \bfm_{a\bfw}^{(i)} \\
+ (1-a)\sigma_\eta^{2a} [ \tr(\Lambda V_a^{(i)}) + {\bfm_{a\bfw}^{(i)T}} \Lambda \bfm_{a\bfw}^{(i)} ] + \mu^2 \bar{\bfw}^T \Lambda 
\bar{\bfw}  \\
+ n(a-1)\sigma_\eta^2 + (a-2) \mu \sigma_\eta^a {\bfm_{a\bfw}^{(i)T}} \Lambda \bar{\bfw} - a \sigma_\eta^{2a+2}\\
\times [ \tr(Z^TH^{-1} Z V_a^{(i)})  +{\bfm_{a\bfw}^{(i)T}}Z^T H^{-1} Z \bfm_{a\bfw}^{(i)} ] \}. 
\end{multline*}
\begin{multline*}
\negthickspace\negthickspace\negthickspace \nabla^2_{\sigma_\eta^2} Q(\bftheta | \bftheta^{(i)}) = \tfrac{1}{2\sigma_\eta^4} \{ a (\tfrac{a}{2}-1)\sigma_\eta^a  (\bfy- \mu Z\bfw)^T H^{-1} Z \bfm_{a\bfw}^{(i)} \\
 - a(a-1) \sigma_\eta^{2a}[ \tr(Z^TH^{-1} Z V_a^{(i)})  + {\bfm_{a\bfw}^{(i)T}} Z^T H^{-1} Z \bfm_{a\bfw}^{(i)} ]\\
+n(1-a)- 2 \mu^2 \bar{\bfw}^T  \tfrac{\Lambda}{\sigma_\eta^2} \bar{\bfw} + (a-2) (\tfrac{a}{2}-2) \mu \sigma_\eta^a {\bfm_{a\bfw}^{(i)T}} \tfrac{\Lambda}{\sigma_\eta^2} \bar{\bfw}\\
- (a-1)(a-2) \sigma_\eta^{2a}  [ \tr(\tfrac{\Lambda}{\sigma_\eta^2} V_a^{(i)}) + {\bfm_{a\bfw}^{(i)T}} \tfrac{\Lambda}{\sigma_\eta^2} \bfm_{a\bfw}^{(i)} ] \}.
\end{multline*}

Setting $\nabla_\mu Q(\bftheta|\bftheta^{(i)}) = 0$ at $\bftheta,\bftheta^{(i)}= \bftheta^*$ yields the MLE, 
\begin{equation*}
\begin{aligned}
\mu = \frac{\bfy^T S^{-1} Z \bone}{\bone^T Z^T S^{-1} Z\bone}.
\end{aligned} 
\end{equation*}

Setting $\nabla_{\sigma_\eta^2} Q(\bftheta|\bftheta^{(i)}) = 0$ at $\bftheta,\bftheta^{(i)}= \bftheta^*$, we obtain
\begin{equation*}
\sigma_\eta^2 = \frac{(1-a) \tr(V_0  \Lambda)  + \bfm_{0\bone}^T  \Lambda \bfm_{0\bone} }{n(1-a) + a \tr(Z^T H^{-1} Z V_0) }.
\end{equation*}
Setting $a=0$ or $a=1$, we obtain 
\begin{equation*}
 \bfm_{0\bone}^T  \Lambda \bfm_{0\bone} = \sigma_\eta^2 \tr(Z^T H^{-1} Z V_0) = n \sigma_\eta^2 - \tr(V_0  \Lambda).
\end{equation*}
The elements in the augmented information matrix are  
\begin{equation*}
\begin{aligned}
I_{ \mu, \mu}  &= \bfw^T Z^T H^{-1} Z \bfw+  \sigma_\eta^{-2} \bar{\bfw}^T \Lambda \bar{\bfw}, \\
I_{\sigma_\eta^2, \sigma_\eta^2} &= \tfrac{1}{2\sigma_\eta^4} \big\{ \tfrac{a^2}{2}\bfm_{0\bfw}^T V_0^{-1} \bfm_{0\bfw}- 2a \mu \bfm_{0\bone}^T \tfrac{\Lambda}{\sigma_\eta^2} \bar{\bfw}  \\
& \quad + n(a-1)^2 \big\}.
\end{aligned}
\end{equation*}

Since $\nabla_{\bfw} I_{ \mu, \mu}  = 2[Z^T H^{-1} Z \bfw - \frac{\Lambda}{\sigma_\eta^2} \bar{\bfw} ]$ and $\nabla_{\bfw}^2 I_{ \mu, \mu}  = 2V_0^{-1}$ is positive definite, the rate of convergence of Algorithm 1 is minimized at
\begin{equation*}
\bfw^\opt = \sigma_\eta^{-2} V_0 \Lambda \bone \quad {\text or } \quad \bar{\bfw}^\opt = V_0 Z^T H^{-1} Z \bone.
\end{equation*}

For Algorithm 2, setting $\nabla_a I_{\sigma_\eta^2, \sigma_\eta^2} = 0$ and $\nabla_{\bar{\bfw}} I_{\sigma_\eta^2, \sigma_\eta^2} = 0$ yields
\begin{equation*}
a = \frac{2n + 2\mu \bar{\bfw}^T \frac{\Lambda}{\sigma_\eta^2}  \bfm_{0\bone}}{2n + \bfm_{0\bfw}^T V_0^{-1} \bfm_{0\bfw}}, \quad
\bar{\bfw} = \frac{1}{\mu}  \bigg( \frac{2V_0\Lambda}{a\sigma_\eta^2} - I \bigg)\bfm_{0\bone}.
\end{equation*}
Solving these two equations simultaneously, 
\begin{equation*} 
\begin{aligned}
 a &= 1 - \frac{ \bfm_{0\bone}^T\Lambda \bfm_{0\bone}}{n\sigma_\eta^2} = \frac{\tr(V_0\Lambda) }{n \sigma_\eta^2} = 1 - \frac{\tr(V_0 Z^T H^{-1} Z ) }{n}.
\end{aligned}
\end{equation*}
It can be verified as in Section \ref{sec unknown scale parameter} that the Hessian of $I_{\sigma_\eta^2, \sigma_\eta^2}$ is symmetric positive definite.

In the special case where $p=1$, $Z_t=1$ and $H_t =s_{r_t}^2$, as discussed in Section 6.2, then $Z = I_n$ and $H = D_\bfr$. The rate of convergence of Algorithm 1 is minimized at
\begin{equation*}
\bfw_1 = \sigma_\eta^{-2} V_0 \Lambda \bone \quad {\text or } \quad \bar{\bfw}_1 = V_0 D_\bfr^{-1} \bone,
\end{equation*}
while the rate of convergence of Algorithm 2 is minimized at 
\begin{equation*}
 a_2 = 1 - \frac{\tr(V_0 D_\bfr^{-1}) }{n}, \quad
\bar{\bfw}_2 = \frac{1}{\mu}  \bigg( \frac{2V_0\Lambda}{a_2\sigma_\eta^2} - I \bigg)\bfm_{0\bone}.
\end{equation*}

\begin{table*}[htb!]
\caption{Parameter estimates and runtimes from SV model for exchange rate data.}
 \label{tab_parex}
\begin{tabular}{@{}llcccc@{}}
\hline  & & $\mu$ & $\sigma_\eta$ & $\phi$ & runtime\\    \hline
\multirow{4}{*}{Danish} & NCP & $-$18.05 $\pm$ 0.09 & 0.377 $\pm$ 0.037 & 0.916 $\pm$ 0.015 & 260 \\
& CP    & $-$18.04 $\pm$ 0.09 & 0.372 $\pm$ 0.038 & 0.918 $\pm$ 0.016 & 265 \\
& ASIS & $-$18.04 $\pm$ 0.09 & 0.374 $\pm$ 0.037 & 0.917 $\pm$ 0.015 & 272 \\
& BSR  & $-$18.04 $\pm$ 0.09 & 0.378 $\pm$ 0.038 & 0.916 $\pm$ 0.016 & 269 \\ \hline
\multirow{4}{*}{NZ} &   NCP & $-$10.03 $\pm$ 0.10 & 0.172 $\pm$ 0.031 & 0.964 $\pm$ 0.012 & 291 \\
& CP    & $-$10.02 $\pm$ 0.10 & 0.171 $\pm$ 0.035 & 0.964 $\pm$ 0.014 & 296 \\
& ASIS & $-$10.02 $\pm$ 0.10 & 0.172 $\pm$ 0.031 & 0.964 $\pm$ 0.012 & 302 \\
& BSR  & $-$10.02 $\pm$ 0.10 & 0.175 $\pm$ 0.031 & 0.963 $\pm$ 0.012 & 296 \\ \hline
\multirow{4}{*}{US} & NCP & $-$10.18 $\pm$ 0.20 & 0.065 $\pm$ 0.011 & 0.994 $\pm$ 0.003 & 260 \\
& CP    & $-$10.14 $\pm$ 0.25 & 0.064 $\pm$ 0.011 & 0.994 $\pm$ 0.003 & 264 \\
& ASIS & $-$10.13 $\pm$ 0.24 & 0.065 $\pm$ 0.011 & 0.994 $\pm$ 0.003 & 272 \\
& BSR  & $-$10.14 $\pm$ 0.24 & 0.066 $\pm$ 0.010 & 0.993 $\pm$ 0.003 & 268 \\ \hline
\end{tabular}
\end{table*}

\section{MCMC results for exchange rate data}

Table \ref{tab_parex} shows the parameter estimates and runtimes of each MCMC sampler for the exchange rate data in Section 7.2. The Danish krone has the lowest modal instantaneous volatility ($\mu = -18$) and the highest volatility of the log-volatility ($\sigma_\eta =0.38$), while the US dollar has the highest persistence in volatility ($\phi = 0.99$). In terms of runtime, a similar pattern as in the simulated data is observed. NCP is the fastest, followed by CP, BSR and ASIS, but the runtime of all the MCMC samplers are essentially about the same.

\begin{appendix} 
\section{Identities for AR(1) plus noise model}\label{Appendix A}
\vspace*{-4mm}
\begin{align*}
V_0 &= \sigma_\eta^{2a}V_a = (\tfrac{I}{\sigma_\epsilon^2} + \tfrac{\Lambda}{\sigma_\eta^2})^{-1}, \\
\bfm_{0\bfw} &= \sigma_\eta^{a}\bfm_{a\bfw} =  V_0 (\bfy -\mu \bone)/\sigma_\epsilon^2 + \mu \bar{\bfw}, \\
\bfm_{0\bone} & = \bfm_{0\bfw} - \mu \bar{\bfw} = V_0 (\bfy - \mu \bone)/\sigma_\epsilon^2, \\
\bfm_{0\bfzero} & = \bfm_{0\bfw} + \mu \bfw  = V_0 (\bfy/\sigma_\epsilon^2 + \mu \tfrac{\Lambda}{\sigma_\eta^2} \bone), \\
\bfy - \bfm_{0\bfzero} &= V_0 \tfrac{\Lambda}{\sigma_\eta^2} (\bfy - \mu \bone) = \sigma_\epsilon^2  \tfrac{\Lambda}{\sigma_\eta^2} \bfm_{0\bone}, \\
S &= \sigma_\epsilon^2 I + \sigma_\eta^2 \Lambda^{-1}, \\
S^{-1} &= \tfrac{\Lambda V_0}{\sigma_\eta^2 \sigma_\epsilon^2}  = \tfrac{V_0 \Lambda}{\sigma_\epsilon^2 \sigma_\eta^2} = \tfrac{I}{\sigma_\epsilon^2}  -\tfrac{V_0}{\sigma_\epsilon^4}, \\
\sigma_\eta^2 \Lambda^{-1} &= \tfrac{V_0 S}{\sigma_\epsilon^2}  = \tfrac{S V_0}{\sigma_\epsilon^2} = V_0 + \tfrac{V_0 S V_0}{\sigma_\epsilon^4}, \\
\tfrac{ \Lambda }{\sigma_\eta^2} V_0 \tfrac{ \Lambda }{\sigma_\eta^2}&= \tfrac{\Lambda}{\sigma_\eta^2} V_0 (V_0^{-1} - \tfrac{I}{\sigma_\epsilon^2} ) = \tfrac{\Lambda}{\sigma_\eta^2} - S^{-1}, \\
\tfrac{\Lambda}{\sigma_\eta^2} \bfm_{0\bone} &= \tfrac{\bfy-\bfm_{0\bfzero}}{\sigma_\epsilon^2} = S^{-1}(\bfy - \mu \bone), \\
\tau(\bfw) &= \bfw^T \bfw/\sigma_\epsilon^2  + \bar{\bfw}^T \Lambda \bar{\bfw}/\sigma_\eta^2, \\
\rho(\bfw) &= V_0^{-1} \bfw - \sigma^{-2} \Lambda \bone, \\
\bone^T S^{-1} \bone &= \tau(\bfw) - \rho(\bfw)^T V_0 \rho(\bfw).
\end{align*} 

\section{Identities for multivariate observations with univariate AR(1) state equation}\label{Appendix B}
\vspace{-3mm}
\begin{align*}
V_0 &=(Z^T H^{-1} Z + \sigma_\eta^{-2}\Lambda)^{-1}, \\
\bfm_{0\bfw} &= \sigma_\eta^{a}\bfm_{a\bfw} = V_0 Z^T H^{-1} (\bfy -\mu Z \bone) + \mu \bar{\bfw}, \\
\bfm_{0\bone} & = \bfm_{0\bfw} - \mu \bar{\bfw} = V_0 Z^T H^{-1}(\bfy - \mu Z\bone) , \\
\bfm_{0\bfzero} & = \bfm_{0\bfw} + \mu \bfw  = V_0 (Z^T H^{-1}\bfy + \mu \tfrac{\Lambda}{\sigma_\eta^2} \bone), \\
S&= H + \sigma_\eta^2 Z \Lambda^{-1} Z^T, \\
S^{-1} & = H^{-1} - H^{-1} Z V_0 Z H^{-1}, \\
S^{-1} Z &= H^{-1} Z V_0 \tfrac{\Lambda}{\sigma_\eta^2}.
\end{align*}

\end{appendix}

\end{document}